\newtheorem{theorem}{Theorem}[section]
\newtheorem*{theorem*}{Theorem}
\newtheorem{corollary}[theorem]{Corollary}
\newtheorem{lemma}[theorem]{Lemma}
\newtheorem*{lemma*}{Lemma}
\theoremstyle{definition}
\newtheorem{definition}[theorem]{Definition}
\newtheorem{example}[theorem]{Example}
\newtheorem{remark}[theorem]{Remark}
\DeclareMathOperator*{\argmax}{arg\,max}
\DeclareMathOperator*{\leximin}{leximin}
\newcommand{\wefxy}{WEF$(x,y)$\xspace}
\newcommand{\wpropxy}{WPROP$(x,y)$\xspace}
\newcommand{\wpropstar}[1]{WPROP$^*{#1}$}
\newcommand{\wpropstarxy}{WPROP$^*(x,y)$\xspace}
\newcommand{\nnmms}{$1/n$-NMMS\xspace}
\newcommand{\floor}[1]{\left\lfloor #1 \right\rfloor}
\newcommand{\ceil}[1]{\left\lceil #1 \right\rceil}
\title[Weighted Fairness Notions for Indivisible Items Revisited]{Weighted Fairness Notions for Indivisible Items Revisited}
\author{Mithun Chakraborty}
\affiliation{%
  \institution{University of Michigan, USA}}
\author{Erel Segal-Halevi}
\affiliation{%
  \institution{Ariel University, Israel}}
\author{Warut Suksompong}
\affiliation{%
  \institution{National University of Singapore, Singapore}}
\begin{abstract}
We revisit the setting of fairly allocating indivisible items when agents have different weights representing their entitlements.
First, we propose a parameterized family of relaxations for weighted envy-freeness and the same for weighted proportionality; the parameters indicate whether smaller-weight or larger-weight agents are given a higher priority. 
We show that each notion in these families can always be satisfied, but any two cannot necessarily be fulfilled simultaneously.
We then introduce an intuitive weighted generalization of maximin share fairness and establish the optimal approximation of it that can be guaranteed.
Furthermore, we characterize the implication relations between the various weighted fairness notions introduced in this and prior work, and relate them to the lower and upper quota axioms from apportionment.
\end{abstract}
\begin{document}

\maketitle

% Paper body
\section{Introduction}

Research in fair division is quickly moving from the realm of theory to practical applications, such as division of various assets between individuals \citep{goldman2015spliddit}, distribution of food to charities \citep{aleksandrov2015online} and medical equipment among communities \citep{pathak2021fair}, 
and allocation of course seats among students \citep{othman2010finding}.
Two complicating factors in such applications are that items may be \emph{indivisible}, and recipients may have \emph{different entitlements}.
For example, when dividing food packs or medical supplies among organizations or districts, it is reasonable to give larger shares to recipients that represent more individuals.\footnote{
An interesting historical case of division with different entitlements occurred during the dissolution of Czechoslovakia
in 1993, where the federal assets were divided between the Czech Republic and Slovakia in a ratio of 2:1 reflecting their populations
\citep{Engelberg93}.
}
This raises the need to define appropriate fairness notions taking these factors into account.

With divisible resources and equal entitlements,
two well-established fairness benchmarks are \emph{envy-freeness}---no agent prefers the bundle of another agent, 
and \emph{proportionality}---every agent receives at least $1/n$ of her value for the set of all resources, where $n$ denotes the number of agents.
When the resources consist of indivisible items and all of them must be allocated, neither of these benchmarks can always be met, for example if one item is extremely valuable in the eyes of all agents.
This has motivated various relaxations, of which we next describe two popular approaches.
The first approach allows hypothetically removing or adding a single item before applying the fairness notion. 
In particular,
\emph{envy-freeness up to one item (EF1)} requires that if an agent envies another agent, the envy should disappear upon removing one item from the envied agent's bundle \citep{LiptonMaMo04,Budish11}.
\emph{Proportionality up to one item (PROP1)} demands that if an agent falls short of the proportionality benchmark, this should be rectified after adding one item to her bundle \citep{ConitzerFrSh17,AzizMoSa20,AzizCaIg22}. 
The second approach modifies the fairness threshold itself: instead of $1/n$ of the total value, the threshold becomes the \emph{maximin share (MMS)}, which is the maximum value that an agent can ensure herself by partitioning the items into $n$ parts and receiving the worst part \citep{Budish11}. 
The existence guarantees with respect to these relaxations are quite well-understood by now---an allocation satisfying EF1 and PROP1 always exists \citep{LiptonMaMo04,CaragiannisKuMo19}, and even though the same is not true for MMS fairness, there is always an allocation that gives every agent a constant fraction of her MMS \citep{KurokawaPrWa18,GargTa21,GhodsiHaSe21}.

Recently, there have been several attempts to extend these approximate notions to agents with different entitlements.
However, the resulting notions have been shown to exhibit a number of unintuitive and perhaps unsatisfactory features.
In the ``fairness up to one item'' approach, 
EF1 has been generalized to \emph{weighted EF1 (WEF1)} \citep{ChakrabortyIgSu20} and PROP1 to \emph{weighted PROP1 (WPROP1)} \citep{AzizMoSa20}.
Yet, even though (weighted) envy-freeness implies (weighted) proportionality and EF1 implies PROP1, \citet{ChakrabortyIgSu20} have shown that, counterintuitively, WEF1 does not imply WPROP1.
Moreover, while it is always possible to satisfy each of the two notions separately, Chakraborty et al.~have demonstrated that it may be impossible to satisfy both simultaneously. 
In the ``share-based'' approach, 
 MMS has been generalized to \emph{weighted MMS (WMMS)} \citep{FarhadiGhHa19}.
A disadvantage of WMMS is that computing its value for an agent requires knowing not only the agent's own entitlement, but also the entitlement of every other agent. 
In addition, the definition of WMMS is rather difficult to understand and explain (especially when compared to MMS), thereby making it less likely to be adopted in practice.

In this paper, we expand and deepen our understanding of weighted fair division for indivisible items by generalizing existing weighted fairness notions, introducing new ones, and exploring the relationships between them.
Along the way, we propose solutions to the aforementioned issues of existing notions, and uncover a wide range of features of the weighted setting that are not present in the unweighted case typically studied in the fair division literature.

\subsection{Our contributions}
As is commonly done in fair division, we assume throughout the paper that agents have additive, nonnegative utilities.
In \textbf{Section~\ref{sec:fairness-up-to-1}}, we focus on fairness up to one item. 
The role of envy-freeness relaxations is to provide an upper bound on the amount of envy that is allowed between agents. 
In the unweighted setting, EF1 from agent $i$ towards agent~$j$ requires $i$'s envy to be at most $i$'s highest utility for an item in $j$'s bundle.
Formally, denoting $i$'s utility function by $u_i$ and $j$'s bundle by $A_j$, the envy should be at most $\max_{g\in A_j}u_i(g)$.
In the weighted setting, however,
envy is measured by comparing the scaled utilities $u_i(A_i)/w_i$ and $u_i(A_j)/w_j$, where $w_i$ and $w_j$ denote the agents' weights.
Therefore, 
one could reasonably argue that the amount of allowed envy should be similarly scaled to be either $u_i(g)/w_j$ or $u_i(g)/w_i$;
the first scaling corresponds to (hypothetically) removing the value of $g$ from $A_j$, while the second scaling corresponds to adding the value of $g$ to $A_i$.
Clearly, the first scaling yields a smaller envy allowance if and only if $w_i < w_j$, so it favors agents with smaller weights,
while the second scaling favors those with larger weights. 
We generalize both extremes at once
by defining the allowed envy to be a \emph{weighted average} of the two quantities, i.e.,  $u_i(g)\cdot \left(x/w_j + y/w_i\right)$ for $x+y = 1$. 
We denote this envy-freeness relaxation by WEF$(x,y)$.
Similarly, we define WPROP$(x,y)$ as a relaxation of proportionality.

For a fixed $x+y$, a higher~$x$ yields a stronger guarantee (e.g., low envy allowance) for lower-entitlement agents, while a lower $x$ yields a stronger guarantee for higher-entitlement agents.
We show that WEF$(x,y)$ implies WPROP$(x,y)$ for all $x,y$.
WEF1 corresponds to WEF$(1,0)$ while WPROP1 corresponds to WPROP$(0,1)$;
this provides an explanation as well as a resolution of the counterintuitive fact that WEF1 is incompatible with WPROP1.
With equal entitlements, WEF$(x,1-x)$ reduces to EF1 and WPROP$(x,1-x)$ reduces to PROP1 for all $x\in[0,1]$. 
But with different entitlements, the conditions obtained for different values of $x$ are different and incompatible.
For each $x\in[0,1]$, we show that WEF$(x,1-x)$, and therefore WPROP$(x,1-x)$, can always be satisfied by using a \emph{picking sequence}.\footnote{See the formal definition of a picking sequence in \Cref{sec:prelim}.
An open-source implementation is provided by \citet{han2024fast}.} 
On the other hand, for all distinct $x,x'\in[0,1]$, an allocation that fulfills both WPROP$(x,1-x)$ and WPROP$(x',1-x')$ may not exist, and an analogous impossibility result holds for WEF relaxations.

We also consider the \emph{maximum weighted Nash welfare (MWNW)} rule, which chooses an allocation that maximizes the weighted product of the agents' utilities.
Whereas every maximum Nash welfare allocation satisfies EF1 in the unweighted setting \citep{CaragiannisKuMo19},
we show that in the weighted setting, for each $x\in[0,1]$, there exists an instance in which no MWNW allocation satisfies WEF$(x,1-x)$ or even WPROP$(x,1-x)$.

Next, in \textbf{Section~\ref{sec:maximin}}, we explore share-based fairness notions. 
We propose a new weighted generalization of MMS called \emph{normalized MMS (NMMS)}, which is simply the MMS of an agent scaled by the agent's weight.
Not only is NMMS intuitive, but its definition also depends only on the relative entitlement of the agent in question (that is, the entitlement of the agent divided by the sum of all agents' entitlements), rather than on the entire list of entitlements.
We show that the best approximation factor that can always be attained for NMMS is $1/n$, matching the WMMS guarantee of \citet{FarhadiGhHa19}.
As with WMMS, the $1/n$-NMMS guarantee can be achieved using the (unweighted) round-robin algorithm, with the agents taking turns in decreasing order of their weights.
Interestingly, however, we show that WEF$(1,0)$ also implies $1/n$-NMMS---this means that a weighted round-robin algorithm as well as a generalization of \citet{BarmanKrVa18}'s market-based algorithm, which are known to ensure WEF$(1,0)$, also provide the optimal (worst-case) approximation of NMMS.
In addition, we establish relations between several fairness notions, thereby extending the work of \citet{AmanatidisBiMa18} to the weighted setting.
Some of the implications (or lack thereof) that we have derived are summarized in Table \ref{tab:results}, and existential results, both from our work and from prior work, are outlined in \Cref{table:existence}.

In \textbf{Section~\ref{sec:identical}}, we consider the case of identical items, also known as \emph{apportionment} \citep{BalinskiYo01}.
In this case, if the items were divisible, it would be clear that each agent should receive exactly the \emph{quota} proportional to her weight.
Hence, two important desiderata in apportionment are that each agent should obtain at least her \emph{lower quota} (the quota rounded down), and at most her \emph{upper quota} (the quota rounded up).
We show that WEF$(x,1-x)$ guarantees lower quota if and only if $x=0$ and upper quota if and only if $x=1$, whereas WMMS, NMMS, and WPROP$(x,1-x)$ for each $x\in [0,1]$ do not guarantee either quota.
Similarly, the MWNW rule does not always respect either quota.
Nevertheless, we introduce a new allocation rule, 
called \emph{weighted egalitarian (WEG)}, which aims to maximize the leximin vector of the agents' utilities,  with a carefully chosen normalization. 
We show that, in contrast to all other rules and notions studied herein, the WEG rule satisfies both quotas.
Moreover, in \textbf{Appendix~\ref{app:WEG-binary}}, we extend the study of WEG from identical items to binary additive valuations, and prove that in this domain, a WEG allocation gives every agent both her \emph{ordinal MMS (OMMS)} and \emph{AnyPrice share (APS)}. 
This indicates that the egalitarian approach may be worthy of further study in settings with different entitlements.

Since different picking sequences studied in Section~\ref{sec:fairness-up-to-1} satisfy WEF$(x,1-x)$ for different values of $x$, in \textbf{Section~\ref{sec:experiments}}, we perform experiments in order to compare them using common benchmarks.
In particular, we investigate the percentage of allocations produced by each picking sequence that satisfy weighted envy-freeness, weighted proportionality, WMMS-fairness, and NMMS-fairness.
Interestingly, we find that even though the picking sequences with medium values of $x$ perform best with respect to weighted envy-freeness, those with high values of $x$, which favor agents with low weights, fare better as far as weighted proportionality, WMMS-fairness, and NMMS-fairness are concerned.
Our findings demonstrate that the choice of fairness notion can play a decisive role when determining which rules or allocations are ``fairer'' than others.

\subsection{Related work}
\label{sec:related}

Fairness notions for agents with different entitlements were initially developed for settings with \emph{divisible} resources.
When dividing a continuous heterogeneous resource (also known as a \emph{cake}),
weighted proportional allocations always exist. Their properties and computational complexity have been investigated in several papers
\citep{barbanel1996game,Robertson1997Extensions,Shishido1999MarkChooseCut,dall2009disputed,Brams2011DivideandConquer,segalhalevi19cake,crew2020disproportionate,cseh2020complexity,janko2022cutting}.
Weighted envy-free allocations, which are also weighted proportional under additive utilities, always exist as well
\citep{Dubins1961How}
and can be computed in finite time \citep{Robertson1998CakeCutting,zeng2000approximate}.

When dividing \emph{homogeneous divisible resources} 
among agents with additive utilities,
an allocation that is weighted envy-free and satisfies the economic efficiency notion of \emph{Pareto optimality} is guaranteed to exist and can be found efficiently \cite{reijnierse1998finding,Aziz2014Cake}.
The setting with non-additive utilities corresponding to complementary items, typical for resource allocation problems in cloud computing environments, was studied by \citet{dolev2012no} and    \citet{gutman2012fair}.

\emph{Bargaining} is the abstract problem of selecting a feasible utility vector from the set of feasible utility vectors---fair division can be viewed as a special case of this problem.
Three classic bargaining solutions can be generalized to accommodate agents with different entitlements. 
In particular, \citet{kalai1977nonsymmetric} generalized the Nash bargaining solution by introducing the MWNW rule, while \citet{thomson1994cooperative} and \citet{driesen2012asymmetric} extended the Kalai--Smorodinsky bargaining solution and the leximin solution, respectively.
In Section~\ref{sub:weg}, we show that Driesen's extension does not work well in our setting with indivisible items, and present a different weighted extension of the leximin principle.

With indivisible items, the issue of fairness becomes more challenging, since weighted envy-free and weighted proportional allocations may not exist if all items must be allocated.
\citet{aziz2015fair}
studied a setting in which the agents reveal only their ordinal rankings over the items as opposed to their complete utility functions.
They presented a polynomial-time algorithm for checking whether there exists an allocation that is \emph{possibly weighted proportional} (i.e., weighted proportional according to \emph{at least one} utility profile consistent with the rankings), or 
\emph{necessarily weighted proportional} (i.e., weighted proportional according to \emph{all} utility profiles consistent with the rankings).

\citet{FarhadiGhHa19} introduced WMMS as a generalization of MMS to the setting of different entitlements. 
They showed that the best attainable multiplicative guarantee for WMMS-fairness is $1/n$ in general, and this guarantee can be improved to $1/2$ in the special case where the value that each agent has for every item is at most the agent's WMMS.
\citet{AzizChLi19} adapted the notion of WMMS to \emph{chores} (items that yield negative utilities).
They showed that, even with two agents, it is impossible to guarantee any ratio better than $4/3$ of the WMMS.\footnote{
Note that with chores, the approximation ratios are larger than $1$, and smaller ratios are better.
}
They then presented a $3/2$-WMMS approximation algorithm for two agents, and an exact WMMS algorithm for any number of agents with binary utilities (i.e., each agent has utility $0$ or $-1$ for each item).

WMMS is a \emph{cardinal} notion in the sense that, if the cardinal utilities of an agent change, then the set of bundles that provide WMMS for the agent may change even if the agent's ranking over bundles remains the same.
\citet{BabaioffNiTa21} implicitly introduced another generalization of MMS to the setting of different entitlements, which is based only on an agent's \emph{ordinal} ranking of the bundles---following \citet{Segalhalevi19}, we call it the \emph{ordinal maximin share (OMMS)}. 
Babaioff et al.~showed that OMMS-fairness can be attained by a competitive equilibrium with different budgets, where the budgets are proportional to the entitlements; however, such an equilibrium does not always exist.
The relations between ordinal MMS notions were further examined by \citet{Segalhalevi19,segal2020competitive}.
\citet{BabaioffEzFe21} proposed another ordinal notion, stronger than OMMS, which they called the \emph{AnyPrice share (APS)}. They gave a polynomial-time algorithm that guarantees every agent at least $3/5$ of her APS.

\citet{AzizMoSa20} presented a strongly polynomial-time algorithm that computes a Pareto optimal and WPROP1 (in our terminology, WPROP$(0,1)$) allocation for agents with different entitlements

\begin{landscape}
\renewcommand{\arraystretch}{1.2}
\begin{table}
\newcommand{\smallref}[1]{{\tiny (\ref{#1})}}
\centering
\begin{tabular}{|c|c|c|}
\hline
\textbf{Fairness notion}
& \textbf{Implies} 
& \textbf{Does not imply}
\\
\hline \hline

WEF$(x,1-x)$
& 
\begin{tabular}[c]{@{}c@{}}
WPROP$(x,1-x)$ (\ref{thm:wef_wprop}) \\
WWEF1 (\ref{thm:wwef1-wefxy}) \\
When $x=1$: $1/n$-NMMS (\ref{thm:wef10_nmms})
\end{tabular}
&
\begin{tabular}[c]{@{}c@{}}
When $x' \ne x$: WPROP$(x',1-x')$ (\ref{cor:WEF-WPROP-different}) \\
Any approximation of WMMS (\ref{thm:wef_wmms}) \\
When $x\ne 1$: Any approximation of NMMS (\ref{thm:wefx1mx_nmms}) \\
WEG (\ref{thm:wef_quota}, \ref{thm:weg_quota})
\end{tabular}
\\

\hline

WPROP$(x,1-x)$
& 
& 
\begin{tabular}[c]{@{}c@{}}
Any approximation of WMMS (\ref{cor:wpropx1mx_noapprox_wmms}) \\
Any approximation of NMMS (\ref{cor:wprop_nmms}) \\
OMMS, APS (\ref{thm:wprop_quota}, \ref{thm:aps_quota}) \\
MWNW, WEG (\ref{thm:wprop_quota}, \ref{cor:mwnw_quota_2agents}, \ref{thm:weg_quota})
\end{tabular}
\\

\hline

OMMS, APS 
& 
& 
\begin{tabular}[c]{@{}c@{}}
Any approximation of WMMS, NMMS (\ref{thm:aps_mms}) \\
WEG (\ref{thm:aps_quota}, \ref{thm:weg_quota})
\end{tabular}
\\

\hline

WMMS 
& $1/n$-NMMS  (\ref{thm:wmms_nmms})
& 
\begin{tabular}[c]{@{}c@{}}
Any approximation of OMMS, APS (\ref{thm:aps_mms}) \\
WEG (\ref{thm:mms_quota}, \ref{thm:weg_quota})
\end{tabular}
\\

\hline

NMMS 
& 
& 
\begin{tabular}[c]{@{}c@{}}
Any approximation of OMMS, APS (\ref{thm:aps_mms}) \\
Any approximation of WMMS (\ref{thm:nmms_wmms}) \\
WEG (\ref{thm:mms_quota}, \ref{thm:weg_quota})
\end{tabular}
\\

\hline

MWNW 
& 
& 
\begin{tabular}[c]{@{}c@{}}
WEF$(x,1-x)$, WPROP$(x,1-x)$ (\ref{thm:mwnw_wprop}) \\
Any approximation of WMMS (\ref{thm:mwnw_wmms}) \\
Any approximation of NMMS (\ref{thm:mwnw_nmms}) \\
WEG (\ref{thm:mwnw_quota}, \ref{thm:weg_quota})
\end{tabular}
\\
\hline

WEG 
& 
With binary utilities: APS and OMMS  (\ref{cor:weg_aps})
& 
\begin{tabular}[c]{@{}c@{}}
WEF$(x,1-x)$ (\ref{rem:weg}) \\
Any approximation of WMMS, NMMS (\ref{rem:weg})
\end{tabular}
\\
\hline

OEF1
& 
\begin{tabular}[c]{@{}c@{}}
$1/n$-WMMS \cite{FarhadiGhHa19} \\
$1/n$-NMMS (\ref{thm:oef1_nmms})
\end{tabular}
&

\\
\hline
\end{tabular}
\vspace{2mm}
\caption{
\label{tab:results}
Summary of (non-)implications between weighted fairness notions.
OMMS/APS and OEF1 are defined in Sections~\ref{sec:share-definitions} and \ref{sec:OEF1}, respectively.
}
\end{table}
\end{landscape}

\renewcommand{\arraystretch}{1.2}
\begin{table}[!ht]
\centering
\begin{tabular}{| c | c |}
\hline
\textbf{Notion} & \textbf{Guaranteed existence} \\ \hline \hline 
WEF$(x,1-x)$ & Any $x\in [0,1]$ (\Cref{thm:pickseq})
  \\ \hline
WPROP$(x,1-x)$ & Any $x\in [0,1]$ (\Cref{thm:pickseq} and \Cref{thm:wef_wprop}) \\ \hline
$\alpha$-WMMS & $\alpha = 1/n$, and not for any $\alpha > 1/n$ \citep{FarhadiGhHa19} \\ \hline
$\alpha$-NMMS & $\alpha = 1/n$, and not for any $\alpha > 1/n$ (\Cref{thm:wef10_nmms}) \\ \hline
$\alpha$-OMMS & 
\begin{tabular}[c]{@{}c@{}}
$\alpha = 3/5$, and not for any $\alpha > 39/40$ 
\\ \citep{Segalhalevi19,BabaioffEzFe21,FeigeSaTa21}
\end{tabular} 
\\ \hline
$\alpha$-APS & 
\begin{tabular}[c]{@{}c@{}}
$\alpha = 3/5$, and not for any $\alpha > 39/40$ 
\\ \citep{BabaioffEzFe21,FeigeSaTa21}
\end{tabular}
 \\ \hline
\end{tabular}
\vspace{2mm}
\caption{Summary of results on the guaranteed existence of allocations satisfying (approximations of) weighted fairness notions.
}
\label{table:existence}
\end{table}

\noindent and mixed (i.e., positive and negative) utilities.
\citet{ChakrabortyIgSu20} introduced WEF1 (in our terminology, WEF$(1,0)$) and showed that this notion can be attained by a weighted version of the round-robin algorithm.
\citet{ChakrabortyScSu21} considered a larger class of picking sequences based on apportionment methods and showed that all of them satisfy a relaxation of WEF1 called \emph{weak WEF1 (WWEF1)}; they also demonstrated that these picking sequences perform well in relation to monotonicity properties.
\citet{SuksompongTe22} studied MWNW for binary (additive) valuations with respect to monotonicity and strategyproofness; the same authors later extended their results to rules maximizing other welfare notions and binary submodular valuations \citep{SuksompongTe23}.

When \emph{monetary transfers} are allowed, exact fairness is attainable. 
\citet{corradi2001adjusted} defined an allocation to be \emph{equitable} if the utility of each agent $i$ (defined as the value of items plus the money given to $i$) is $r\cdot w_i$ times $i$'s utility for the entire set of items, where $r$ is the same constant for all agents.
They presented an algorithm, generalizing the \emph{adjusted Knaster} algorithm \citep{raith2000fair}, that finds an equitable allocation with $r\geq 1$; this allocation is also weighted proportional (but not necessarily weighted envy-free).

After the conference version of our work was published, a number of authors further studied and applied our notions WEF$(x,y)$ and WPROP$(x,y)$.
\citet{AzizGaMi23} and \citet{HoeferScVa23} investigated \emph{best-of-both-worlds} fairness, where the goal is to find random allocations that are fair both before the randomization (\emph{ex-ante}) as well as after (\emph{ex-post}).
Interestingly, they showed that while ex-ante WEF is always compatible with ex-post WEF$(1,1)$, it may be incompatible with ex-post WEF$(x,y)$ whenever $x,y\le 1$ and $x+y < 2$.
\citet{MontanariScSu24} went beyond additive valuations and proposed generalizations of WEF$(x,1-x)$ that can be satisfied under submodular valuations.
\citet{WuZhZh23} examined weighted envy-freeness in chore division and proved that, as is the case with goods, a WEF$(x,1-x)$ allocation is guaranteed to exist for every $x$.

Finally, work on weighted fair allocation of indivisible items was summarized in a recent survey by \citet{Suksompong25}.

\section{Preliminaries}
\label{sec:prelim}

We consider a setting with a set $N = [n]$ of agents and a set $M=[m]$ of items, where $n\ge 2$ and $[k] := \{1,2,\dots,k\}$ for each positive integer $k$.
A (possibly empty) subset of items is called a \emph{bundle}.
The \emph{entitlement} or \emph{weight} of each agent $i\in N$ is denoted by $w_i > 0$. 
For each subset of agents $N'\subseteq N$, we denote $w_{N'} := \sum_{i\in N'}w_i$.
Even though only relative entitlements (i.e., the ratios $w_i/w_N$ for $i\in N$) matter for our paper---and we could therefore technically normalize $w_N$ to $1$---we will keep the term $w_N$ flexible, as this flexibility will sometimes be useful for our exposition.

An \emph{allocation} $A=(A_1, A_2, \dots, A_n)$ is an ordered partition of $M$ into $n$ bundles such that bundle $A_i$ is assigned to agent $i$. 
(This is sometimes called a \emph{complete allocation} by other authors, that is, all items must be allocated.)
Each agent $i\in N$ has a \emph{utility function} $u_i$, which we assume to be nonnegative and additive.
This means that for each $M'\subseteq M$, $u_i(M') = \sum_{g\in M'}u_i(\{g\})$.
For simplicity, we will sometimes write $u_i(g)$ instead of $u_i(\{g\})$ for $g\in M$. 

An allocation satisfies \emph{envy-freeness up to one item (EF1)} if for every pair of agents $i,j\in N$, there exists $B\subseteq A_j$ with $|B| \le 1$ such that\footnote{We use a set $B\subseteq A_j$ with $|B|\le 1$ instead of an item $g\in A_j$ in order to handle the case where $A_j = \emptyset$.} $u_i(A_i) \ge u_i(A_j\setminus B)$.
It satisfies \emph{proportionality up to one item (PROP1)} if for every agent $i\in N$, there exists $B\subseteq M\setminus A_i$ such that $u_i(A) \ge u_i(M)/n - u_i(B)$.

A \emph{picking sequence} is a sequence $(p_1,\dots,p_m)$ where $p_j\in N$ for each $j$. 
The \emph{output} of a picking sequence is the allocation resulting from the process in which, 
in the $j$-th turn, agent $p_j$ picks her favorite item from the remaining items, breaking ties in a consistent manner (for example, in favor of lower-numbered items).

\section{Fairness Up to One Item}
\label{sec:fairness-up-to-1}

In this section, we define the envy-freeness and proportionality relaxations WEF$(x,y)$ and WPROP$(x,y)$, and study them particularly for the case where $x+y = 1$.

\subsection{Weighted envy-freeness notions}
An allocation is called \emph{weighted envy-free (WEF)} if for every pair $i,j\in N$,
\begin{align*}
\frac{u_i(A_i)}{w_i} \geq \frac{u_i(A_j)}{w_j}.
\end{align*}
We define a continuum of weighted envy-freeness relaxations parameterized by two nonnegative real values, $x$ and $y$. 
\begin{definition}[\wefxy]
For $x,y\in [0,1]$, an allocation $(A_1,\dots,A_n)$ satisfies \emph{WEF$(x,y)$} if for every pair $i,j\in N$, there exists $B\subseteq A_j$ with $|B| \le 1$ such that
\begin{align*}
&
\frac{u_i(A_i) + y\cdot u_i(B)}{w_i} \ge \frac{u_i(A_j) - x\cdot u_i(B)}{w_j},
\end{align*}
or equivalently,
\begin{align*}
\frac{u_i(A_j)}{w_j} - \frac{u_i(A_i)}{w_i} 
\leq 
\bigg(
\frac{y}{w_i} + \frac{x}{w_j}
\bigg)\cdot
u_i(B).
\end{align*}
The difference on the left-hand side, when non-negative, corresponds to the \emph{weighted envy} of $i$ towards $j$. 
The definition requires this quantity to be upper-bounded by the value of at most one item in $j$'s bundle, 
scaled by a weighted average of $1/w_i$ and $1/w_j$.
\end{definition}

\wefxy generalizes and interpolates between several previously studied notions. 
In particular, 
WEF$(0,0)$ is the same as weighted envy-freeness, 
WEF$(1,0)$ is equivalent to the notion WEF1 proposed by \citet{ChakrabortyIgSu20}, 
and WEF$(1,1)$ corresponds to what these authors called ``transfer weighted envy-freeness up to one item''.
(This is also why we chose to allow any $x,y\in [0,1]$ rather than impose the condition $x+y \le 1$.)
Chakraborty et al.~defined the \emph{weak WEF1 (WWEF1)} condition as follows:
for each pair $i,j$ of agents, $i$ does not envy $j$ after either (i) at most one item is removed from $A_j$, or (ii) at most one item from $A_j$ is copied over to $A_i$.
(Note that (i) corresponds to the WEF$(1,0)$ condition and (ii) to the WEF$(0,1)$ condition.)
The following two theorems show that, for every $x\in[0,1]$, WWEF1 is a strictly weaker condition than WEF$(x,1-x)$.

\begin{theorem}
\label{thm:wwef1-wefxy}
WWEF1 is equivalent to the following condition:
for every pair of agents $i,j$,
there exists some $x_{i,j} \in [0,1]$ such that the condition for WEF$(x_{i,j},1-x_{i,j})$ is satisfied for that pair.
\end{theorem}
\begin{proof}
One direction is obvious: in a WWEF1 allocation, by definition, for every pair $i,j$, the allocation is either WEF$(1,0)$ or WEF$(0,1)$, so it satisfies the condition for some $x_{i,j}\in\{0,1\}$.

For the other direction, recall that WEF$(x_{i,j},1-x_{i,j})$ means that, for some $B\subseteq A_j$ with $|B|\leq 1$,
\begin{align*}
\frac{u_i(A_j)}{w_j} - \frac{u_i(A_i)}{w_i} 
&\leq 
\bigg(
\frac{1-x_{i,j}}{w_i} 
+
\frac{x_{i,j}}{w_j}
\bigg)\cdot
u_i(B)
\\
&
=
\bigg(
\frac{1}{w_i} 
+
x_{i,j}\cdot
\left(
\frac{1}{w_j}
-
\frac{1}{w_i} 
\right)
\bigg)\cdot 
u_i(B).
\end{align*}
If $w_i>w_j$, then the same inequality holds if we replace $x_{i,j}$ with $1$;
otherwise, the same inequality holds if we replace $x_{i,j}$ with $0$.
Indeed, this implication follows from analyzing the sign of $1/w_j - 1/w_i$.
\end{proof}

Theorem \ref{thm:wwef1-wefxy} shows that, for every $x\in[0,1]$, WEF$(x,1-x)$ implies WWEF1: the former fixes the same $x$ for all pairs of agents, while the latter allows a different $x$ for each pair of agents.
The next theorem shows that the implication is strict.

\begin{theorem}
\label{thm:WWEF1-WEFxy}
For each $x\in[0,1]$, WWEF1 does not imply WEF$(x,1-x)$.
\end{theorem}

Theorem~\ref{thm:WWEF1-WEFxy} will later be strengthened by Corollary~\ref{cor:WWEF1-WPROP}, so we do not present its proof here.

As discussed, WEF$(x,y)$ requires the weighted envy of $i$ towards~$j$, i.e., $\max\left\{0,\frac{u_i(A_j)}{w_j} - \frac{u_i(A_i)}{w_i}\right\}$, to be at most $\left(\frac{y}{w_i}+\frac{x}{w_j}\right)\cdot u_i(B)$.
It is evident that, 
with equal entitlements, this condition
depends only on the sum $x+y$; in particular, whenever $x+y=1$, WEF$(x,y)$ is equivalent to EF1.
However, with different entitlements, every selection of $x,y$, even with $x+y=1$, leads to a different condition: a higher $x$ yields a stronger guarantee (i.e., lower allowed weighted envy) for agent~$i$ when $w_i < w_j$, while a higher $y$ yields a stronger guarantee for the agent when $w_i > w_j$.
This raises two natural questions.
First, for which pairs $(x,y)$ can WEF$(x,y)$ be guaranteed?
Second, is it possible to guarantee WEF$(x,y)$ for multiple pairs $(x,y)$ simultaneously?

For the first question, whenever $x+y < 1$, a standard example of two agents with equal weights and one valuable item shows that \wefxy cannot always be satisfied.
We will show next that \wefxy{} can always be satisfied when $x+y=1$, thereby implying existence for $x+y > 1$ as well. 
To this end, we characterize picking sequences whose output is guaranteed to satisfy WEF$(x,1-x)$; this generalizes the WEF$(1,0)$ characterization of \citet{ChakrabortyScSu21}.
For brevity, we say that a picking sequence satisfies a fairness notion if its output always satisfies that notion.
The proof of this theorem is similar to that of \citet{ChakrabortyScSu21} and therefore deferred to Appendix~\ref{app:omitted}.

\begin{theorem}
\label{thm:picking-WEF}
Let $x\in[0,1]$.
A picking sequence $\pi$ satisfies WEF$(x,1-x)$ if and only if for every prefix $P$ of $\pi$ and every pair of agents $i,j$, where agent $i$ has $t_i$ picks in $P$
and agent~$j$ has $t_j$ picks in $P$,
we have 
$t_i+(1-x)\ge \frac{w_i}{w_j}\cdot(t_j-x)$.
\end{theorem}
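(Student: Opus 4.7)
The plan is to prove both directions separately; the forward (sufficiency) direction is the substantive one and I would handle it via a fractional bipartite matching argument, while the converse is a single-valuation construction.

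For sufficiency, fix agents $i,j$ with $A_j$ nonempty (otherwise the WEF condition is trivial with $B = \emptyset$). Let $g_1,\ldots,g_{T_j}$ denote $j$'s picks in chronological order and $h_1,\ldots,h_{T_i}$ denote $i$'s picks in order, and let $s_k$ be the number of $i$-picks completed at the moment just after $j$'s $k$-th pick. The key observation, used throughout, is that if $\ell \le s_k$ then $u_i(h_\ell) \ge u_i(g_k)$, because when $i$ chose $h_\ell$ the item $g_k$ was still in the pool and $i$ picks her $u_i$-favorite remaining item. Choose $B = \{g^*\}$ with $g^* \in \argmax_{g \in A_j} u_i(g)$, and write $v^* := u_i(g^*)$. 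After multiplying through by $w_i w_j$, the WEF$(x,1-x)$ inequality reduces to
\[ w_i\, u_i(A_j) - w_j\, u_i(A_i) \;\le\; \Gamma \cdot v^*, \qquad \text{where } \Gamma := (1-x)w_j + x\, w_i. \]
I would prove this by showing feasibility of the following fractional transportation instance: each source $g_k$ must ship $w_i$ units of flow; each sink $h_\ell$ has capacity $w_j$ and accepts flow from $g_k$ only when $\ell \le s_k$; and a single ``slack'' sink of capacity $\Gamma$ accepts flow from every source. If feasible, replacing each unit of flow by its underlying $u_i$-value (a $(k,\ell)$-edge contributes at value $u_i(h_\ell) \ge u_i(g_k)$ and slack flow at value $v^* \ge u_i(g_k)$) yields exactly the displayed inequality. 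By Hall's theorem, feasibility is equivalent to requiring, for every $K \subseteq \{1,\ldots,T_j\}$, that $|K|\, w_i \le w_j \max_{k \in K} s_k + \Gamma$. Because $s_k$ is nondecreasing in $k$, the binding cuts are initial segments $K = \{1,\ldots,k\}$, reducing Hall's condition to $k\, w_i \le w_j\, s_k + \Gamma$ for every $k \in \{1,\ldots,T_j\}$; after rearrangement this is precisely the hypothesized prefix inequality $s_k + (1-x) \ge (w_i/w_j)(k - x)$ applied to the prefix ending just after $j$'s $k$-th pick.

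For necessity, suppose the prefix inequality fails at some prefix $P$ for agents $i,j$ with counts $t_i, t_j$ (note this forces $t_j \ge 1$). I would construct one valuation profile in which every agent shares the same additive utility assigning value $1$ to each of the $|P|$ items picked during $P$ and value $0$ to all other items, with a consistent tie-breaking rule. Running $\pi$, the $|P|$ value-$1$ items are exactly those chosen during $P$, so $i$ receives $t_i$ value-$1$ items, $j$ receives $t_j$ value-$1$ items, and $u_i(g^*) = 1$ for any $g^*$ maximizing $u_i$ on $A_j$. Then WEF$(x,1-x)$ would require $w_i t_j - w_j t_i \le (1-x) w_j + x w_i$, but the assumed failure of the prefix inequality rearranges exactly to the reverse strict inequality, giving the contradiction. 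The main obstacle is the calibration in the sufficiency argument: matching Hall's feasibility to the prefix inequality requires both (i) exploiting monotonicity of $s_k$ to reduce arbitrary cuts to initial-segment cuts, and (ii) choosing the slack capacity $\Gamma = (1-x)w_j + xw_i$ so that it exactly mirrors the envy tolerance on the right-hand side of WEF$(x,1-x)$. This calibration is precisely what pins down the coefficients $(1-x)$ and $x$ in the prefix inequality and what makes the $y = 1-x$ specialization go through cleanly.
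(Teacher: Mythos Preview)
Your proposal is correct, and the necessity direction matches the paper's argument essentially verbatim (identical valuations with the first $|P|$ items valued at~$1$). Your phrasing ``the $|P|$ items picked during $P$'' is slightly circular---the run depends on the valuations---but the fix is trivial: fix any $|P|$ items to have value~$1$, and observe that greedy picking exhausts them in the first $|P|$ turns.

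For sufficiency, your approach is genuinely different from the paper's. The paper proceeds by a direct phase-based induction: it partitions the sequence into phases ending at each of $j$'s picks, defines $\tau_\ell$ (the count of $i$'s picks in phase~$\ell$), $\alpha_\ell$ (agent~$i$'s utility gain in phase~$\ell$), and $\beta_\ell$ (agent~$i$'s utility for $j$'s phase-$\ell$ item), and then proves by induction on~$s$ an invariant of the form
\[
(1-x)\cdot\max_r \beta_r + \sum_{\ell\le s}\alpha_\ell \;\ge\; \rho\Big(\sum_{\ell\le s}\beta_\ell - x\beta_1\Big) + \Big((1-x)+\sum_{\ell\le s}\tau_\ell - \rho(s-x)\Big)\max_{r\ge s}\beta_r,
\]
which at $s=t_j$ collapses to the WEF$(x,1-x)$ inequality. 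By contrast, you recast the target inequality $w_i\,u_i(A_j)\le w_j\,u_i(A_i)+\Gamma v^*$ as feasibility of a fractional transportation instance and invoke the Hall/max-flow--min-cut condition, then observe that monotonicity of $s_k$ reduces all cuts to initial segments, which is precisely the prefix hypothesis. What your route buys is a clean structural explanation of \emph{why} the prefix inequalities are exactly the right hypothesis: they are literally the Hall conditions for the natural flow in which each of $j$'s items must be ``covered'' by $w_i$ units drawn from earlier $i$-picks or the $\Gamma$-sized slack. What the paper's route buys is complete self-containment---no appeal to an external feasibility theorem---at the cost of a somewhat opaque induction invariant. Both arguments crucially exploit the same two facts: the greedy ordering inequality $u_i(h_\ell)\ge u_i(g_k)$ when $h_\ell$ precedes $g_k$, and the monotonicity of the pick-count sequence.
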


We can now prove that a WEF$(x,1-x)$ allocation exists in every instance.

\begin{theorem}
\label{thm:pickseq}
Let $x\in [0,1]$.
Consider a picking sequence $\pi$ such that in each turn, the pick is assigned to an agent~$i$ with the smallest $\frac{t_i+(1-x)}{w_i}$, where $t_i$ is the number of times agent $i$ has picked so far.
Then, $\pi$ satisfies WEF$(x,1-x)$.
\end{theorem}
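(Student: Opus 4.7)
The plan is to invoke the characterization of Theorem~\ref{thm:picking-WEF} and show that the greedy rule $\pi$ satisfies the required inequality for every prefix $P$ and every pair of agents $i,j$. Writing $t_i$ and $t_j$ for the numbers of picks by agents $i$ and $j$ in $P$, I would verify
\[
\frac{t_i+(1-x)}{w_i} \ge \frac{t_j-x}{w_j},
\]
which is equivalent to the characterization condition $t_i+(1-x)\ge \frac{w_i}{w_j}(t_j-x)$ after multiplication by $w_i$.

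First I would dispose of the easy case $t_j = 0$: the right-hand side is $-x/w_j \le 0$, whereas the left-hand side is nonnegative because $x\le 1$ and $t_i\ge 0$. The nontrivial case is $t_j\ge 1$, which I would handle by identifying the critical turn $\tau$ at which agent $j$ makes her $t_j$-th pick (i.e., her last pick within $P$). Just before $\tau$, agent $j$ has already picked $t_j-1$ items, so her priority at that moment equals $\frac{(t_j-1)+(1-x)}{w_j}=\frac{t_j-x}{w_j}$.

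By the greedy selection rule, no other agent has a strictly smaller priority at $\tau$. In particular, if agent $i$ has picked $t_i'$ times before $\tau$, then
\[
\frac{t_j-x}{w_j} \le \frac{t_i'+(1-x)}{w_i}.
\]
Since $\tau$ occurs no later than the end of $P$, monotonicity of the pick counts in time gives $t_i' \le t_i$, and hence $\frac{t_i'+(1-x)}{w_i}\le \frac{t_i+(1-x)}{w_i}$. Chaining these two inequalities yields the desired bound. Ties do not cause any issue because the greedy inequality is weak, and the case $i=j$ trivially reduces to $1\ge 0$.

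The main obstacle is conceptual rather than computational: one has to recognize that the right moment to exploit the greedy rule is precisely agent $j$'s \emph{last} pick within the prefix, so that monotonicity of $t_i'$ in time transports the priority inequality from that instant to the end of $P$. Once this observation is in place, the argument reduces to a two-step chain combining the greedy inequality with monotonicity, and the rest is bookkeeping.
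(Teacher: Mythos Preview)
Your proof is correct and follows essentially the same approach as the paper: both invoke Theorem~\ref{thm:picking-WEF} and verify the prefix condition by looking at the moment of agent~$j$'s last pick, where the greedy rule yields $\frac{(t_j-1)+(1-x)}{w_j}\le \frac{t_i+(1-x)}{w_i}$, i.e., $t_i+(1-x)\ge\frac{w_i}{w_j}(t_j-x)$. Your version is slightly more explicit in isolating the trivial case $t_j=0$ and in spelling out the monotonicity step $t_i'\le t_i$ that transports the inequality from turn~$\tau$ to the end of~$P$, whereas the paper compresses this by noting that it suffices to check the condition right after each of $j$'s picks; but the underlying argument is the same.
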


\begin{proof}
Consider any pair of agents $i,j$ who have picked $t_i$ and $t_j$ items, respectively.
Suppose first that $t_j\geq 1$. Since $j$ was allowed to pick an item when $j$ had $t_j-1$ picks, it must be that $\frac{(t_j-1)+(1-x)}{w_j} \le \frac{t_i+(1-x)}{w_i}$, which is equivalent to  $t_i+(1-x) \ge \frac{w_i}{w_j}\cdot(t_j-x)$.
The same condition obviously holds if $t_j=0$.
Hence, by Theorem~\ref{thm:picking-WEF}, the allocation is WEF$(x,1-x)$.
\end{proof}

In particular, the picking sequence in \Cref{thm:pickseq} with $x = 1/2$, which reduces to Webster's apportionment method when the items are identical \citep[p.~100]{BalinskiYo01},
satisfies WEF$(1/2,1/2)$. 
This strengthens a result of \citet{ChakrabortyScSu21} that this picking sequence guarantees WWEF1.

While every WEF$(x,1-x)$ notion can be satisfied on its own, it is impossible to guarantee WEF$(x,1-x)$ for two different values of $x$ simultaneously.
We prove this strong incompatibility result even for the weaker notion of WPROP$(x,y)$, which we define next.

\subsection{Weighted proportionality notions}
An allocation is called \emph{weighted proportional (WPROP)} if for every pair $i,j\in N$,
\begin{align*}
\frac{u_i(A_i)}{w_i} \ge \frac{u_i(M)}{w_N}.
\end{align*}

Similarly to WEF$(x,y)$, we define a continuum of weighted proportionality relaxations.
\begin{definition}[\wpropxy]
\label{def:wpropxy}
For $x,y\in [0,1]$, an allocation $(A_1,\dots,A_n)$ satisfies \emph{WPROP$(x,y)$} if for every $i\in N$, there exists $B\subseteq M\setminus A_i$ with $|B| \le 1$ such that 
\begin{align*}
\frac{u_i(A_i) + y\cdot u_i(B)}{w_i}
& \ge 
 \frac{u_i(M) - n\cdot x\cdot u_i(B)}{w_N},
\end{align*}
or equivalently,
\begin{align*}
\frac{u_i(M)}{w_N}  - \frac{u_i(A_i)}{w_i}
\leq  \left(\frac{n\cdot x}{w_N} 
+ \frac{y}{w_i}\right) \cdot u_i(B).
\end{align*}
\end{definition}
The expression on the left-hand side, when non-negative, is the amount by which agent $i$'s bundle falls short of her weighted proportionality threshold. 
The definition requires this quantity to be upper-bounded by the value of at most one item not in $i$'s bundle, 
scaled by a weighted average of $n/w_N$ and $1/w_i$.
The factor $n$  can be thought of as a normalization factor, since $B$ is removed from the entire set of items $M$, which is distributed among $n$ agents, rather than from another agent's bundle as in the definition of \wefxy.
This normalization ensures that, similarly to the \wefxy family, when all entitlements are equal, \wpropxy reduces to PROP1 whenever $x+y = 1$.

WPROP$(0,0)$ is the same as weighted proportionality, while WPROP$(0,1)$ is equivalent to the notion WPROP1 put forward by \citet{AzizMoSa20}.
With different entitlements and $x+y = 1$, 
a higher~$x$ yields a stronger guarantee for agent~$i$ (i.e., the agent's utility cannot be far below her proportional share) when $w_i < w_N/n$, while a higher $y$ yields a stronger guarantee for the agent when $w_i > w_N/n$
(note that the quantity $w_N/n$ is the average weight of the $n$ agents).
This raises the same two questions that we posed for WEF$(x,y)$: 
For which pairs $(x,y)$ can WPROP$(x,y)$ always be attained? 
And is it possible to attain it for different pairs $(x,y)$ simultaneously?

To answer these questions, we first define a stronger version of \wpropxy which we call \wpropstarxy. 

\begin{definition}[\wpropstarxy]\label{def:strongwpropxy}
For $x,y\in [0,1]$, an allocation $(A_1,\dots,A_n)$
satisfies \wpropstarxy if for each $i\in N$, the following holds:
There exists $B\subseteq M\setminus A_i$ with $|B|\le 1$ and, for every $j\in N\setminus \{i\}$, there exists $B_j\subseteq A_j$ with $|B_j| \le 1$ such that 
\begin{align*}
\frac{u_i(A_i) + y\cdot u_i(B)}{w_i} \ge 
 \frac{u_i(M) -  x\cdot \sum_{j \in N \setminus \{i\}} u_i(B_j)}{w_N}.
\end{align*}
\end{definition}

Since $\sum_{j \in N \setminus \{i\}}u_i(B_j) \le (n-1)\cdot u_i(B) \le  n\cdot u_i(B)$,
where we take $B$ to be the singleton set containing $i$'s most valuable item in $M\setminus A_i$ if $A_i\neq M$ and the empty set otherwise,
\wpropstarxy is a strengthening of \wpropxy.
Nevertheless, it is less intuitive than \wpropxy. 
As such, we do not propose \wpropstarxy as a major fairness desideratum in its own right, but it will turn out to be a useful concept for establishing some of our results.

We first show that for all $x$ and $y$, WEF$(x,y)$ implies \wpropstarxy, which in turn implies \wpropxy; this generalizes the fact that weighted envy-freeness implies weighted proportionality, which corresponds to taking $x=y=0$.
Note that while we generally think of $x$ and $y$ as being constants not depending on $n$ or $w_1,\dots,w_n$, in the following two implications we will derive more refined bounds in which $x$ and $y$ may depend on these parameters.

\begin{lemma}
\label{lem:WEF-implies-strongWPROP}
For all real numbers $x,y\in[0,1]$ and $y' \ge (1-\frac{\min_{k\in N} w_k}{w_N})y$, 
\wefxy implies \wpropstar{(x,y')}.

In particular, WEF$(x,y)$ implies \wpropstarxy.
\end{lemma}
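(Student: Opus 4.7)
The plan is to fix an agent $i \in N$ and invoke \wefxy separately for each $j \in N \setminus \{i\}$, producing witnesses $B_{ij} \subseteq A_j$ with $|B_{ij}| \le 1$, and then to aggregate the $n-1$ resulting inequalities into a single bound of the form required by \wpropstar{(x,y')}. For each pair $(i,j)$, I clear denominators in the \wefxy inequality by multiplying through by $w_i w_j$ to obtain
\begin{align*}
w_j \cdot u_i(A_i) + y \cdot w_j \cdot u_i(B_{ij}) \ge w_i \cdot u_i(A_j) - x \cdot w_i \cdot u_i(B_{ij}).
\end{align*}
Summing over $j \in N \setminus \{i\}$ and using $\sum_{j \neq i} w_j = w_N - w_i$ together with the additivity identity $\sum_{j \neq i} u_i(A_j) = u_i(M) - u_i(A_i)$, the $(w_N - w_i) u_i(A_i)$ and $w_i u_i(A_i)$ terms combine cleanly into the aggregated inequality
\begin{align*}
w_N \cdot u_i(A_i) + y \sum_{j \neq i} w_j \cdot u_i(B_{ij}) + x \cdot w_i \sum_{j \neq i} u_i(B_{ij}) \ge w_i \cdot u_i(M).
\end{align*}
The decision to weight the $(i,j)$-th inequality by $w_j$ (rather than summing the original inequalities as-is) is the one substantive choice in the proof: it is precisely what lets the $u_i(A_j)$ contributions reassemble into the proportional-share term $\frac{w_i}{w_N} u_i(M)$.

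Next, I set $B_j := B_{ij}$ for every $j \neq i$, so the $x$-sum on the left-hand side above is already the sum appearing in \Cref{def:strongwpropxy}. For the common witness, I take $B$ to be the singleton containing $i$'s most valuable item in $M \setminus A_i$, or $B = \emptyset$ if $A_i = M$. Since each $B_{ij}$ is a subset of $A_j \subseteq M \setminus A_i$ of size at most one, $u_i(B_{ij}) \le u_i(B)$ for every $j \neq i$. Dividing the aggregated inequality by $w_N w_i$, moving the $x$-sum to the right, and bounding the remaining $y$-sum using $u_i(B_{ij}) \le u_i(B)$ yields
\begin{align*}
\frac{u_i(A_i)}{w_i} + \frac{y}{w_i}\left(1 - \frac{w_i}{w_N}\right) u_i(B) \ge \frac{u_i(M) - x \sum_{j \neq i} u_i(B_j)}{w_N}.
\end{align*}
Since $w_i \ge \min_{k \in N} w_k$, the coefficient $y(1 - w_i/w_N)$ is at most $y(1 - \min_k w_k / w_N) \le y'$, which establishes \wpropstar{(x,y')}. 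The ``in particular'' clause is then immediate because $1 - \min_k w_k / w_N \le 1$, so $y' = y$ is always admissible.

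Beyond the clean algebra, there is really no technical obstacle here: no existence argument or algorithm is required, and the trivial edge cases ($A_i = M$ or $A_j = \emptyset$) collapse because the relevant witnesses are empty. The only place where the proof could go wrong is the initial weighting step---without the factor $w_j$ when summing, one obtains a sum $\sum_{j} u_i(A_j)$ but not the weighted share on the right, and one loses the refinement from $y' = y$ to $y' = y(1 - \min_k w_k / w_N)$, since it is exactly the asymmetry between the factor $w_N - w_i$ (on the $B$ side) and $w_N$ (on the proportional-share side) that supplies this slack.
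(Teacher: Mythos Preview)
Your proof is correct and takes essentially the same approach as the paper: both fix $i$, clear denominators in the \wefxy inequality (equivalently, multiply the $(i,j)$ inequality by $w_j$), sum over $j\neq i$, add the $w_i\cdot u_i(A_i)$ term to reconstitute $w_N\cdot u_i(A_i)$ and $w_i\cdot u_i(M)$, and then bound the $y$-weighted sum $\sum_{j\neq i} w_j\,u_i(B_j)$ by $(w_N-w_i)\,u_i(B)$ with $B$ the best item outside $A_i$. Your discussion of why the $w_j$ weighting is the crucial choice matches the paper's derivation exactly.
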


\begin{proof}
Consider a WEF$(x,y)$ allocation $A$, and fix an arbitrary agent $i\in N$.
If $A_i = M$, the \wpropstarxy condition is trivially satisfied for $i$, so assume that $A_i\subsetneq M$.
For every $j\in N \setminus \{i\}$, the \wefxy condition implies that for some $B_j\subseteq A_j$ of size at most $1$, we have
\begin{align*}
w_j\cdot u_i(A_i) 
&\ge w_i\cdot u_i(A_j) - w_i x \cdot u_i(B_j) - w_j y\cdot u_i(B_j).
\end{align*}
Let $B$ be a singleton set containing $i$'s most valuable item in $M\setminus A_i$.
Summing this inequality over all $j\in N\setminus\{i\}$ and adding $w_i\cdot u_i(A_i)$ to both sides, 
we get

\begin{align*}
w_N\cdot u_i(A_i) 
&\ge  w_i\cdot\sum_{j\in N} u_i(A_j) - w_i x\cdot \sum_{j\in N\setminus\{i\}} u_i(B_j)  -y \cdot\sum_{j\in N\setminus\{i\}}w_ju_i(B_j) 
\\
&\ge
w_i  u_i(M) - w_i x \cdot\sum_{j\in N\setminus\{i\}}  u_i(B_j) -y \cdot\sum_{j\in N\setminus\{i\}}w_ju_i(B) 
\\
&=
w_i  u_i(M) - w_i x \cdot\sum_{j\in N\setminus\{i\}}  u_i(B_j)  -  u_i(B)\cdot (w_N - w_i)y.
\end{align*}
Now, 
\[(w_N -  w_i) y \le (w_N -  \min_{k\in N}w_k) y \le w_N y'.\]
Combining this inequality with the inequality before it and rearranging yields
\begin{align*}
w_N\cdot u_i(A_i) + u_i(B)\cdot w_N y' 
\geq
w_i  u_i(M) - w_i x \cdot\sum_{j\in N\setminus\{i\}}  u_i(B_j),
\end{align*}
or equivalently,
\begin{align*}
u_i(A_i) + y' \cdot u_i(B) \ge \frac{w_i}{w_N}\left( u_i(M) - x\cdot \sum_{j\in N\setminus\{i\}}  u_i(B_j) \right).
\end{align*}
Hence, the WPROP$^*(x,y')$ condition is satisfied for agent~$i$.

The second part of the lemma follows from the fact that $y \ge (1-\frac{\min_{k\in N} w_k}{w_N}) y $.
\end{proof}

\begin{lemma}
\label{lem:strongwprop_wprop}
For all real numbers $x,y\in[0,1]$ and $x' \ge \left(1-\frac{1}{n}\right)x $, \wpropstarxy implies WPROP$(x',y)$.

In particular, \wpropstarxy implies \wpropxy.
\end{lemma}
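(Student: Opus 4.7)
The plan is to start from a WPROP$^*(x,y)$ allocation with the witnesses $B$ (a singleton in $M\setminus A_i$) and $B_j \subseteq A_j$ for each $j \ne i$, and to exhibit a witness $B^*$ for WPROP$(x',y)$. The natural choice is $B^* \subseteq M\setminus A_i$ taken to be the singleton containing the most valuable item (to agent $i$) in $M\setminus A_i$, or the empty set if $A_i = M$.

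The key observation is that, because $A_1,\dots,A_n$ is a partition, each $B_j \subseteq A_j$ is also contained in $M\setminus A_i$, and since $|B_j|\le 1$, we get $u_i(B_j) \le u_i(B^*)$ for every $j\in N\setminus\{i\}$. Likewise $u_i(B)\le u_i(B^*)$. From these facts, I would upper-bound the sum on the right-hand side of the WPROP$^*$ inequality by $\sum_{j\in N\setminus\{i\}} u_i(B_j) \le (n-1)\cdot u_i(B^*)$, and lower-bound the left-hand side using $u_i(B^*) \ge u_i(B)$. Combining these with the WPROP$^*(x,y)$ inequality yields
\begin{align*}
\frac{u_i(A_i) + y\cdot u_i(B^*)}{w_i}
\;\ge\; \frac{u_i(M) - (n-1)x\cdot u_i(B^*)}{w_N}.
\end{align*}
The hypothesis $x' \ge (1-1/n)x$ is exactly equivalent to $(n-1)x \le n\,x'$, which turns the right-hand side into $(u_i(M) - n\,x'\cdot u_i(B^*))/w_N$, matching the WPROP$(x',y)$ condition.

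The only subtlety worth checking is the boundary case $A_i = M$, where $M\setminus A_i = \emptyset$ forces $B = B^* = \emptyset$ and all $B_j = \emptyset$; the inequality then reduces to $u_i(M)/w_i \ge u_i(M)/w_N$, which holds because $w_i \le w_N$. The ``in particular'' clause then follows by taking $x' = x$, since $x \ge (1-1/n)x$ trivially. There is no real obstacle here; the argument is essentially just a careful bookkeeping step, and the main thing to get right is verifying that each $B_j$ indeed lies in $M\setminus A_i$ so that the single replacement witness $B^*$ dominates all of them.
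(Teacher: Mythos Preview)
Your proposal is correct and follows essentially the same approach as the paper: choose as the single witness the most valuable item for $i$ in $M\setminus A_i$, bound $\sum_{j\ne i} u_i(B_j)$ by $(n-1)\cdot u_i(B^*)$, and then use $(n-1)x \le n x'$ to reach the WPROP$(x',y)$ condition. Your write-up is in fact slightly more careful than the paper's in explicitly distinguishing the original $B$ from the new witness $B^*$ and justifying why $u_i(B_j)\le u_i(B^*)$ via $B_j\subseteq A_j\subseteq M\setminus A_i$.
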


\begin{proof}
Fix a \wpropstarxy allocation $A$.
It is sufficient to prove that the 
WPROP$(x',y)$ condition holds for an arbitrary agent $i\in N$.
If $A_i = M$, the WPROP$(x',y)$ condition is trivially satisfied for $i$, so assume that $A_i\subsetneq M$.
By definition of \wpropstarxy, there exist $B\subseteq M\setminus A_i$ of size at most $1$ and, for each $j\in N\setminus\{i\}$, $B_j\subseteq A_j$ of size at most $1$ such that
\begin{align*}
\frac{u_i(A_i) + y\cdot u_i(B)}{w_i} \ge 
 \frac{u_i(M) -  x\cdot \sum_{j \in N \setminus \{i\}} u_i(B_j)}{w_N}.
\end{align*}
In particular, this holds when $B$ is a singleton set containing $i$'s most valuable item in $M\setminus A_i$.
Since $u_i(B_j) \le u_i(B)$ for all $j\in N\setminus\{i\}$, we have
\begin{align*}
\frac{u_i(A_i) + y\cdot u_i(B)}{w_i} 
&\ge  \frac{u_i(M) -  x\cdot \sum_{j \in N \setminus \{i\}} u_i(B)}{w_N} \\
&= \frac{u_i(M) - (n-1)x\cdot u_i(B)}{w_N} \\
&\ge \frac{u_i(M) - nx'\cdot u_i(B)}{w_N}.
\end{align*}
Hence, the WPROP$(x',y)$ condition is satisfied for agent~$i$.

The second part of the lemma follows from the fact that $x \ge (1-\frac{1}{n})x$.
\end{proof}

\begin{corollary}
\label{thm:wef_wprop}
For all $x,y\in[0,1]$, 
\wefxy implies \wpropxy.
\end{corollary}

Combining \Cref{thm:wef_wprop} and  \Cref{thm:pickseq}, we find that WPROP$(x,1-x)$ can be guaranteed for each fixed $x$. 
Can it be guaranteed for two different values of $x$ simultaneously? 
Can it be guaranteed for $x,y$ such that $x+y<1$? 
The following theorem (along with the discussion after it) shows that the answer to both questions is negative. 

\begin{theorem}
\label{thm:WPROP-simultaneous}
For all $x,x',y,y'\in[0,1]$, if $x'+y<1$ or $x+y'<1$,
there is an instance with identical items in which no allocation is both WPROP$(x,y)$ and WPROP$(x',y')$.
\end{theorem}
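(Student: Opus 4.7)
The plan is to construct an instance with identical items in which the joint integer lower bounds imposed by WPROP$(x,y)$ and WPROP$(x',y')$ exceed $m$, yielding infeasibility. Without loss of generality assume $x' + y < 1$; the case $x + y' < 1$ follows by swapping the roles of $(x,y)$ and $(x',y')$.

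With $m$ unit-value identical items, WPROP$(\tilde x, \tilde y)$ requires, for every $i$ with $A_i \ne M$, that $|A_i| \ge \frac{w_i m}{w_N} - \sigma_i^{\tilde x \tilde y}$, where $\sigma_i^{\tilde x \tilde y} := \frac{w_i n \tilde x}{w_N} + \tilde y$ is the effective slack for agent $i$ (and the $A_i = M$ case is handled by taking $m$ large enough that no single agent can absorb all items without some other agent's bound being violated). Combining the two conditions and using $|A_i| \in \mathbb{Z}$ yields $|A_i| \ge \lceil L_i \rceil$ with $L_i := \frac{w_i m}{w_N} - \min(\sigma_i^{xy}, \sigma_i^{x'y'})$; infeasibility is therefore implied by $\sum_i \lceil L_i \rceil > m$.

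If $x + y < 1$ or $x' + y' < 1$, the offending single notion already fails on a simple equal-weight instance: pick $n > 1/(1 - \tilde x - \tilde y)$ agents of weight $1$ and $m = n - 1$ items; then $(n-1)/n - (\tilde x + \tilde y) > 0$ forces each $|A_i| \ge 1$, contradicting $\sum_i |A_i| = n - 1$. Otherwise $x + y \ge 1$ and $x' + y' \ge 1$, which together with $x' + y < 1$ force $y' > y$ and $x > x'$ strictly (since $y < 1 - x' \le y'$ and $x \ge 1 - y > x'$).

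For this remaining case, take $n$ agents with weights $w_1 = \cdots = w_{n-1} = 1$ and $w_n = W$, with $n, W$ large. As $W \to \infty$, $\min(\sigma_i^{xy}, \sigma_i^{x'y'})$ tends to $\min(y, y') = y$ for every small agent $i < n$, and to $\min(nx+y, nx'+y') = nx' + y'$ for the large agent (since $n(x-x') > y' - y$ for $n$ large, exploiting $x > x'$). The total effective slack thus tends to $(n-1)y + nx' + y'$, which is strictly less than $n$ whenever $n > (y'-y)/(1 - x' - y)$; this is achievable because $1 - x' - y > 0$ by hypothesis. Now I would choose $m$ (and slightly perturb the unit weights so that the small-agent $L_i$ values are distinct) so that every $L_i$ has a small positive fractional part; then $\sum_i \lceil L_i \rceil \approx \sum_i L_i + n > m$, contradicting $\sum_i |A_i| = m$. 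The main technical obstacle is this last step of simultaneous fractional-part control, which can be handled by a Weyl-equidistribution argument for $W$ chosen with suitable Diophantine properties, or by an explicit Chinese-remainder-style construction with rational weights---once the total slack is a positive fraction of $n$, this tuning is essentially routine.
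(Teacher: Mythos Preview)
Your instance—many small-weight agents and one heavy agent—is exactly the construction the paper uses, and your computation of the limiting slacks is correct. The genuine gap is your finishing step: you assert that the simultaneous fractional-part control is ``essentially routine'' via Weyl equidistribution or a CRT-style construction, but you do not carry it out, and it is not routine. You are trying to force all $n$ ceilings to round up by nearly a full unit at once, which is a nontrivial simultaneous Diophantine condition on quantities that depend in a tangled way on $m$, $W$, and $n$; even after collapsing the identical small agents to a single value you are still left with a two-parameter approximation problem that deserves an actual argument rather than a name-drop.

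The paper avoids all of this with a much coarser finishing idea. Instead of arranging that every $L_i$ have a tiny positive fractional part, it only asks that each small agent's lower bound be strictly positive (forcing $\lceil L_i\rceil \ge 1$) and that the heavy agent's lower bound exceed $m-n+1$ (forcing $\lceil L_n\rceil \ge m-n+2$). These two requirements already sum to $m+1$, and they translate into two explicit inequalities in $m$, $n$, and the small weight $\varepsilon$. After rearranging, the existence of a suitable integer $m$ reduces to a single inequality whose right-hand side tends to $1-y-x'>0$ as $n\to\infty$; one then picks $n$ large and $\varepsilon$ small. No rounding subtleties, no equidistribution. Your preliminary case split on whether $x+y<1$ or $x'+y'<1$ individually is also unnecessary under this approach: the paper's two inequalities work uniformly for all $x,x',y,y'$ with $x'+y<1$.
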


\begin{proof}
We handle the case $x'+y < 1$; the case $x+y'<1$ is analogous.
Consider an instance with $m$ identical items.
The entitlements are $w_1 = \dots = w_{n-1} = \varepsilon $ and $w_n = 1 - (n-1) \varepsilon $ for some small $\varepsilon > 0$.
We choose $m, n, \varepsilon$ satisfying the following two inequalities (we will show later that such a choice is possible):
\begin{align*}
\text{(*)}&&\varepsilon  m - \varepsilon  n  x - y &> 0;
\\
\text{(**)}&&
(1 - (n-1)\varepsilon ) (m - nx') - y' &> m-n+1.
\end{align*}
WPROP$(x,y)$ requires that, for each of the first $n-1$ agents,
\begin{align*}
\frac{u_i(A_i) + y}{\varepsilon} &\geq \frac{m - nx}{1},
\end{align*}
which is equivalent to \[u_i(A_i) \geq \varepsilon m - \varepsilon n x - y > 0\] by (*),
so each of these agents must get at least one item.
WPROP$(x',y')$ requires that, for the last agent,
\begin{align*}
\frac{u_n(A_n) + y'}{1-(n-1)\varepsilon} &\geq \frac{m - n x'}{1},
\end{align*}
which is equivalent to \[u_n(A_n) \geq (1-(n-1)\varepsilon) (m - n x') - y' > m-n+1\] by (**),
so agent $n$ must get at least $m-n+2$ items. 
However, this is infeasible, as there are only $m$ items in total.

We now show that it is possible to choose $\varepsilon,m,n$ satisfying (*) and (**).
Inequality (*) can be rearranged to $\varepsilon m > \varepsilon n x + y$.
Inequality (**), upon dividing by $n-1$ and rearranging, becomes 
\begin{align*}
\varepsilon m < 1 + \varepsilon n x' - \frac{y' + x' n}{n-1}.
\end{align*}
Hence, as long as the gap between the quantities on the right-hand sides of the two rearranged inequalities is greater than $\varepsilon$, there exists an $m$ satisfying both inequalities.
So it is sufficient to find $n, \varepsilon$ such that 
\begin{align*}
(\varepsilon n x + y) + \varepsilon
<
1 + \varepsilon  n x' - \frac{y' + x'n}{n-1}.
\end{align*}
Rearranging, this becomes 
\begin{align*}
\varepsilon \cdot (nx + 1 - nx')
<
1 - y - \frac{y' + x'n}{n-1}.
\end{align*}
The limit as $n \to \infty$ of the right-hand side is $1-y-x'$, which is positive by assumption. 
Choose $n$ large enough so that the right-hand side is positive.
Then, choose $\varepsilon$ small enough so that the left-hand side is smaller than the right-hand side, and therefore the inequality holds.
\end{proof}

\Cref{thm:WPROP-simultaneous} has several corollaries.

First, taking $y = 1-x$ and $y' = 1-x'$ yields that, if $x\neq x'$,
there is an instance in which no allocation is both WPROP$(x,1-x)$ and WPROP$(x',1-x')$.
Combining this with \Cref{thm:wef_wprop} gives the same incompatibility for
WEF$(x,1-x)$ and WEF$(x',1-x')$.

Second, taking $x' = x$ and $y' = y$ yields that, when $x + y < 1$, a WPROP$(x,y)$ allocation is not guaranteed to exist,
and therefore a WEF$(x,y)$ allocation may not exist.

Third, we can generalize \citet{ChakrabortyIgSu20}'s result that WEF$(1,0)$ does not imply WPROP$(0,1)$:

\begin{corollary}
\label{cor:WEF-WPROP-different}
For all distinct $x,x'\in[0,1]$, WEF$(x,1-x)$ does not necessarily imply WPROP$(x',1-x')$.
\end{corollary}

\begin{proof}
Assume for contradiction that WEF$(x,1-x)$ implies WPROP$(x',1-x')$ for some $x\ne x'$.
By \Cref{thm:pickseq}, a WEF$(x,1-x)$ allocation always exists, and therefore such an allocation is WPROP$(x',1-x')$.
By \Cref{thm:wef_wprop}, this allocation is also WPROP$(x,1-x)$.
Hence, an allocation that is both WPROP$(x,1-x)$ and WPROP$(x',1-x')$ always exists, contradicting the first corollary of \Cref{thm:WPROP-simultaneous} mentioned above.
\end{proof}

Similarly to \Cref{thm:picking-WEF} for WEF$(x,1-x)$, we next characterize picking sequences whose output is guaranteed to satisfy WPROP$(x,1-x)$, thereby generalizing the characterization of \citet{ChakrabortyScSu21} for WPROP$(0,1)$.
The proof of this result is deferred to \Cref{app:omitted}.

\begin{theorem}
\label{thm:picking-WPROP}
Let $x\in[0,1]$.
A picking sequence $\pi$ satisfies WPROP$(x,1-x)$ if and only if for every prefix of $\pi$ and every agent $i$, we have $t_i + (1-x)\ge \frac{w_i}{w_N}\cdot (k-nx)$, where $t_i$ and $k$ denote the number of agent $i$'s picks in the prefix and the length of the prefix, respectively.
\end{theorem}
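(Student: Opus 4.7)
For the $(\Rightarrow)$ direction, I will instantiate $\pi$ on a canonical instance that isolates any chosen prefix. Given a prefix $P$ of length $k$, consider an instance in which $k$ items are valued $1$ by every agent and the remaining $m-k$ are valued $0$; by consistent tie-breaking, $\pi$ picks the $k$ valuable items in its first $k$ turns, giving $u_i(A_i)=t_i$ and $u_i(M)=k$. Every candidate $B\subseteq M\setminus A_i$ with $|B|\le 1$ then has $u_i(B)\in\{0,1\}$, and in each case the definition of WPROP$(x,1-x)$ directly yields $t_i+(1-x)\ge \frac{w_i}{w_N}(k-nx)$ (the $u_i(B)=0$ case gives the strictly stronger $t_i\ge\rho k$, which implies the desired inequality; the corner case $A_i=M$ is trivial).

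For the $(\Leftarrow)$ direction, I fix an arbitrary instance and agent $i$, sort items in decreasing order of $u_i$-value as $\hat{s}_1\ge\cdots\ge\hat{s}_m$, let $\tau_1<\cdots<\tau_{t_i^*}$ be $i$'s pick turns in $\pi$, and let $\pi_\ell$ be the $\hat{s}$-rank of the item $i$ picks at turn $\tau_\ell$. Because only $\tau_\ell-1$ items have been removed when $i$ greedily picks, $\pi_\ell\le\tau_\ell$. I then take $B$ to be the best item outside $A_i$, so $u_i(B)=\hat{s}_{q^*}$ where $q^*:=\min\{r:r\notin\{\pi_1,\dots,\pi_{t_i^*}\}\}$ (the case $A_i=M$ is again trivial). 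Writing $\rho:=w_i/w_N$, $C:=\rho nx+(1-x)$, and $g(r):=\mathbf{1}[r\in\{\pi_\ell\}]-\rho$, the WPROP$(x,1-x)$ inequality rearranges to $\sum_{r=1}^m g(r)\hat{s}_r\ge -C\hat{s}_{q^*}$, and the partial sums $G(K):=\sum_{r=1}^K g(r)=|\{\ell:\pi_\ell\le K\}|-\rho K$ obey $G(K)\ge t_i(K)-\rho K\ge -C$ thanks to $\pi_\ell\le\tau_\ell$ and the prefix hypothesis.

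The main obstacle is that plain Abel summation using only $G(K)\ge -C$ gives the weaker conclusion $\sum_r g(r)\hat{s}_r\ge -C\hat{s}_1$, which is insufficient whenever rank $1$ lies in $A_i$ (so $q^*>1$). I will overcome this by exploiting that ranks $1,\dots,L'$ with $L':=q^*-1$ all belong to $\{\pi_\ell\}$, so $g(r)=1-\rho$ on this prefix. Splitting the sum at $L'$, the prefix block contributes $(1-\rho)\sum_{r=1}^{L'}\hat{s}_r\ge L'(1-\rho)\hat{s}_{q^*}$, while Abel summation on the tail $r>L'$, with shifted partial-sum bound $-C-L'(1-\rho)$, contributes at least $-(C+L'(1-\rho))\hat{s}_{q^*}$. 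The two ``$L'(1-\rho)\hat{s}_{q^*}$'' terms cancel exactly, yielding $\sum_r g(r)\hat{s}_r\ge -C\hat{s}_{q^*}$, as required.
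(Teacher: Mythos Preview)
Your proposal is correct. The $(\Rightarrow)$ direction matches the paper's argument essentially verbatim. The $(\Leftarrow)$ direction, however, takes a genuinely different route from the paper.

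The paper organizes the backward direction around \emph{phases} delimited by agent~$i$'s picks: it lets $\tau_\ell$ count the number of other agents' picks in phase~$\ell$, introduces $\alpha_\ell$ (agent~$i$'s utility gain in phase~$\ell$) and $\beta^\ell_j$ (agent~$i$'s utility for the items other agents pick in phase~$\ell$), and proves by induction on $s$ a somewhat intricate inequality of the form
\[
(y+\rho nx)\cdot u_i(B) + (1-\rho)\sum_{\ell=1}^s \alpha_\ell \ \ge\ \rho\sum_{\ell=0}^s\sum_{j}\beta^\ell_j + \Big(y+\rho nx + s(1-\rho)-\rho\sum_{\ell=0}^s\tau_\ell\Big)\max_{\substack{s\le \ell\le t_i\\1\le j\le \tau_\ell}}\beta^\ell_j,
\]
finally setting $s=t_i$ to conclude. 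Your approach instead works directly in the rank domain: you reduce the target to a single weighted sum $\sum_r g(r)\hat s_r$ with $g(r)=\mathbf{1}[r\in\{\pi_\ell\}]-\rho$, use the prefix hypothesis (via $\pi_\ell\le\tau_\ell$) to lower-bound every partial sum $G(K)$ by $-C$, and then apply Abel summation. The one nontrivial obstacle you identify---that naive Abel summation only yields $-C\hat s_1$ rather than $-C\hat s_{q^*}$---is handled cleanly by splitting off the prefix of ranks that lie in $A_i$, where $g\equiv 1-\rho$, and observing that the resulting $L'(1-\rho)\hat s_{q^*}$ terms cancel. This is shorter and more transparent than the paper's phase induction; the paper's version, on the other hand, mirrors the structure of the companion WEF proof and makes the role of each pick more explicit. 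One small point to make fully rigorous when you write this out: the claim $\pi_\ell\le\tau_\ell$ requires that the ranking $\hat s_1,\dots,\hat s_m$ break ties consistently with the picking sequence's tie-breaking rule, so that agent~$i$'s greedy choice at turn~$\tau_\ell$ really has rank at most $\tau_\ell$.
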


The picking sequences in \Cref{thm:pickseq} can be derived from \emph{divisor methods}, a well-studied class of methods in apportionment \citep{BalinskiYo01}.
Each divisor method gives rise to a picking sequence associated with a function $f:\mathbb{Z}_{\ge 0} \rightarrow \mathbb{R}_{\ge 0}$ with $t\le f(t) \le t+1$ for each $t$, and assigns each pick to an agent~$i$ who minimizes the ratio $f(t_i)/w_i$, where $t_i$ denotes the number of times that agent~$i$ has picked so far.
We will refer to such picking sequences as \emph{divisor picking sequences}.
We show next that if we restrict our attention to divisor picking sequences, then for each $x$, the picking sequence in \Cref{thm:pickseq} is the only one satisfying either WEF$(x,1-x)$ or WPROP$(x,1-x)$.
This generalizes the result of \citet{ChakrabortyScSu21} for WEF$(1,0)$ and WPROP$(0,1)$.

\begin{theorem}
\label{thm:divisor}
Let $x\in[0,1]$ be a real number.
A divisor picking sequence with function~$f$ satisfies WPROP$(x,1-x)$ if and only if $f(t) = t+(1-x)$ for all $t\in\mathbb{Z}_{\ge 0}$.
The same holds for WEF$(x,1-x)$.
\end{theorem}

\begin{proof}
Let $y := 1-x$.
By \Cref{thm:wef_wprop} and \Cref{thm:pickseq}, it suffices to prove that if a divisor picking sequence with function $f$ satisfies WPROP$(x,y)$, then $f(t) = t+y$ for all $t\in\mathbb{Z}_{\ge 0}$.
We prove this implication even for the special case in which all items are identical; in this case, WPROP$(x,y)$ implies that the number of items given to each agent $i$ must be at least $\frac{w_i}{w_N}\cdot (m-nx)-y$.

We first show that  
\begin{align}
\label{eq:fa-fb}
\frac{f(a)}{f(b)} \leq \frac{a+y}{b+y}
\text{~~~~~for all integers~~~}
a\ge 0,~ b \ge 1.
\end{align}
Fix $a$ and $b$, and assume for contradiction that $\frac{f(a)}{f(b)} > \frac{a+y}{b+y}$.
Choose $w_1$ and $w_2$ such that $\frac{f(a)}{f(b)} > \frac{w_1}{w_2} > \frac{a+y}{b+y}$, choose $n$ large enough so that 
$\frac{w_1}{w_2} > \frac{a+y}{b+y-\frac{1}{n-1}}$, and consider an instance with $a+(n-1)(b+1)$ identical items and $n$ agents such that $w_2 = w_3 = \dots = w_n$.
Since $\frac{f(a)}{w_1} > \frac{f(b)}{w_2}$, 
agents $2,\ldots,n$ all pick their $(b+1)$st item before agent $1$ picks her $(a+1)$st item; therefore, 
agent~$1$ gets at most $a$ items overall.
Moreover, since $\frac{w_1}{w_2} > \frac{a+y}{b+y-\frac{1}{n-1}}$, we have
\begin{align*}
((n-1)(b+y)-1)w_1 > (a+y)(n-1)w_2,
\end{align*}
that is,
\begin{align*}
w_1+(a+y)(n-1)w_2 &< w_1(n-1)(b+y).
\end{align*}
This implies that
\[
(a+y)(w_1+(n-1)w_2) < w_1(a+y-1+(n-1)(b+y)),
\]
or equivalently,
\[
a < \frac{w_1}{w_1+(n-1)w_2}\cdot(a+(n-1)(b+1)-nx) - y
=
\frac{w_1}{w_N}\cdot(m-nx) - y,
\]
which violates the WPROP$(x,y)$ condition for agent~$1$.
It follows that $\frac{f(a)}{f(b)} \le \frac{a+y}{b+y}$, as claimed in~\eqref{eq:fa-fb}.

We now show that $f(t) = t+y$ for all $t \ge 1$.
For each positive integer $k$, 
we can apply~\eqref{eq:fa-fb} for both $(t,kt)$ and the inverse pair $(kt,t)$. This implies
$
\frac{f(kt)}{f(t)} = \frac{kt+y}{t+y}
$, which in turn implies $f(kt) = \frac{kt+y}{t+y}\cdot f(t)$.
If $f(t) > t+y$, say $f(t) = t+y+\varepsilon$ for some $\varepsilon > 0$, then $f(kt) = kt+y+\frac{kt+y}{t+y}\cdot\varepsilon$; the latter quantity exceeds $kt+1$ when $k$ is sufficiently large, contradicting the definition of divisor picking sequences.
The case $f(t) < t+y$ leads to a similar contradiction.
This means that $f(t) = t+y$ for all $t\ge 1$.

It remains to show that $f(0) = y$.
We consider two cases.

\underline{Case 1}: $y = 0$. 
Applying \eqref{eq:fa-fb}
with $a=0$ and $b=1$ yields $\frac{f(0)}{f(1)} \le \frac{0}{1}$, which  implies $f(0) = 0$.

\underline{Case 2}: $y > 0$. 
First, assume for contradiction that $f(0) = 0$.
We will construct a new instance with $m = n$ identical items, each with value $1$.
Choose $n$ large enough so that $\frac{1+y}{ny} < 1$, and choose $w_1$ large enough so that $\frac{1+y}{ny} < \frac{w_1}{w_N}$.
Since $f(0) = 0$, every agent gets exactly one item according to the picking sequence.
Hence, for every $B\subseteq M\setminus A_1$ with $|B|\le 1$, it holds that
\[
u_1(A_1) + y\cdot u_1(B) 
\le 1 + y 
< \frac{w_1}{w_N}\cdot ny = \frac{w_1}{w_N}\cdot (n-nx) \le \frac{w_1}{w_N}\cdot (u_1(M) - n\cdot x\cdot u_1(B)).
\]
This shows that the divisor picking sequence fails WPROP$(x,y)$, a contradiction.
Hence, $f(0) > 0$.
Now, the same argument in the proof of \eqref{eq:fa-fb} holds for every $b\geq 0$.
Applying \eqref{eq:fa-fb} for both $(0,1)$ and $(1,0)$ implies that $\frac{f(0)}{f(1)} = \frac{y}{1+y}$, and so $f(0) = y$.
\end{proof}

\subsection{Weighted Nash welfare}
\label{sec:WNW}

The WEF and WPROP criteria 
consider individual agents or pairs of agents.
A different approach, known in economics as ``welfarism'' \citep{Moulin03}, takes a global view and tries to find an allocation that maximizes a certain aggregate function of the utilities. 
A common aggregate function is the product of utilities, also called the \emph{Nash welfare}
\citep{nash1950bargaining}. This notion extends to the weighted setting as follows \cite{kalai1977nonsymmetric}.
\begin{definition}[MWNW]
A \emph{maximum weighted Nash welfare} allocation is an allocation that maximizes the weighted product $ \prod_{i\in N} u_i(A_i)^{w_i}$.
If the maximum weighted product is zero (that is, when all allocations give a utility of zero to one or more agents), then the MWNW rule first maximizes the number of agents who get a positive utility, and subject to that, maximizes the weighted product for these agents.\footnote{
An alternative tie-breaking rule for the case of zero weighted product is to maximize the \emph{total weight} of agents who get a positive utility, so that agents with higher weights are prioritized.
The choice of tie-breaking rule does not affect our results, as all relevant examples that we consider admit allocations with positive weighted Nash welfare.
}
\end{definition}

With equal entitlements, MWNW implies EF1 \citep{CaragiannisKuMo19}. 
However, with different entitlements, MWNW is incompatible with
WEF$(1,0)$ 
\citep{ChakrabortyIgSu20}
and with WPROP$(0,1)$ \citep{ChakrabortyScSu21}.
We generalize both of these  incompatibility results at once. 
\begin{theorem}
\label{thm:mwnw_wprop}
For each $x\in[0,1]$,
there exists an instance with identical items in which
every MWNW allocation is not WPROP$(x,1-x)$, and hence 
not WEF$(x,1-x)$.
\end{theorem}

\begin{proof}
For $x < 1$, consider an instance with $n$ identical items and $n$ agents,
with $n > w_1 > \frac{2-x}{1-x}$
(the range is nonempty when $n$ is sufficiently large),
and $w_2=\cdots=w_n =(n-w_1)/(n-1)$,
so $w_N = n$.
Any MWNW allocation must give a single item to each agent. But this violates 
the WPROP$(x,1-x)$ condition for agent~$1$,
since $\frac{1+(1-x)}{w_1} < \frac{2-x}{\frac{2-x}{1-x}} = 1-x = \frac{n-nx}{w_N}$.

For $x = 1$, consider 
the instance used in Proposition~4.1 of \citet{ChakrabortyIgSu20}, with $c=2$.
Specifically, there are $n=2$ agents with weights $w_1 = 1$ and $w_2 = k$ for some positive integer $k$, and $m = k+c+2$ identical items (with value~$1$ each).
As Chakraborty et al.~showed, when $k$ is sufficiently large, every MWNW allocation gives exactly one item to agent~$1$.
The WPROP$(1,0)$ condition for agent~$1$ is
\begin{align*}
\frac{u_1(A_1) + 0\cdot u_1(B)}{w_1}
& \ge 
\frac{u_1(M) - n\cdot 1\cdot u_1(B)}{w_N},
\end{align*}
where $u_1(A_1)=u_1(B)=1$ since all items are worth $1$. Therefore, the condition becomes
\begin{align*}
\frac{1}{1}
& \ge 
\frac{(k+c+2) - 2}{k+1},
\end{align*}
which is false since $c=2$. 
Therefore, every MWNW allocation is not WPROP$(1,0)$.
\end{proof}

\begin{corollary}
\label{cor:WWEF1-WPROP}
For each $x\in[0,1]$, WWEF1 does not imply WPROP$(x,1-x)$.
\end{corollary}

\begin{proof}
\citet{ChakrabortyIgSu20} proved that MWNW implies WWEF1.
However, Theorem~\ref{thm:mwnw_wprop} shows that MWNW does not imply WPROP$(x,1-x)$ for any $x\in[0,1]$.
\end{proof}

\section{Share-Based Notions}
\label{sec:maximin}
In this section, we turn our attention to share-based notions, which assign a threshold to each agent representing the agent's ``fair share''.

\subsection{Definitions}
\label{sec:share-definitions}

Denote by $\Pi(M,n)$ the collection of all ordered partitions of $M$ into $n$ subsets.
In the equal-entitlement setting, the \emph{($1$-out-of-$n$) maximin share} of an agent~$i$ is defined as follows:
\begin{align*}
\operatorname{MMS}_{i}^{1\text{-out-of-}n}(M) := \max_{(Z_1,\dots,Z_n) \in \Pi(M,n)} ~ \min_{j\in [n]} ~~~ u_i(Z_j).
\end{align*}

Sometimes we drop $M$ and the superscript `$1$-out-of-$n$' and simply write  $\operatorname{MMS}_{i}$.
An allocation is called \emph{MMS-fair} or simply \emph{MMS} if the utility that each agent receives is at least as high as the agent's MMS.
Similarly, for a parameter $\alpha$, an allocation is called \emph{$\alpha$-MMS-fair} or \emph{$\alpha$-MMS} if every agent receives utility at least $\alpha$ times her MMS.
We will use analogous terminology for other share-based notions.

There are several ways to extend this notion to the unequal-entitlement setting.
The first definition is due to \citet{FarhadiGhHa19}.
Denote by $\mathbf{w} = (w_1,\dots,w_n)$ the list of weights.

\begin{definition}[WMMS]
\label{def:WMMS}
The \emph{weighted maximin share} of an agent $i\in N$ is defined as:
\begin{align*}
\operatorname{WMMS}_{i}^{\mathbf{w}}(M) :=  \max_{(Z_1,\dots,Z_n) \in \Pi(M,n)}   \min_{j\in [n]} \frac{w_i}{w_j}\cdot u_i(Z_j)
\\
= w_i\cdot \max_{(Z_1,\dots,Z_n) \in \Pi(M,n)}   \min_{j\in [n]} \frac{u_i(Z_j)}{w_j}.
\end{align*}
\end{definition}
Sometimes we will drop $\mathbf{w}$ and $M$ from the notation when these are clear from the context; the same convention applies to other notions in this section.

Intuitively, WMMS tries to find the ``most proportional'' allocation with respect to all agents' weights and agent $i$'s utility function.
Note that the WMMS of agent~$i$ depends not only on $i$'s entitlement, but also on the entitlements of all other agents.
This might lead to unintuitive outcomes.
For example, 
suppose that in an inheritance division problem, there are three agents with entitlements $w_1=20, w_2=30, w_3=50$. Then, the court decides to move some entitlement of agent~$2$ to agent~$3$ so that the new entitlements are $w_1=20, w_2=25, w_3=55$.
Even though agent~$1$'s entitlement remains fixed, her WMMS may vary due to changes unrelated to her;\footnote{Concretely, suppose that the instance consists of three items for which agent~$1$ has utilities $20,30,50$, respectively.
Originally, agent~$1$'s WMMS is $1$.
However, after the entitlements of agents~$2$ and $3$ change, agent~$1$'s WMMS decreases to $10/11$.} this can be seen as a disadvantage of the WMMS notion.
\citet{FarhadiGhHa19} showed that a $1/n$-WMMS allocation always exists, and this guarantee is tight for every $n$.

The second definition was implicitly considered by \citet{BabaioffNiTa21}, who did not give it a name.
Following \citet{Segalhalevi19}, we call it the \emph{ordinal maximin share}.
\citet{BabaioffEzFe21} called it the \emph{pessimistic share}.
To define this share, we first extend the notion of $1$-out-of-$n$ MMS as follows. 
For all positive integers $\ell \le d$,

\begin{align*}
\operatorname{MMS}_{i}^{\ell\text{-out-of-}d}(M) := \max_{P \in \Pi(M,d)}   \min_{Z\in \operatorname{Unions}(P,\ell)} u_i(Z),
\end{align*}
where the minimum is taken over all unions of $\ell$ bundles from a given $d$-partition $P$.
Based on this generalized MMS notion, we define the ordinal MMS.

\begin{definition}[OMMS]
The \emph{ordinal maximin share} of an agent $i\in N$ is defined as:
\begin{align*}
\operatorname{OMMS}_{i}^{\mathbf{w}}(M) :=  \max_{\ell,d:~\frac{\ell}{d} \leq \frac{w_i}{w_N}} \operatorname{MMS}^{\ell\text{-out-of-}d}_i(M).
\end{align*}
\end{definition}
With equal entitlements, the OMMS reduces to the MMS \citep{Segalhalevi19}. 
Therefore, an OMMS allocation always exists for agents with identical valuations, but may not exist for $n\geq 3$ agents with different valuations \citep{KurokawaPrWa18}.
With different entitlements, it is an open question whether an OMMS allocation always exists for agents with identical valuations.

The third notion, the \emph{AnyPrice Share}, is due to \citet{BabaioffEzFe21}.
Instead of partitioning the items into $n$ disjoint bundles, an agent is allowed to choose any collection of (possibly overlapping) bundles.
However, the agent must then assign a weight to each chosen bundle so that the sum of the bundles' weights is
$w_N$, and each item belongs to bundles whose total weight is at most the agent's entitlement. 
The AnyPrice Share is the agent's utility for the least valuable chosen bundle. 

\begin{definition}[APS]
The \emph{AnyPrice share} of an agent $i\in N$ is defined as:
\begin{align*}
\operatorname{APS}_{i}^{\mathbf{w}}(M) := \max_{P \in \operatorname{AllowedBundleCollections}(M,w_i)}   \min_{Z\in P}  u_i(Z),
\end{align*}
where the maximum is taken over all collections $P$ of bundles such that for some assignment of weights to the bundles in $P$, the total weight of all bundles in $P$ is $w_N$, and for each item, the total weight of the bundles to which the item belongs is at most $w_i$.
\end{definition}

Observe that when all entitlements are equal (to $w_N/n$), the agent can choose any $1$-out-of-$n$ MMS partition and assign a weight of $w_N/n$ to each part; this shows that the APS is at least the MMS in the equal-entitlement setting. 
\citet{BabaioffEzFe21} proved that a $3/5$-APS allocation exists for agents with arbitrary entitlements, and that the APS is always at least as large as the OMMS.
Hence, their result implies the existence of a $3/5$-OMMS allocation.
Babaioff et al.~also gave an equal-entitlement example in which the APS is strictly larger than the MMS. 
Their example exhibits a disadvantage of the APS: an allocation that gives every agent her APS may not exist even when all agents have identical valuations and equal entitlements.

The fourth notion, which is new to this paper, is the \emph{normalized maximin share}.
The idea is that we take an agent's $1$-out-of-$n$ MMS and scale it according to the agent's entitlement.

\begin{definition}[NMMS]
The \emph{normalized maximin share} of an agent $i\in N$ is defined as:
\begin{align*}
\operatorname{NMMS}_{i}^{\mathbf{w}}(M) := 
\frac{w_i}{w_N}\cdot n\cdot 
\operatorname{MMS}^{1\text{-out-of-}n}_{i}(M).
\end{align*}
\end{definition}

Compared to the previous three notions, the definition of NMMS is relatively simple, which makes it easier to explain to laypeople and therefore more likely to be adopted in practice---indeed, as \citet{Procaccia19} pointed out, the explainability of fairness guarantees is crucial for convincing users in practical applications that the proposed solution is fair.
Moreover, the NMMS of an agent depends only on the agent's relative entitlement (i.e., $w_i/w_N$); this is in contrast to the WMMS, which can change even when the agent's relative entitlement stays the same (see the discussion after \Cref{def:WMMS}).

We illustrate these notions with a small example.

\begin{example}
Let $n=m=3$, $w_1=1$, $w_2=2$, $w_3=4$, and suppose that agent~$1$ has utility $1$ for every item.
\begin{itemize}
\item For WMMS, each set $Z_j$ in a maximizing partition must contain exactly one item; otherwise at least one of them is empty and the minimum over $j\in[n]$ is $0$. This means that $\operatorname{WMMS}_{1}^{\mathbf{w}}(M) = w_1\cdot\min_{j\in[3]}\frac{1}{w_j} = \frac{1}{4}$.
\item For OMMS, note that $\operatorname{MMS}^{\ell\text{-out-of-}d}_i(M) = 0$ whenever $\ell\le d-3$, because for every partition of $M$ into $d$ sets, at least $d-3$ sets are empty.
Since $\ell \le d-3$ always holds when $\ell/d \le w_i/w_N = 1/7$, we have $\operatorname{OMMS}_{1}^{\mathbf{w}}(M) = 0$.
\item For APS, note that if we distribute a weight of $7$ among nonempty bundles, some item necessarily receives weight strictly larger than $1$.
Indeed, this is true if some bundle receives weight larger than $1$.
Otherwise, each nonempty bundle must receive weight exactly $1$ (since there are $2^3 - 1 = 7$ nonempty bundles in total), and this is again true.
Hence, a positive weight must be assigned to the empty bundle, and $\operatorname{APS}_{1}^{\mathbf{w}}(M) = 0$.
\item For NMMS, similarly to WMMS, each set $Z_j$ in a maximizing partition must contain exactly one item.
Thus, we have $\operatorname{NMMS}_{1}^{\mathbf{w}}(M) = \frac{w_1}{w_N}\cdot n\cdot \operatorname{MMS}^{1\text{-out-of-}3}_{1}(M) = \frac{3}{7}$.
\end{itemize}
\end{example}

For example scenarios in which some of these share-based notions better reflect intuitive perceptions of fairness than others, we refer to the instances in the proof of \Cref{thm:aps_mms}.

\subsection{Ordinal versus cardinal notions}

APS-fairness and OMMS-fairness are both ``ordinal'' notions in the sense that, even though the APS and OMMS values are numerical, whether an allocation is APS- or OMMS-fair can be determined by only inspecting each agent's ordinal ranking over bundles.
On the other hand, WMMS-fairness and NMMS-fairness are both ``cardinal'' notions, since they depend crucially on the numerical utilities.
We show that there are no implication relations between  ordinal and cardinal notions: each type of notion does not imply any nontrivial approximation of the other type.

\begin{theorem}
\label{thm:aps_mms}
(a)
An APS-fair or OMMS-fair allocation does not necessarily yield any positive approximation of WMMS-fairness or NMMS-fairness.

(b) 
A WMMS-fair or NMMS-fair allocation does not necessarily yield any positive approximation of APS-fairness or OMMS-fairness.
\end{theorem}

\begin{proof}
For both parts we consider agents who share the same utility function. 
Since an agent's APS is always at least as large as her OMMS \citep{BabaioffEzFe21}, it suffices to consider APS for part (a) and OMMS for part (b). 

(a) 
There are two items with utilities $40$ and $60$, and two agents with weights $w_1 = 0.4$ and $w_2 = 0.6$.
Agent~$1$'s APS is $0$, since for every assignment of weights to bundles, each item must have a total weight of at most $0.4$, so the empty bundle must have a weight of at least $0.2$. 
Hence, the allocation giving both items to agent~$2$ is APS-fair; this leaves agent~$1$ with utility $0$.
On the other hand, 
agent~$1$'s WMMS is $40$.
Moreover, the MMS is $40$, so agent~$1$'s NMMS is $0.4\cdot2\cdot 40 = 32$. 
Therefore, the above APS-fair allocation does not yield any positive approximation of the NMMS or WMMS for agent~$1$.

(b) Consider the same items as above, but now there are three agents with weights $w_1=w_2=0.2$ and $w_3=0.6$.
The WMMS and NMMS are always $0$ when $n>m$, so every allocation is WMMS-fair and NMMS-fair.
However, taking $(\ell,d) = (1,2)$ in the definition of OMMS, we find that agent~$3$'s OMMS is at least $40$.
Therefore, a WMMS-fair and NMMS-fair allocation guarantees no positive fraction of the OMMS.
\end{proof}

\subsection{Approximation of NMMS}

Given that NMMS is a new notion, an important question is whether any useful approximation of it can be ensured.
\citet{FarhadiGhHa19} proved that the best possible WMMS guarantee is $1/n$-WMMS. 
We show that the same holds for NMMS, starting with the upper bound.

\begin{theorem}
\label{thm:nmms_best}
For every integer $n\geq 2$, and every real number $r>1/n$, there is an instance with $n$ agents in which no allocation is $r$-NMMS.
\end{theorem}

\begin{proof}
We use the same instance as in Theorem~2.1 of \citet{FarhadiGhHa19}.
There are $n$ agents and $2n-1$ items.
Each agent $i < n$ has weight $w_i=\varepsilon\in (0, 1/n)$ while agent~$n$ has weight $w_n = 1-(n-1)\varepsilon$. The utilities are as follows:
\begin{align*}
    u_i(j)&=\begin{cases}
    \varepsilon &\text{if } j < n,\\
    1-(n-1)\varepsilon &\text{if } j =n,\\
    0 &\text{otherwise},
    \end{cases} \quad \forall i <n; \\
    u_n(j)&= \begin{cases}
    \frac{1-(n-1)\varepsilon}{n} &\text{if } j \le n,\\
    \varepsilon &\text{if } j > n.
    \end{cases}
\end{align*}
The MMS, and therefore the NMMS, of each of the first $n-1$ agents is positive, as they all value the first $n$ items positively.
Thus, in order to provide them with a positive approximation of their NMMS, we must assign to each of them at least one of the first $n$ items. 
This means that agent~$n$ is left with at most one of the first $n$ items, and therefore utility at most 
$$\beta := \frac{1-(n-1)\varepsilon}{n}+(n-1)\varepsilon=\frac{1+(n-1)^2 \varepsilon}{n}.$$ 
On the other hand, for sufficiently small $\varepsilon$, agent~$n$'s MMS is $\frac{1-(n-1)\varepsilon}{n}$, so her NMMS is $(1-(n-1)\varepsilon)^2$.
Therefore, the approximation ratio of every allocation is at most
\[
\frac{1+(n-1)^2\varepsilon}{n\cdot (1-(n-1)\varepsilon)^2}.
\]
When $\varepsilon \rightarrow 0$, this expression approaches $1/n$, which proves the claim.
\end{proof}

We establish a matching lower bound by showing that \wpropstar{(1,0)}, and therefore WEF$(1,0)$, implies \nnmms.

\begin{lemma}
\label{lem:strongwprop10_nmms}
\wpropstar{(1,0)} implies \nnmms.
\end{lemma}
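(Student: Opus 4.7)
The plan is to unpack the defining inequality of WPROP*$(1,0)$ directly and then apply a pigeonhole argument against any optimal MMS partition of agent $i$. The whole argument should fit in a few lines.

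First, I would unfold Definition~\ref{def:strongwpropxy} at $(x,y)=(1,0)$. Assume the allocation $(A_1,\dots,A_n)$ is WPROP*$(1,0)$ and fix an agent $i\in N$. By definition, there exist sets $B_j \subseteq A_j$ with $|B_j|\le 1$ for each $j \in N\setminus\{i\}$ such that
\[
\frac{u_i(A_i)}{w_i} \ge \frac{u_i(M) - \sum_{j\in N\setminus\{i\}} u_i(B_j)}{w_N}.
\]
Let $R_i := \bigcup_{j\ne i} B_j$. Because the $A_j$'s are pairwise disjoint, so are the $B_j$'s, and hence $|R_i|\le n-1$ and $u_i(R_i) = \sum_{j\ne i} u_i(B_j)$ by additivity. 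Rearranging gives $u_i(A_i) \ge \frac{w_i}{w_N}\bigl(u_i(M) - u_i(R_i)\bigr)$.

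Second, since $\frac{1}{n}\operatorname{NMMS}_i^{\mathbf{w}}(M) = \frac{w_i}{w_N}\operatorname{MMS}_i^{1\text{-out-of-}n}(M)$ by definition, it suffices to show that $u_i(M) - u_i(R_i) \ge \operatorname{MMS}_i^{1\text{-out-of-}n}(M)$. Here is the pigeonhole step: take an optimal $1$-out-of-$n$ MMS partition $(Z_1,\dots,Z_n)$ of $M$ for agent $i$, so that $u_i(Z_k)\ge \operatorname{MMS}_i$ for every $k\in[n]$. Since $|R_i|\le n-1$ and the $n$ parts $Z_1,\dots,Z_n$ partition $M$, at least one index $k^\star$ satisfies $Z_{k^\star}\cap R_i = \emptyset$. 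Then $Z_{k^\star}\subseteq M\setminus R_i$, giving
\[
u_i(M) - u_i(R_i) \;=\; u_i(M\setminus R_i) \;\ge\; u_i(Z_{k^\star}) \;\ge\; \operatorname{MMS}_i,
\]
which combined with the previous inequality yields $u_i(A_i) \ge \frac{w_i}{w_N}\operatorname{MMS}_i = \frac{1}{n}\operatorname{NMMS}_i^{\mathbf{w}}(M)$. Since $i$ was arbitrary, the allocation is \nnmms{}-fair.

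I do not foresee a genuine obstacle. The only nontrivial observation is the pigeonhole step: deleting fewer than $n$ items from $M$ cannot touch every bundle of an $n$-partition, so one full MMS bundle survives intact. Everything else is bookkeeping around Definition~\ref{def:strongwpropxy} and the identity $\frac{1}{n}\operatorname{NMMS}_i = \frac{w_i}{w_N}\operatorname{MMS}_i$.
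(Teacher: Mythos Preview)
Your proof is correct and follows essentially the same approach as the paper: both use the WPROP$^*(1,0)$ inequality together with the pigeonhole observation that removing at most $n-1$ items from $M$ leaves at least one bundle of any $n$-partition intact, hence $u_i(M\setminus R_i)\ge \operatorname{MMS}_i$. The only cosmetic difference is that the paper routes through the set $B_i^*$ of agent~$i$'s $n-1$ most valuable items (bounding $\sum_{j\ne i}u_i(B_j)\le u_i(B_i^*)$ first), whereas you apply the pigeonhole step directly to $R_i=\bigcup_{j\ne i}B_j$; your version is slightly more direct.
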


\begin{proof}
If there are fewer than $n$ items, every agent's NMMS is $0$, and the statement holds trivially.
Assume therefore that there are at least $n$ items.
Consider an arbitrary agent $i \in N$, and let $B^*_i$ be the set of the $n-1$ most valuable items for $i$, ties broken arbitrarily.
In any partition of $M$ into $n$ bundles, at least one bundle does not contain any item from the set $B^*_i$, and thus it is contained in $M\setminus B^*_i$.
This means that
\begin{align}
    \operatorname{MMS}_i \le u_i(M \setminus B^*_{i}). \label{ineq:mms_minusbestnm1}
\end{align}
In every \wpropstar{(1,0)} allocation $A$, there exists $B_j\subseteq A_j$ of size at most $1$ for each $j\in N\setminus \{i\}$ such that
\begin{align*}
u_i(A_i) 
&\ge \frac{w_i}{w_N} \left( u_i(M) - \sum_{j \in N \setminus \{i\}} u_i(B_j)\right) \\
&\ge \frac{w_i}{w_N} \left( u_i(M) -  u_i\left(B_i^*\right)\right) \\
&\ge \frac{w_i}{w_N} \cdot \operatorname{MMS}_i && \text{(by \eqref{ineq:mms_minusbestnm1})}
\\
&= \frac{1}{n}\cdot \operatorname{NMMS}_i,
\end{align*}
so the allocation $A$ is \nnmms, as desired.
\end{proof}

\begin{theorem}
\label{thm:wef10_nmms}
WEF$(1,0)$ implies \nnmms. 
In particular, every instance admits a \nnmms allocation.
The factor $1/n$ in both statements cannot be improved.
\end{theorem}

\Cref{thm:wef10_nmms} is an immediate consequence of \Cref{lem:WEF-implies-strongWPROP,lem:strongwprop10_nmms} and \Cref{thm:nmms_best}.
It generalizes the result that EF1 implies $1/n$-MMS in the unweighted setting \citep[Prop.~3.6]{AmanatidisBiMa18}, and stands in contrast to the result that WEF$(1,0)$ does not imply any positive approximation of WMMS \citep[Prop.~6.2]{ChakrabortyIgSu20}.
Since 
a weighted round-robin algorithm as well as a generalization of \citet{BarmanKrVa18}'s market-based algorithm ensure WEF$(1,0)$
\citep{ChakrabortyIgSu20}, these algorithms guarantee \nnmms as well.

The next theorem shows that, if each agent values each item at most her NMMS, then the approximation factor can be improved to $1/2$, thereby providing a direct analog to Theorem~3.2 of \citet{FarhadiGhHa19} for WMMS.

\begin{theorem}
\label{thm:nmms-half}
Any instance in which $u_i(g) \leq \emph{NMMS}_i$ for all $i\in N$ and $g\in M$
admits a $1/2$-NMMS allocation.
\end{theorem}

\begin{proof}
We use an algorithm similar to Algorithm~2 of \citet{FarhadiGhHa19} for WMMS.
We normalize the weights so that $w_N = 1$.
The algorithm proceeds as follows:
\begin{enumerate}
\item Scale the utilities of each agent $i$ so that NMMS$_i=w_i n$.
Consequently, we have $\text{MMS}^{1\text{-out-of-}n}_{i}(M) = 1$, and therefore $u_i(M)\geq n$.

\item \label{loop} Let the set of \emph{unsatisfied agents} be the set of agents $i$ whose current utility is less than $w_i n/2$.
If this set is empty, the algorithm terminates.
\item Pick a pair $(i,g)\in N\times M$ that maximizes $u_i(g)$ among all unsatisfied agents and unallocated items, breaking ties arbitrarily. Allocate $g$ to $i$.
\item If there are remaining items, go back to Step~\ref{loop}. 
Else, the algorithm terminates.
\end{enumerate}

We will show that the allocation returned by the algorithm is $1/2$-NMMS.
Note that this allocation may be incomplete, in which case we can allocate the remaining items arbitrarily in a post-processing step; this does not affect the $1/2$-NMMS guarantee.

We first prove the following auxiliary claim on the (possibly incomplete) allocation returned by the algorithm:
\emph{For each agent $i$, $u_i(A_i)\leq w_i n$}.
By the assumption of the theorem and the normalization in Step~1 of the algorithm, the claim holds for each agent~$i$ who receives at most one item.
If an agent has already received one or more items and is still unsatisfied, this means that all items allocated to her together yield utility less than $w_i n / 2$. 
Since items are allocated to each agent in descending order of their utility to the agent, it follows that every remaining item yields utility less than $w_i n / 2$ to $i$. As a result, $i$'s utility throughout the algorithm is always smaller than $w_i n / 2 + w_i n / 2 = w_i n$, thereby proving the claim.

Now, since each item is given to an unsatisfied agent who values it the most, the auxiliary claim implies that $u_i(A_j)\leq u_j(A_j) \le w_j n$ for all agents $i,j$ as long as $i$ is unsatisfied.
We will show that all agents are satisfied when the algorithm terminates.
Assume for contradiction that agent~$i$ remains unsatisfied at the end of the algorithm.
Summing the previous inequality over all $j\neq i$ 
yields that, from $i$'s perspective, the total value of items allocated to all other agents is at most $\sum_{j\neq i}w_j n = n - w_i n$.
Since $u_i(M)\geq n$, the total value of remaining items for $i$ is at least $w_i n$, so agent $i$ should be satisfied, a contradiction.
\end{proof}

We know from \Cref{thm:wef10_nmms} that WEF$(1,0)$ implies \nnmms.
The next result shows that WEF$(x,1-x)$ for other values of $x$ performs far worse; in fact, we will establish a stronger statement with WEF$(x,y)$.

\begin{theorem}
\label{thm:wefx1mx_nmms}
For all $n\ge 2$, $x\geq 0$, and $y>0$, WEF$(x,y)$ does not imply any positive approximation of NMMS,
even for $n$ agents with identical valuations.
\end{theorem}

\begin{proof}
Fix $n\geq 2$.
Assume that the agents have identical valuations and there are $m = n$ items:
items $1,\ldots,n-1$ have value $1$ and item $n$ has value $x+y$.
As there are $n$ items with positive value, the MMS, and therefore the NMMS, of every agent is positive.

Pick positive real numbers $w, w'$ such that $w'=\frac{1-w}{n-1}$ and 
\begin{align}
\label{eq:ww'}
    0 < \frac{w}{w'} < \frac{y}{1+y}.
\end{align}
Suppose that agent~$1$ has weight $w$ and every other agent has weight $w'$.
Note that $w<w'$ and $w+(n-1)w'=1$, which implies that $w<\frac{1}{n}<w'$.

Consider an allocation that assigns no item to agent~$1$, two items---one item of value $1$ along with item~$n$ ---to agent~$2$, and one item of value $1$ to each remaining agent. 
We claim that this allocation is WEF$(x,y)$.
The condition holds for agents~$2,3,\dots,n$, since they have the same weights and the difference in their utilities is at most $x+y$.
To establish the claim, it therefore suffices to verify the WEF$(x,y)$ condition for agent~$1$ towards agent~$2$, that is,
\[
\frac{0 + y}{w} \ge \frac{1+(x+y)-x}{w'},
\]
which holds by \eqref{eq:ww'}.
Hence, the allocation under consideration is WEF$(x,y)$.
However, agent $1$ receives utility $0$ in this allocation even though her NMMS is positive.
\end{proof}

Note also that in the instance from this proof, whenever $x+y\leq 1$,
agent $1$ must receive an empty bundle in every WEF$(x,y)$ allocation.
Indeed, by \eqref{eq:ww'} we have $\frac{1+y}{w'} < \frac{(x+y)-x}{w}$, so if agent $1$ receives even item $n$ (which has the lowest value), then for another agent who receives one item with value~$1$, the WEF$(x,y)$ condition towards agent $1$ is violated. 
Therefore, when $x+y\leq 1$ and $y>0$, WEF$(x,y)$ is  incompatible with any positive approximation of NMMS.

\Cref{thm:wefx1mx_nmms} implies that WPROP$(x,y)$ does not yield any positive approximation of NMMS for all fixed $x\geq 0$ and $y>0$.
We show next that the same is true for WPROP$(1,0)$, which reveals a weakness of WPROP$(1,0)$ in comparison to \wpropstar{(1,0)} (cf.~\Cref{lem:strongwprop10_nmms}).

\begin{corollary}
\label{cor:wprop_nmms}
For each $x\in[0,1]$, WPROP$(x,1-x)$ does not imply any positive approximation of NMMS. 
\end{corollary}

\begin{proof}
The case $x\in[0,1)$ follows from \Cref{thm:wefx1mx_nmms},
so assume that $x = 1$.
Suppose that all agents have the same weight and there are $m = n$ items, each of which yields value $1$ to every agent.
Consider an allocation that assigns no item to agent~$1$, two items to agent~$2$, and one item to each remaining agent.
We claim that this allocation is WPROP$(1,0)$.
Indeed, for each $i\in N$, 
we can choose the set $B\subseteq M\setminus  A_i$ to contain a single item with value $1$, so
\begin{align*}
\frac{u_i(A_i)}{w_i}
& \ge 
0
=
 \frac{n - n\cdot 1}{w_N}
 =
\frac{u_i(M) - n \cdot u_i(B)}{w_N}.
\end{align*}
(The only exception is when $n = i = 2$, in which case we have $A_i = M$ and the WPROP$(1,0)$ property also holds.)
In this allocation, agent~$1$ receives utility $0$ even though her NMMS is positive.
\end{proof}

The following is also a consequence of \Cref{thm:wefx1mx_nmms}.

\begin{corollary}
\label{cor:WWEF1-NMMS}
WWEF1 does not imply any positive approximation of NMMS.
\end{corollary}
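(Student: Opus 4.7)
The plan is to observe that Corollary~\ref{cor:WWEF1-NMMS} follows almost immediately from Theorem~\ref{thm:wefx1mx_nmms} together with the implication relation between WEF$(x,1-x)$ and WWEF1 noted in the excerpt. Recall that the discussion right after Theorem~\ref{thm:wwef1-wefxy} establishes that for every $x\in[0,1]$, WEF$(x,1-x)$ implies WWEF1 (since WWEF1 just asks that for each pair of agents one may choose an $x_{i,j}\in\{0,1\}$ for which the WEF$(x_{i,j},1-x_{i,j})$ inequality holds, so fixing the same $x$ uniformly is a special case). In particular, any WEF$(0,1)$ allocation is WWEF1.

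The main step is therefore to instantiate Theorem~\ref{thm:wefx1mx_nmms} at $(x,y)=(0,1)$, which falls within the range $x\ge 0$, $y\in(0,1]$ treated there. The theorem produces, for any $n\ge 2$, an instance with identical valuations and a WEF$(0,1)$ allocation in which some agent receives utility $0$ while her NMMS is strictly positive. By the implication above, this allocation is WWEF1. Hence there exists a WWEF1 allocation whose NMMS-approximation ratio for that agent equals $0$, which is precisely what the corollary asserts.

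The only thing to verify carefully is that the specific construction used in the proof of Theorem~\ref{thm:wefx1mx_nmms} really does give a WEF$(0,1)$ (and not merely WEF$(0,y)$ for some $y<1$) allocation when instantiated at $y=1$; inspecting that proof, the weight bound $w\le 1/((1+1/y)(n-1)+1)$ is strictly positive at $y=1$, so the construction goes through without change. No additional obstacle arises: the argument is essentially a one-line reduction, and I would present Corollary~\ref{cor:WWEF1-NMMS} as a direct consequence stated in a single short paragraph, invoking Theorem~\ref{thm:wefx1mx_nmms} at $(0,1)$ and the fact that WEF$(0,1)\Rightarrow$ WWEF1.
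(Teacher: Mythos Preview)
Your proposal is correct and follows essentially the same route as the paper: the paper simply states that Corollary~\ref{cor:WWEF1-NMMS} is a consequence of Theorem~\ref{thm:wefx1mx_nmms}, and your argument spells out exactly that reduction via the implication WEF$(0,1)\Rightarrow$ WWEF1.
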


\subsection{Maximum weighted Nash welfare and share-based notions}

We have shown in \Cref{sec:WNW} that an MWNW allocation is not guaranteed to satisfy WEF$(x,1-x)$ or WPROP$(x,1-x)$ for each $x\in[0,1]$.
Could MWNW nevertheless imply some approximation of NMMS or WMMS? 
The answer is again negative, even when $n=2$.

\begin{theorem}\label{thm:mwnw_nmms}
Even for $n=2$, MWNW does not imply any positive approximation of NMMS.
\end{theorem}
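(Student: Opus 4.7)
The plan is to exhibit, for every $\alpha>0$, a two-agent instance in which the unique MWNW allocation leaves the smaller-weight agent with utility less than $\alpha$ times her NMMS. The construction uses a highly unequal entitlement split together with three items of a specific value pattern that pushes MWNW's weighted-geometric objective to concentrate both ``large'' items with the larger-weight agent. Fix $\varepsilon\in(0,1/2)$ small, set $w_1=\varepsilon$ and $w_2=1-\varepsilon$ (so $w_N=1$), and take $m=3$ items $g_1,g_2,g_3$ with identical additive valuations $u_1=u_2$ given by $u_i(g_1)=u_i(g_2)=1$ and $u_i(g_3)=\delta$, where $\delta := 4\cdot 2^{-1/\varepsilon}$; note that $\delta<1$ for every $\varepsilon\in(0,1/2)$.

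By inspection of the three $2$-partitions of $\{g_1,g_2,g_3\}$, we have $\operatorname{MMS}_1=1$ (attained by $(\{g_1\},\{g_2,g_3\})$), so $\operatorname{NMMS}_1=2\varepsilon$. The heart of the argument is to verify that the \emph{unique} MWNW allocation is $A_1=\{g_3\}$, $A_2=\{g_1,g_2\}$, giving agent $1$ utility exactly $\delta$. Excluding the two trivial allocations (which yield zero weighted product and hence cannot be MWNW since other allocations achieve positive product), the remaining six allocations reduce, under the symmetry $g_1\leftrightarrow g_2$, to four distinct weighted products:
\[
P_1=(1+\delta)^{1-\varepsilon},\quad P_2=\delta^{\varepsilon}\cdot 2^{1-\varepsilon},\quad P_3=(1+\delta)^{\varepsilon},\quad P_4=2^{\varepsilon}\cdot\delta^{1-\varepsilon}.
\]
Substituting $\delta=4\cdot 2^{-1/\varepsilon}$ gives $P_2=2^{\varepsilon}$ and $P_4=2^{3-1/\varepsilon-\varepsilon}$, while $P_1,P_3=1+O(\delta)\to 1$ as $\varepsilon\to 0$ (since $\delta$ decays super-polynomially in $\varepsilon$); a routine case analysis then confirms $P_2>\max(P_1,P_3,P_4)$ for every $\varepsilon\in(0,1/2)$.

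It follows that $u_1(A_1)=\delta$, hence
\[
\frac{u_1(A_1)}{\operatorname{NMMS}_1}=\frac{\delta}{2\varepsilon}=\frac{2^{1-1/\varepsilon}}{\varepsilon},
\]
which tends to $0$ as $\varepsilon\to 0$ because $2^{-1/\varepsilon}$ decays faster than any polynomial in $\varepsilon$. Given any $\alpha>0$, picking $\varepsilon$ small enough drives this ratio below $\alpha$, proving that MWNW guarantees no positive approximation of NMMS. The main obstacle is the comparison $P_2>P_1$: intuitively, because $w_2\to 1$ MWNW almost exclusively maximizes agent~$2$'s utility, so once $\delta$ exceeds a threshold of order $2^{1-1/\varepsilon}$ the rule prefers to consolidate both unit-valued items with agent~$2$ rather than split them. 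The other two comparisons ($P_2$ vs.\ $P_3$ and $P_4$) reduce to easy variants of the same threshold together with the bound $\delta<1$; the choice $\delta=4\cdot 2^{-1/\varepsilon}$ (twice the threshold) makes $P_2$ safely dominant while keeping $\delta/\varepsilon$ vanishing as $\varepsilon\to 0$.
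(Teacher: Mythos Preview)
Your construction is correct in spirit and yields a valid proof, but your quantifier in the key comparison is too strong. You assert that $P_2>\max(P_1,P_3,P_4)$ holds for \emph{every} $\varepsilon\in(0,1/2)$, yet this fails already at $\varepsilon=0.3$: there $\delta=4\cdot 2^{-10/3}\approx 0.397$, while $P_2>P_1$ is equivalent to $\delta<2^{\varepsilon/(1-\varepsilon)}-1=2^{3/7}-1\approx 0.346$, which is violated; a direct check gives $P_1\approx 1.264>1.231\approx P_2$. Fortunately the theorem only requires the inequality for \emph{sufficiently small} $\varepsilon$, and there your asymptotic reasoning is sound: $\log P_2=\varepsilon\log 2$ while $\log P_1=(1-\varepsilon)\log(1+\delta)\sim\delta=4\cdot 2^{-1/\varepsilon}=o(\varepsilon)$, so indeed $P_2>P_1$ once $\varepsilon$ is small. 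The comparisons $P_2>P_3$ (from $\delta<1$) and $P_2>P_4$ (equivalent to $2\varepsilon+1/\varepsilon>3$, which holds strictly on $(0,1/2)$) are genuinely valid on the whole interval. So the fix is simply to replace ``for every $\varepsilon\in(0,1/2)$'' by ``for all sufficiently small $\varepsilon>0$'', and to spell out the $P_2>P_1$ asymptotics rather than deferring to an unspecified ``routine case analysis''.

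Your route differs from the paper's in a notable way. The paper uses \emph{distinct} valuations---agent~$2$ assigns value $0$ to item~$1$---which forces item~$1$ to agent~$1$ in any MWNW allocation and reduces the comparison to a single inequality (handled via an auxiliary monotone function $f(x)=(2^{x-1}-1)/(1+x)$ and its inverse). Your instance uses \emph{identical} valuations, which is arguably a cleaner and slightly stronger statement (the failure occurs even without heterogeneity), at the cost of having to rule out more candidate allocations. Both arguments hinge on the same phenomenon: as the weight ratio becomes extreme, the MWNW objective is dominated by the large-weight agent, so the small-weight agent is relegated to an item of vanishing value while her NMMS stays bounded away from zero.
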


\begin{proof}
Let $f:[1,\infty)\rightarrow [0,\infty)$ be the function $f(z) = \frac{2^{z-1}-1}{1+z}$ for all $z\in[1,\infty)$.
Note that $f(1) = 0$ and $f(z)\rightarrow\infty$ as $z\rightarrow\infty$.
Moreover, we have
\[
f'(z) = \frac{2^{z-1}((1+z)\ln 2 - 1) + 1}{(1+z)^2} > 0
\]
for all $z\ge 1$, since $(1+z)\ln 2 - 1 \ge 2\ln 2 - 1 > 0$.
Hence, $f$ is strictly increasing in $[1,\infty)$, and the inverse function $f^{-1}$ is well-defined
in $[0,\infty)$.
For $\varepsilon \in (0,0.5)$, we have $\frac{1}{2\varepsilon} > 1$, 
so $f^{-1}\left(\frac{1}{2\varepsilon}\right)>f^{-1}(1) > 3.44$
and
$\frac{1}{f^{-1}\left(\frac{1}{2\varepsilon}\right)} < 0.3$.

Let $n=2$, $m=3$, $\varepsilon < 0.5$ be a small positive constant, and $0 < w < \frac{1}{f^{-1}\left(\frac{1}{2\varepsilon}\right)}$.
The weights are $w_1 = w$ and $w_2 = 1-w$.
The agents' utilities for the items are given by the following table:

\renewcommand{\arraystretch}{1.2}
\begin{center}
\begin{tabular}{|c|c|c|c|}
\hline
Item   & $1$ & $2$ & $3$ \\
\hline
$u_1$ & $2\varepsilon$ & $1+\frac{1}{w}$ & $1+\frac{1}{w}$ \\
\hline
$u_2$ & $0$ & $1$ & $1$ \\
\hline
\end{tabular}
\end{center}

Clearly, every MWNW allocation must assign item~$1$ to agent~$1$, and agent~$2$ must receive either one or both of items~$2$ and $3$, which are interchangeable.
We claim that the unique MWNW allocation assigns item~$1$ to agent~$1$ and items~$2$ and $3$ to agent~$2$.
To prove this claim, it suffices to show that 
\begin{align*}
(2\varepsilon)^w\cdot 2^{1-w} &> \left(2\varepsilon + 1 + \frac{1}{w}\right)^w\cdot 1^{1-w}.
\end{align*}
This reduces to
\[
2^{1-w} > \left(1 + \frac{1+\frac{1}{w}}{2\varepsilon}\right)^w,
\]
or equivalently,
\[
\frac{2^{\frac{1}{w} - 1}-1}{1+\frac{1}{w}} > \frac{1}{2\varepsilon}.
\]
The left-hand side is simply $f(1/w)$.
Since $f^{-1}\left(1/2\varepsilon\right) < 1/w$ and $f$ is strictly increasing, we have $1/2\varepsilon < f\left(1/w\right)$, as desired.
This proves the claim that agent~$1$ receives only item~$1$ in the unique MWNW allocation, and thereby a utility of $2\varepsilon$.

Finally, since $2\varepsilon < 1 < 1 + \frac{1}{w}$, agent~$1$'s MMS is $1+\frac{1}{w}$.
Therefore, her NMMS is $\frac{w}{w+(1-w)}\cdot 2\cdot\left(1+\frac{1}{w}\right) = 2(w+1) > 2$.
This means that in the MWNW allocation, agent~$1$ receives less than $\varepsilon$ times her NMMS.
The conclusion of the theorem follows by taking $\varepsilon\rightarrow 0$.
\end{proof}

The example in \Cref{thm:mwnw_nmms} can also be used to show an analogous result for WMMS.

\begin{theorem}
\label{thm:mwnw_wmms}
Even for $n=2$, MWNW does not imply any positive approximation of WMMS.
\end{theorem}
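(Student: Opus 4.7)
The plan is to reuse the instance constructed in the proof of \Cref{thm:mwnw_nmms} verbatim and argue that in that instance, agent~$1$'s WMMS is bounded below by a positive constant that does not vanish as $\varepsilon\to 0$, while her utility under the unique MWNW allocation is only $2\varepsilon$. Since \Cref{thm:mwnw_nmms} already established that for $0<\varepsilon<0.5$ and $0<w<1/f^{-1}(1/(2\varepsilon))$, the unique MWNW allocation gives item~$1$ to agent~$1$ and items~$2,3$ to agent~$2$ (so agent~$1$ receives utility $2\varepsilon$), the whole task reduces to lower-bounding $\operatorname{WMMS}_1$.

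To bound $\operatorname{WMMS}_1$ from below, I would exhibit a single partition of $M$ and evaluate the weighted-MMS expression on it. A convenient choice is $(Z_1,Z_2)=(\{2\},\{1,3\})$, giving
\[
\operatorname{WMMS}_1
~\ge~
w_1\cdot\min\!\left(\frac{u_1(Z_1)}{w_1},\frac{u_1(Z_2)}{w_2}\right)
=
w\cdot\min\!\left(\frac{1+\tfrac{1}{w}}{w},\,\frac{2\varepsilon+1+\tfrac{1}{w}}{1-w}\right).
\]
For small $w$ the first argument of the minimum behaves like $1/w^2$ whereas the second behaves like $1/w$, so the minimum is attained at the second expression. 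Thus
\[
\operatorname{WMMS}_1
~\ge~
\frac{w(2\varepsilon+1)+1}{1-w}
~\ge~\frac{1}{1-w}
~\ge~1.
\]
Therefore the ratio between agent~$1$'s MWNW utility and her WMMS is at most $2\varepsilon$, which tends to $0$ as $\varepsilon\to 0$.

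The only technical subtlety is that the admissible range of $w$ shrinks with $\varepsilon$ (because $f^{-1}(1/(2\varepsilon))\to\infty$), so the chosen $w$ itself tends to $0$. The main thing to verify is that the lower bound on $\operatorname{WMMS}_1$ remains bounded away from $0$ uniformly in $w$ as $w\to 0$; the computation above shows this since the bound $\tfrac{1}{1-w}$ is actually increasing in $w$ and always at least $1$. No other adjustment to the construction is needed, and the result follows by the same limit argument as in \Cref{thm:mwnw_nmms}.
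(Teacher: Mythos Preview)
Your proposal is correct and takes essentially the same approach as the paper: both reuse the instance from \Cref{thm:mwnw_nmms}, exhibit the same partition (up to swapping the interchangeable items~$2$ and~$3$) to lower-bound $\operatorname{WMMS}_1$, and conclude that this bound is at least~$1$ while agent~$1$'s utility is~$2\varepsilon$. The only cosmetic difference is that the paper simply verifies that \emph{both} terms in the minimum exceed~$1$, whereas you argue (correctly, for the relevant range $w<0.29$) that the second term is the smaller one before bounding it; either route works.
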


\begin{proof}
Consider the instance in \Cref{thm:mwnw_nmms}, where we have shown that agent~$1$ receives utility $2\varepsilon$ in the unique MWNW allocation.
By considering the partition in which agent~$1$ receives item~$3$ and agent~$2$ receives items $1$ and $2$, we find that agent~$1$'s WMMS is at least
\[
\min\left\{1+\frac{1}{w}, \frac{w}{1-w}\cdot\left(1+\frac{1}{w}+2\varepsilon\right)\right\}.
\]
Clearly, $1+\frac{1}{w} > 1$.
Moreover, $\frac{w}{1-w}\cdot\left(1+\frac{1}{w}+2\varepsilon\right) = \frac{w+1+2\varepsilon w}{1-w}$ is greater than $1$ because the numerator is greater than $1$ while the denominator is less than $1$.
Hence, agent~$1$'s WMMS is at least $1$.
The desired conclusion follows by taking $\varepsilon\rightarrow 0$.
\end{proof}

In the unweighted setting, \citet{CaragiannisKuMo19} proved that maximum Nash welfare implies $\Theta(1/\sqrt{n})$-MMS.
\Cref{thm:mwnw_nmms,thm:mwnw_wmms} show that this result cannot be generalized to the weighted setting through NMMS or WMMS.

\subsection{Ordered-EF1 allocations}
\label{sec:OEF1}

\citet{FarhadiGhHa19} established the $1/n$-WMMS guarantee through a well-known algorithm, the \emph{(unweighted) round-robin algorithm}, which simply lets agents take turns picking their favorite item from the remaining items.
A crucial specification required for their guarantee to work is that the agents must take turns in non-increasing order of their weights.
This motivates the following definition.

\begin{definition}[OEF1]\label{def:orderedef1}
An allocation is \textit{ordered-EF1} if 
\begin{enumerate}[label=(\roman*)]
\item it is EF1 when we disregard weights, and
\item the agents can be renumbered so that $w_1 \ge w_2 \ge \dots \ge w_n$ and no agent $i \in N$ has (unweighted) envy towards any later agent $j \in \{i+1, i+2, \dots, n\}$.
\end{enumerate}
\end{definition}

The standard proof that the unweighted round-robin algorithm outputs an EF1 allocation \citep[p.~7]{CaragiannisKuMo19} implies that the algorithm with the aforementioned ordering specification outputs an OEF1 allocation.
Moreover, it follows from \citet{FarhadiGhHa19}'s proof that OEF1 implies $1/n$-WMMS.
We show next that, interestingly, OEF1 also implies \nnmms, which means that the same algorithm guarantees the optimal (worst-case) approximation of both WMMS and NMMS.

\begin{theorem}
\label{thm:oef1_nmms}
Any OEF1 allocation is also \nnmms.
\end{theorem}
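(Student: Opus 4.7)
The plan is to fix an arbitrary agent $i$ (under the OEF1 renumbering with $w_1 \geq \cdots \geq w_n$) and bound $u_i(A_i)$ from below by a combinatorial argument in the spirit of the standard proof that EF1 implies $1/n$-MMS, then convert the resulting bound into an NMMS bound using monotonicity of the weights.

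First, I would exploit the two OEF1 conditions separately. For each $j > i$ the (unweighted) non-envy gives $u_i(A_i) \geq u_i(A_j)$, and for each $j < i$ the EF1 guarantee yields some $g_j \in A_j$ with $u_i(A_i) \geq u_i(A_j) - u_i(g_j)$. Summing these inequalities over all $j \in N$ (including $j = i$ trivially) gives
\begin{align*}
n \cdot u_i(A_i) \;\geq\; u_i(M) - \sum_{j < i} u_i(g_j).
\end{align*}
Since the items $g_j$ are distinct, $\sum_{j < i} u_i(g_j) \leq u_i(B_i^*)$, where $B_i^*$ is the set of the $i-1$ items of highest $u_i$-value.

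Next, I would relate $u_i(M \setminus B_i^*)$ to the MMS via a counting argument: in an optimal MMS $n$-partition for agent $i$, the $i-1$ items of $B_i^*$ lie in at most $i-1$ of the $n$ parts, so at least $n - i + 1$ entire parts are contained in $M \setminus B_i^*$, giving $u_i(M \setminus B_i^*) \geq (n-i+1)\cdot \operatorname{MMS}_i$. Combining this with the previous inequality yields $u_i(A_i) \geq \tfrac{n-i+1}{n}\cdot \operatorname{MMS}_i$.

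It remains to argue $\tfrac{n-i+1}{n} \geq \tfrac{w_i}{w_N}$. This is the step that uses the ordering crucially: since $w_1 \geq \cdots \geq w_i$, we have $w_N \geq w_1 + \cdots + w_i \geq i\cdot w_i$, hence $w_i/w_N \leq 1/i$; and the elementary inequality $(i-1)(n-i) \geq 0$ for $1 \leq i \leq n$ rearranges to $1/i \leq (n-i+1)/n$. Chaining these gives $u_i(A_i) \geq \tfrac{w_i}{w_N}\cdot \operatorname{MMS}_i = \tfrac{1}{n}\operatorname{NMMS}_i$, completing the proof. I do not expect a significant obstacle here; the only subtlety is being careful that the items $g_j$ picked from distinct bundles are automatically distinct (so that their sum of values is bounded by the sum of the top $i-1$ values), and using the OEF1 ordering in the very last step to trade off the linear-in-$i$ MMS factor against the decreasing weight ratio.
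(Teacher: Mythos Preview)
Your proposal is correct and follows essentially the same approach as the paper's proof: summing the OEF1 inequalities to get $n\cdot u_i(A_i)\ge u_i(M)-u_i(H_{i-1})$ for $H_{i-1}$ the top $i-1$ items, bounding $u_i(M\setminus H_{i-1})\ge (n-i+1)\operatorname{MMS}_i$ via the maximin partition, and then using $w_i/w_N\le 1/i\le (n-i+1)/n$. The only cosmetic difference is that the paper writes $B_j\subseteq A_j$ with $|B_j|\le 1$ instead of $g_j\in A_j$ to handle empty bundles, but this does not affect the argument.
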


\begin{proof}
If $m< n$, every agent's NMMS is $0$ and the theorem holds trivially, so assume that $m \ge n$.
Consider an OEF1 allocation $A$, where the agents have been renumbered as in \Cref{def:orderedef1}.
We prove the $1/n$-NMMS condition for an arbitrary agent $i\in N$.
For each $j\in\{i+1,\dots,n\}$, we have
\begin{align*}
    u_i(A_i) &\ge u_i(A_j),
\end{align*}
while for each $j \in \{1,\dots,i-1\}$, there exists $B_j\subseteq A_j$ of size at most $1$ such that
\begin{align*}
    u_i(A_i) &\ge u_i(A_j) - u_i(B_j).
\end{align*}
Adding the above inequalities for all $j\in N\setminus\{i\}$ and then adding $u_i(A_i)$ to both sides, we get
\begin{align*}
    n \cdot u_i(A_i) \ge u_i(M) - \sum_{j=1}^{i-1} u_i(B_j).
\end{align*}
Let $H_{i-1}$ be a set containing the $i-1$ most valuable items in $M$ according to $u_i$, ties broken arbitrarily.
We have $\sum_{j=1}^{i-1}u_i(B_j) \le u_i(H_{i-1})$, so the above inequality implies
\begin{align*}
    u_i(A_i) \ge \frac{u_i(M\setminus H_{i-1})}{n}.
\end{align*}
Thus, in order to establish \nnmms for agent~$i$, it suffices to show that
\begin{align*}
u_i(M\setminus H_{i-1}) \ge n\cdot \frac{w_i}{w_N} \cdot \operatorname{MMS}_i.
\end{align*}

To this end, consider agent~$i$'s maximin partition, i.e., a partition $P^* = \{P_1^*,P_2^*,\dots,P_n^*\}$ such that $u_i(P_j^*) \ge \operatorname{MMS}_i$ for all $j\in[n]$.
Since $H_{i-1}$ contains $i-1$ items, at least $n-i+1$ bundles in $P^*$ do not contain any item from $H_{i-1}$.
This means that $u_i(M\setminus H_{i-1})\ge (n-i+1)\cdot \operatorname{MMS}_i$.
It therefore remains to show that
\begin{align*}
\frac{n-i+1}{n} \ge \frac{w_i}{w_N}.
\end{align*}
Recall that $w_i$ is the $i$-th largest weight, so $w_i/w_N \le 1/i$, and the above inequality reduces to showing that $\frac{n-i+1}{n} \ge \frac{1}{i}$.
This is equivalent to $(n-i)(i-1) \ge 0$, which indeed holds since $1\le i\le n$, completing the proof.
\end{proof}

\subsection{Other relationships between fairness notions}

WMMS and NMMS are both cardinal extensions of MMS to the weighted setting.
However, we show in the next two theorems that the relationship between these two notions is rather weak.

\begin{theorem}
\label{thm:wmms_nmms}
WMMS implies \nnmms, and the factor $1/n$ is tight.
\end{theorem}

Note that the tightness does not follow from \Cref{thm:nmms_best}, since a WMMS allocation does not always exist.

\begin{proof}
By definition, if $A$ is a WMMS allocation, then for each agent $i \in N$,
\begin{align*}
u_i(A_i) 
&\ge w_i\cdot \max_{(Z_1,\dots,Z_n) \in \Pi(M,n)}   \min_{j\in [n]} \frac{u_i(Z_j)}{w_j} \\
&\ge w_i\cdot \max_{(Z_1,\dots,Z_n) \in \Pi(M,n)}   \min_{j\in [n]} \frac{u_i(Z_j)}{w_N} \\
&= \frac{w_i}{w_N} \cdot \max_{(Z_1,\dots,Z_n) \in \Pi(M,n)}   \min_{j\in [n]} u_i(Z_j) = \frac{w_i}{w_N} \cdot \operatorname{MMS}_i = \frac{1}{n}\cdot \operatorname{NMMS}_i.
\end{align*}
This establishes the first part of the theorem.

For the tightness, suppose there are $m=n$ items.
Agent~$1$ has utility~$1$ for every item, while for each $i\ge 2$, agent~$i$ has utility~$1$ for item $i$ and $0$ for all other items.
Suppose that $w_1 = 1-(n-1)\varepsilon$ and $w_2 = \dots = w_n = \varepsilon$ for some small $\varepsilon > 0$.
Consider the allocation that assigns item~$i$ to agent~$i$ for all $i\in N$.
This is clearly WMMS-fair for agents $2,\dots,n$.
Agent~$1$'s WMMS is $1$, and she also receives utility $1$, so the allocation is WMMS-fair for her as well.
On the other hand, her NMMS is $w_1\cdot n$, which converges to $n$ as $\varepsilon\rightarrow 0$.
Hence, WMMS cannot offer more than a \nnmms guarantee.
\end{proof}

\begin{theorem}
\label{thm:nmms_wmms}
For $n\ge 3$, NMMS does not imply any positive approximation of WMMS.
\end{theorem}

\begin{proof}
Let $n\ge 3$, and consider an instance with $n$ agents of weights $w_1 > w_2 > w_3 = \dots = w_n$ such that $w_1 + \dots + w_n = 1$, and $n$ items.
Agent~$1$ has utility $w_2$ for item~$1$, $w_1$ for item~$2$, and $w_j$ for item $j\in\{3,\dots,n\}$.
For each $i\ge 2$, agent~$i$ has utility $1$ for item $i$ and $0$ for all other items.
Consider the allocation that assigns item~$i$ to agent~$i$ for all $i\in N$.
This is clearly NMMS-fair for agents $2,\dots,n$.
Agent~$1$'s WMMS is $w_1$, while her NMMS is $nw_1w_n$.

We choose $w_1,\dots,w_n$ in such a way that $w_1$ is close to $1$ and $w_2$ is sufficiently close to $1-w_1$ so that $w_2 > n w_1 w_n$ (note that once $w_1$ and $w_2$ are fixed, $w_3,\dots,w_n$ are also fixed).
Since agent~$1$ receives utility $w_2$ from the allocation above, this allocation is NMMS-fair.
However, as $w_1\rightarrow 1$, the ratio $w_2/w_1$ converges to $0$, so NMMS-fairness cannot imply any positive approximation of WMMS-fairness.
\end{proof}

\citet{ChakrabortyIgSu20} showed that WEF$(1,0)$ does not guarantee any positive approximation of WMMS.
We now prove a similar negative result for WEF$(x,1-x)$ and WPROP$(x,1-x)$ for all $x\in[0,1]$.
This mirrors the negative result for NMMS (\Cref{thm:wefx1mx_nmms}), with the difference being that WEF$(1,0)$ implies $1/n$-NMMS (\Cref{thm:wef10_nmms}).

\begin{theorem}
\label{thm:wef_wmms}
For each $x\in[0,1]$, WEF$(x,1-x)$ does not imply any positive approximation of WMMS.
\end{theorem}

\begin{proof}
The case $x = 1$ was already handled by Proposition~6.2 of \citet{ChakrabortyIgSu20}, so assume that $x\in [0,1)$. 

Suppose we have $n$ agents with weights satisfying $1 > w_1 > w_2 > \cdots > w_{n-1} > \frac{w_n}{2} \left(1+\sqrt{\frac{5-x}{1-x}} \right)  >0$, and $m = n$ items. 
Note that since $\sqrt{\frac{5-x}{1-x}} > 1$, we have $w_i > w_n$ for all $i\in N\setminus\{n\}$.
For each $i \in N \setminus \{n\}$, let agent $i$ value item $i$ at $1$ and every other item at $0$; let agent $n$ value item $j$ at $w_j$ for all $j\in[n]$. 
Hence, agent~$n$'s WMMS is $w_n > 0$.

Now, consider the allocation that assigns item $i$ to agent $i$ for every $i \in N \setminus \{n-1,n\}$, items $n-1$ and $n$ to agent $n-1$, and no item to agent $n$. 
We claim that this allocation is WEF$(x,1-x)$.
To prove this claim, it suffices to show that the WEF$(x,1-x)$ condition from agent~$n$ towards
all other agents is fulfilled.
The condition towards agent~$j \le n-2$ is
\begin{align*}
\frac{0+(1-x)\cdot w_j}{w_n} \ge \frac{w_j - x \cdot w_j}{w_j},
\end{align*}
which holds since $w_j > w_n$.
The condition from agent~$n$ towards agent~$n-1$ is
\begin{align*}
\frac{0+(1-x)\cdot w_{n-1}}{w_n} \ge \frac{(w_{n-1}+w_n)-x\cdot w_{n-1}}{w_{n-1}}.
\end{align*}
Upon simplification, this becomes
\begin{align*}
(1-x)\gamma_n^2 - (1-x) \gamma_n -1 \ge 0,
\end{align*}
where $\gamma_n := \frac{w_{n-1}}{w_n}$. 
Note that the roots of the equation 
\[
(1-x)t^2 - (1-x)t - 1 = 0
\]
are $t = \frac{1}{2}\left(1\pm\sqrt{\frac{5-x}{1-x}} \right)$.
Since $\gamma_n > \frac{1}{2}\left(1+\sqrt{\frac{5-x}{1-x}} \right)$, the desired inequality holds
and the allocation  is
WEF$(x,1-x)$.
But agent~$n$ receives utility $0$ in this allocation even though her WMMS is positive.
\end{proof}

\begin{corollary}
\label{cor:wpropx1mx_noapprox_wmms}
For each $x\in [0,1]$, WPROP$(x,1-x)$ does not imply any positive approximation of WMMS.
\end{corollary}

\section{Identical Items: Lower and Upper Quota}
\label{sec:identical}

Thus far, we have seen many fairness notions and proved that several of them are incompatible with, or do not imply any approximation of, one another. 
In this section, we add another dimension to the comparison between fairness notions by focusing on the case where all items are identical.
The motivation for studying this case is twofold.
First, due to its simplicity, it is easier to agree on a fairness criterion. 
If the items were divisible, agent~$i$ should clearly receive her \emph{quota} of $q_i := \frac{w_i}{w_N}\cdot m$ items.
With indivisible items, one may therefore expect agent~$i$ to receive either her \emph{lower quota} $\lfloor q_i\rfloor$ or her \emph{upper quota} $\lceil q_i\rceil$.
Second, the case of identical items is practically relevant when allocating parliament seats among states or parties, a setting commonly known as  \emph{apportionment} \citep{BalinskiYo75,BalinskiYo01}. Both quotas are frequently considered in apportionment.

Throughout this section,
we denote an allocation by $(a_1,\dots,a_n)$, where $a_i = |A_i| = $ the number of items that agent~$i$ receives.
Assume without loss of generality that every agent has utility~$1$ for every item.
We determine whether each fairness notion implies lower quota (resp., upper quota) in the identical-item setting, i.e., whether it ensures that each agent $i\in N$ receives at least $\lfloor q_i\rfloor$ items (resp., at most $\lceil q_i\rceil$ items).
Our results are summarized in Table~\ref{table:quota}.

\renewcommand{\arraystretch}{1.2}
\begin{table}[!ht]
\centering
\begin{tabular}{| c | c | c |}
\hline
\textbf{Notion} & \textbf{Lower quota} & \textbf{Upper quota} \\ \hline \hline
WEF$(x,1-x)$ & \begin{tabular}[c]{@{}c@{}}
Yes if $x = 0$
\\No if $x\in (0,1]$
\end{tabular}  & \begin{tabular}[c]{@{}c@{}}
Yes if $x = 1$
\\No if $x\in [0,1)$
\end{tabular} \\ \hline
WPROP$(x,1-x)$ & No & No \\ \hline
MWNW & No & No \\ \hline
WMMS & No & No \\ \hline
NMMS & No & No \\ \hline
OMMS & Yes & No \\ \hline
APS & Yes & No \\ \hline
\end{tabular}
\vspace{2mm}
\caption{Summary of whether each fairness notion implies lower or upper quota in the identical-item setting.
For $n = 2$, MWNW as well as WMMS-, NMMS-, OMMS-, and APS-fairness imply both lower and upper quota.
MWNW satisfies upper quota for $n = 3$.
OMMS- and APS-fairness are equivalent to lower quota for all $n$.
}
\label{table:quota}
\end{table}

\begin{example}
\label{ex:quotas}
To illustrate the ideas behind some of these proofs,
consider an instance with $n=m=3$ and entitlements $4,~1,~1$.
The quotas are $2,~ 0.5,~ 0.5$.
The unique allocation satisfying WEF$(1,0)$ 
is $(1,1,1)$:
any agent that gets no item would have weighted envy towards any agent that gets two or more items, even after removing a single item.
Note that $(1,1,1)$ is also the unique MWNW allocation as well as the unique allocation satisfying WMMS- and NMMS-fairness, since the WMMS and NMMS are positive for all agents. 
However, this allocation violates the lower quota of agent~$1$, which is $2$ items.

In contrast, the unique allocation satisfying WEF$(0,1)$ is $(3,0,0)$:
if agent~$1$ gets two or fewer items, she feels weighted envy---even after getting an additional item---towards another agent who gets (at least) one item.
The allocation $(3,0,0)$ also satisfies OMMS- and APS-fairness, as the OMMS and APS of agents~$2$ and $3$ are both $0$. 
However, it violates the upper quota of agent~$1$. 
\end{example}

We now proceed to the proofs.

\begin{theorem}
\label{thm:wef_quota}
With identical items, WEF$(x,1-x)$ implies upper quota if and only if $x = 1$, and implies lower quota if and only if $x=0$.
\end{theorem}

\begin{proof}
We start with the positive claims.

First, let $x = 1$,
and consider any WEF$(1,0)$ allocation.
For any $i,j\in N$, it holds that $\frac{a_i}{w_i} \ge \frac{a_j-1}{w_j}$, that is,
$a_i\geq w_i \cdot \frac{a_j-1}{w_j}$.
Fixing $j$ and summing this over all $i\in N\setminus\{j\}$, we get
\[
m - a_j \geq (w_N-w_j)\cdot \frac{a_j-1}{w_j}.
\]
Adding $a_j-1$ to both sides yields
\[
m - 1 \geq w_N\cdot \frac{a_j-1}{w_j},
\]
which implies that $a_j \le q_j - \frac{w_j}{w_N} + 1 < \ceil{q_j}+1$.
Since $a_j$ and $\ceil{q_j}$ are integers, we have $a_j \le \ceil{q_j}$, meaning that upper quota is fulfilled.

Next, let $x = 0$, and consider any WEF$(0,1)$ allocation.
For any $i,j\in N$, it holds that $\frac{a_i+1}{w_i} \ge \frac{a_j}{w_j}$, that is,
$w_j\cdot \left(\frac{a_i+1}{w_i}\right) \ge a_j$.
Fixing $i$ and summing this over all $j\in N\setminus\{i\}$, we get
\[
(w_N - w_i)\left(\frac{a_i+1}{w_i}\right) \ge m-a_i.
\]
Adding $a_i+1$ to both sides yields
\[
\frac{w_N}{w_i}\cdot (a_i + 1) \ge m+1,
\]
which implies that $a_i \ge q_i + \frac{w_i}{w_N} - 1 > \lfloor q_i\rfloor - 1$.
Since $a_i$ and $\lfloor q_i\rfloor$ are integers, we have $a_i \ge \lfloor q_i\rfloor$, meaning that lower quota is fulfilled.

For the negative part of the theorem, we prove an even stronger claim: we prove that for each $x>0$, there exists an instance in which no WEF$(x,1-x)$ allocation satisfies lower quota;
and for each $x<1$, there exists an instance in which no WEF$(x,1-x)$ allocation satisfies upper quota.
This implies, in addition to the present theorem, that no rule satisfying both quotas can guarantee WEF$(x,1-x)$ for any $x\in[0,1]$. 

Assume first that $x>0$. Choose some integer $n>2/x$, and let $m=n$. The entitlements are $w_1=2$ and $w_2=\cdots=w_n = \frac{n-2}{n-1}$, so that $w_N = n$.
We claim that in every WEF$(x,1-x)$ allocation, every agent receives exactly one item.
Consider an allocation in which some low-entitlement agent gets no item while agent $1$ gets two or more items. 
In order for the WEF$(x,1-x)$ condition for the low-entitlement agent with respect to agent~$1$ to hold, we need
\begin{align*}
\frac{0+(1-x)}{\frac{n-2}{n-1}} \ge \frac{2-x}{2},
\end{align*}
which is equivalent to $2-nx \ge 0$.
Hence, the condition is violated when $n>2/x$.
By a similar reasoning, every allocation in which some agent receives no item while another agent receives two or more items cannot be WEF$(x,1-x)$.
Therefore, in a WEF$(x,1-x)$ allocation, every agent must get exactly one item. But agent~$1$'s quota is $2$, so lower quota is violated.

Assume now that $x<1$. 
Choose some integer $n$ for which
$\frac{n-1}{(n-1)^2-1} < 1-x$ (such an~$n$ exists because the left-hand side approaches $0$ as $n\to\infty$). Let $m=n$, $w_1=n-1$, and $w_2=\cdots=w_n=1/(n-1)$, so that $w_N=n$. 
Consider an allocation in which some low-entitlement agent gets at least one item, while agent~$1$ gets at most $n-1$ items. 
In order for the WEF$(x,1-x)$ condition for agent~$1$ with respect to the low-entitlement agent to hold, we need
\begin{align*}
\frac{(n-1)+(1-x)}{n-1} \ge \frac{1-x}{\frac{1}{n-1}},
\end{align*}
which is equivalent to $\frac{n-1}{(n-1)^2-1} \ge 1-x$.
Hence, the condition is violated when $\frac{n-1}{(n-1)^2-1} < 1-x$.
Therefore, in a WEF$(x,1-x)$ allocation, agent~$1$ must get all $n$ items. 
But agent~$1$'s quota is $n-1$, so  upper quota is violated.
\end{proof}

\begin{theorem}
\label{thm:wprop_quota}
With identical items, for each $x\in[0,1]$, WPROP$(x,1-x)$ implies neither lower quota nor upper quota,
even for two agents with equal entitlements.
\end{theorem}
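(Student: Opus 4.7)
The plan is to exhibit a single instance with two equally-weighted agents and identical items for which a WPROP$(x,1-x)$ allocation simultaneously violates lower quota for one agent and upper quota for the other, uniformly in $x\in[0,1]$.

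Take $n=2$, $w_1=w_2=1$ (so $w_N=2$), and $m=2$ identical items each of unit value to both agents. Then $q_1=q_2=1$, so the lower and upper quotas both equal $1$. Consider the allocation $A_1=\emptyset$, $A_2=M$. For agent~$1$, choose $B\subseteq M\setminus A_1=M$ with $|B|=1$, so $u_1(B)=1$. Plugging into Definition~\ref{def:wpropxy} with $w_1/w_N=1/2$ and $n=2$, the required inequality becomes
\begin{align*}
u_1(A_1) + (1-x)\cdot u_1(B) \ge \tfrac{1}{2}\bigl(u_1(M) - 2x\cdot u_1(B)\bigr),
\end{align*}
i.e.\ $1-x \ge 1-x$, which holds with equality for every $x\in[0,1]$. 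For agent~$2$, $u_2(A_2)=2 \ge 1 = \tfrac{1}{2}u_2(M)$, so the WPROP$(x,1-x)$ condition is trivially fulfilled (e.g., with $B=\emptyset$).

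Hence the allocation satisfies WPROP$(x,1-x)$ for every $x\in[0,1]$, yet $a_1=0<1=\lfloor q_1\rfloor$ violates lower quota and $a_2=2>1=\lceil q_2\rceil$ violates upper quota. Since one instance suffices, and the computation is completely transparent, there is no real obstacle here; the only thing to double-check is that the factor $n$ in the \wpropxy definition exactly cancels the $x$-contribution when $w_i/w_N=1/n$, which is precisely what makes the bound independent of $x$ and allows a single example to work for the whole family.
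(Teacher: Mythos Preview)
Your proof is correct and essentially identical to the paper's: the paper uses the same instance ($n=m=2$, equal weights, both items to one agent) and the same verification that the WPROP$(x,1-x)$ inequality reduces to $0\ge 1-x-(1-x)$, holding for all $x$. The only cosmetic difference is the weight normalization ($w_1=w_2=1$ versus $w_1=w_2=1/2$), which is immaterial.
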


\begin{proof}
With $m$ identical items, 
the condition for WPROP$(x,y)$ for all $i\in N$ is equivalent to
\begin{align*}
\frac{a_i + y}{w_i} &\ge \frac{m - n x}{w_N},    
\end{align*}
which is the same as
\begin{align}
a_i &\geq \frac{w_i}{w_N}\cdot m - \frac{w_i}{w_N}\cdot n x - y. \label{eq:wpropxy-identical}   
\end{align}
Consider an instance with $n=m=2$ and $w_1=w_2 = 1/2$, and an allocation that gives both items to agent~$2$.
Condition \eqref{eq:wpropxy-identical} is obviously satisfied for agent~$2$, and for agent~$1$ it becomes
\begin{align*}
0\geq 1 - x - y,
\end{align*}
which is satisfied whenever $y=1-x$.
Since both agents have a quota of $1$, both lower and upper quota are violated in this WPROP$(x,1-x)$ allocation.
\end{proof}

\begin{theorem}
\label{thm:mms_quota}
With identical items, for all $n\geq 3$, 
WMMS-fairness and NMMS-fairness imply neither lower quota nor upper quota.
Moreover, there exists an identical-item instance in which a WMMS-fair allocation exists but no allocation providing any positive approximation of WMMS-fairness satisfies lower quota; the same holds for NMMS-fairness.
\end{theorem}

\begin{proof}
For the first statement, let $n\ge 3$ and $m = n-1$. 
Clearly, the WMMS and NMMS of every agent are both $0$ regardless of the weights, so every allocation is WMMS- and NMMS-fair.
However, if all items are given to an agent with the smallest entitlement, upper quota is violated for this agent, since the agent's upper quota is at most $\lceil m/n\rceil = 1$.
Lower quota is also violated in the same allocation as long as there is an agent whose weight is at least $w_N/m$.

We proceed to the second statement.
For WMMS,
let $m=n$ and suppose the entitlements are 
$w_1=2$ and $w_2=\cdots=w_n = \frac{n-2}{n-1}$, so $w_N=n$ and $q_1 = 2$.
Since all agents have the same utility function, a WMMS-fair allocation exists by definition.
Moreover, the WMMS of all agents is positive, since they can partition the items into $n$ bundles with a positive value
(specifically, the WMMS of each agent $i$ is $w_i\cdot \frac{1}{w_1}$, so $\operatorname{WMMS}_1 = 1$ and $\operatorname{WMMS}_2 = \cdots = \operatorname{WMMS}_n = \frac{n-2}{2n-2}$).
Therefore, the unique allocation that provides any positive approximation of WMMS-fairness is the one giving a single item to each agent.
However, in this allocation, the lower quota of agent~$1$ is violated.

For NMMS, let $m=2n-1$ and suppose the entitlements are 
$w_1=n+1$ and $w_2=\cdots=w_n = \frac{n-2}{n-1}$, so  $w_N=2n-1=m$ and $q_1 = w_1=n+1$.
The MMS of every agent is $1$, so 
$\operatorname{NMMS}_1 = \frac{n(n+1)}{2n-1}$, which is less than $n$ for all $n\geq 3$, and $\operatorname{NMMS}_2 = \cdots = \operatorname{NMMS}_n = \frac{n(n-2) }{(2n-1)(n-1)} \in(0,1)$.
There exist NMMS-fair allocations, for example, giving one item to each low-entitlement agent and $n$ items to agent~$1$. 
But every allocation providing a positive approximation to NMMS-fairness must give at least one item to each low-entitlement agent, so the lower quota of agent~$1$ is violated.
\end{proof}

\Cref{thm:mms_quota} requires $n\geq 3$; we show next that this is not a coincidence.

\begin{theorem}
\label{thm:mms_quota_2agents}
With identical items, when $n=2$,
both WMMS-fairness and NMMS-fairness imply both lower and upper quota.
\end{theorem}

\begin{proof}
Let $n = 2$.
Since $a_1 + a_2 = q_1 + q_2 = m$, we have $a_1 \ge \lfloor q_1\rfloor$ if and only if $a_2 \le \lceil q_2 \rceil$.
This means that an allocation either satisfies both lower and upper quotas, or neither.
Hence, it suffices to check one of the two quotas; we will check the lower quota.

Consider the following quantity:
\[
\gamma(A) := \min_{j\in\{1,2\}}\frac{|A_j|}{w_j}.
\]
By definition of WMMS, a WMMS-fair allocation $A$ must maximize $\gamma(A)$.
In more detail, assume that the maximum is $\gamma_{\text{max}}$, and consider an allocation $A'$ with $\gamma(A') < \gamma_{\text{max}}$.
In $A'$, there exists an agent~$j$ with $|A'_j| = \gamma(A')w_j < \gamma_{\text{max}}w_j$.
This agent receives utility less than her WMMS, which is $\gamma_{\text{max}}w_j$, so $A'$ is not WMMS-fair.

Now, consider a WMMS-fair allocation~$A$, which attains $\gamma(A) = \gamma_{\text{max}}$.
If $a_j \le \lfloor q_j\rfloor - 1$ for some $j\in N$, then $a_{3-j} \ge \lceil q_{3-j}\rceil + 1$, so we can obtain a larger $\gamma$ by moving one item from agent~$(3-j)$'s bundle to agent~$j$'s bundle, a contradiction.
Hence, $a_j \ge \lfloor q_j\rfloor$, and lower quota is satisfied.

For NMMS, let $z_i := w_i/w_N$ for each $i\in N$, and note that agent~$i$'s 
MMS is $\floor{m/2}$, so the agent's
NMMS is $2z_i\cdot\floor{m/2} \ge (m-1)z_i$.
Hence, in an NMMS-fair allocation, agent~$i$ receives at least $\lceil (m-1)z_i\rceil$ items.
Since $z_i < 1$, this quantity is at least $\lfloor (m-1)z_i + z_i\rfloor$ = $\lfloor mz_i\rfloor$, which is exactly agent~$i$'s lower quota.
\end{proof}

\begin{theorem}
\label{thm:aps_quota}
With identical items, for all $n\geq 2$,
both OMMS-fairness and APS-fairness are equivalent to
lower quota.
Both notions also imply upper quota when $n=2$, but not when $n\ge 3$.
\end{theorem}

\begin{proof}
We claim that both the OMMS and the APS of each agent $i\in N$ equal $i$'s lower quota $\lfloor q_i\rfloor$.
To see this, in the definition of OMMS, take $\ell = \lfloor q_i\rfloor$ and $d = m$.
With the $d$-partition in which each part contains exactly one item, we find that $\operatorname{MMS}_{i}^{\lfloor q_i\rfloor\text{-out-of-}m}(M) = \lfloor q_i\rfloor$.
Since $\lfloor q_i\rfloor/m \le q_i/m = w_i/w_N$, we have $\operatorname{OMMS}_{i}^{\mathbf{w}}(M) \ge \lfloor q_i\rfloor$.
On the other hand, \citet{BabaioffEzFe21} showed that an agent's OMMS is always at most her APS, which, in turn, is always at most her proportional share; in this case, the proportional share is $q_i$ for agent~$i$.
Since all utilities are integers, the OMMS and APS are both integers, so they both equal $\lfloor q_i\rfloor$.
It follows that OMMS-fairness and APS-fairness are equivalent to
lower quota.

When $n=2$, as in the proof of \Cref{thm:mms_quota_2agents}, upper quota follows immediately from lower quota.
Let $n\ge 3$, $m = n-1$, and suppose that all agents have the same weight.
By the previous paragraph, each agent's OMMS and APS equal the lower quota, which is $0$, so every allocation is OMMS- and APS-fair.
However, if all items are given to the same agent, then upper quota is violated since the agent's upper quota is $1$.
\end{proof}

\begin{theorem}
\label{thm:mwnw_quota}
With identical items, MWNW implies neither lower quota nor upper quota.
\end{theorem}

\begin{proof}
With $m$ identical items, an MWNW allocation corresponds to a partition of $m$ into $a_1+\dots+a_n$ that belongs to
\begin{align*}
\argmax_{
a_1+\dots+a_n=m
}   \prod_{i=1}^{n} a_i^{w_i}
\end{align*}

For lower quota, for each $n\geq 3$, consider an instance with $m=n$ items. 
Every MWNW allocation gives a single item to each agent regardless of the entitlements. 
This violates lower quota for agents with a sufficiently high entitlement---specifically, for every agent~$i$ with $w_i/w_N \ge 2/n$.

For upper quota, consider an instance with $n = 4$ agents, with $w_1=w$, $w_2=w_3=w_4=1$, and $m=4(w+3)/3 = 4 + 4w/3$. 
The value of $w$ will be chosen to ensure that $m$ is an integer.
The upper (and lower) quota of agent~$1$ is $4w/3=m-4$. 
Satisfying agent~$1$'s upper quota means that agents $2$, $3$, and $4$ together must get at least four items. Clearly, in an MWNW allocation, each of them should get at least one item.

\begin{itemize}
\item If together they get (exactly) four items, then the WNW is $(m-4)^w\cdot 2$.
\item If together they get three items, then the WNW is 
$(m-3)^w\cdot 1$.
\end{itemize}

We will choose $w$ so that the latter quantity is larger, that is,
\begin{align*}
(m-3)^w\cdot 1 &> (m-4)^w\cdot 2,
\end{align*}
which is the same as
\begin{align*}
\left(\frac{4w}{3}+1\right)^w &> \left(\frac{4w}{3}\right)^w\cdot 2,    
\end{align*}
or equivalently,
\begin{align*}
\left(1+\frac{3}{4w}\right)^w &> 2.    
\end{align*}
This is satisfied, for example, for $w=6$, in which case $m=12$. 
Agent~$1$'s quota is $8$.
In order to find an MWNW allocation, we consider several cases regarding the value of $a_1$.
\begin{itemize}
\item If $a_1 = 9$, then the maximum possible WNW is $9^6\cdot 1\cdot 1\cdot 1 = 531441$.
\item If $a_1 = 8$, then the maximum possible WNW is $8^6\cdot 1\cdot 1\cdot 2 = 524288$.
\item If $a_1 = 7$, then the maximum possible WNW is $7^6\cdot 1\cdot 2\cdot 2 = 470596$.
\item If $a_1 = 6$, then the maximum possible WNW is $6^6\cdot 2\cdot 2\cdot 2 = 373248$.
\end{itemize}
Other options clearly lead to a smaller weighted Nash welfare. So agent~$1$ gets $9$ items in every MWNW allocation, which is above her upper quota.
\end{proof}

Our examples in \Cref{thm:mwnw_quota} use $3$ and $4$ agents, respectively. 
We show next that for $n=2$ agents, MWNW satisfies both quotas. 
In fact, we prove a slightly more general result.

\begin{theorem}
With identical items, for every pair of agents $i$ and $j$ (possibly among other agents), in every MWNW allocation,
\begin{align*}
\floor{\frac{w_i(a_i+a_j)}{w_i+w_j}} 
\leq 
a_i
\leq 
\ceil{\frac{w_i(a_i+a_j)}{w_i+w_j}}.
\end{align*}
\end{theorem}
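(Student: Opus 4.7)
The proposal is to use the standard ``single-swap'' optimality argument for MWNW, combined with the log-concavity of the two-agent objective, and handle the boundary cases with the lexicographic part of the MWNW definition.

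First I would fix two agents $i,j$, write $s := a_i + a_j$ and $q := w_i s / (w_i + w_j)$, and observe that with identical items the contribution of agents $i,j$ to the weighted Nash product is $a_i^{w_i}(s - a_i)^{w_j}$ while the contribution of all other agents is fixed. Hence, among allocations that keep all other bundles unchanged, an MWNW allocation must pick $a_i$ that maximizes
\[
f(a) := a^{w_i}(s-a)^{w_j}
\]
over the set of admissible integer values of $a$.

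Next I would observe that $\log f$ is strictly concave on $(0,s)$ with unique real maximizer $a^* = q$, so the integer maximizers of $f$ over $\{1,2,\dots,s-1\}$ form a contiguous set contained in $\{\lfloor q\rfloor,\lceil q\rceil\}$. This already gives $\lfloor q \rfloor \le a_i \le \lceil q\rceil$ whenever both $a_i \ge 1$ and $a_j \ge 1$, because then every integer in $\{1,\dots,s-1\}$ is an admissible value for $a_i$ (any redistribution between $i$ and $j$ keeps all utilities positive, so the tie-breaking clause is not triggered).

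The main obstacle, and the only remaining case, is when $a_i = 0$ or $a_j = 0$, because then swapping an item between $i$ and $j$ could change the set of agents with positive utility, and the lexicographic first stage of MWNW forbids such swaps. Here I would argue directly: if $s = 0$ the claim is vacuous. If $s \ge 1$ and, say, $a_j = 0$, then by the lexicographic rule MWNW prefers to give a positive utility to agent $j$ whenever possible; so it must be infeasible to give $a_j \ge 1$ while keeping every currently-positive agent positive. This forces $a_i = s = 1$ (otherwise we could transfer one item from $i$ to $j$), and then $q = w_i/(w_i+w_j) \in (0,1)$ gives $\lfloor q\rfloor = 0 \le 1 \le 1 = \lceil q\rceil$. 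The case $a_i = 0$ is symmetric. Combining the two cases yields the theorem.
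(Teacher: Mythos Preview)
Your proof is correct and follows essentially the same approach as the paper: fix all other bundles, reduce to maximizing the two-agent weighted product $a^{w_i}(s-a)^{w_j}$, and use (log-)concavity to pin the integer maximizer to $\{\lfloor q\rfloor,\lceil q\rceil\}$. If anything, you are more careful than the paper's own argument, which glosses over the case $a_i=0$ or $a_j=0$; your treatment of that boundary via the lexicographic tie-breaking rule (forcing $s\le 1$) is exactly the missing detail.
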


\begin{proof}
Fix the bundles allocated to all agents except $i$ and $j$, so that $a_i + a_j$ is fixed too. Denote this sum by $m_0$ and denote $w := \frac{w_i}{w_i+w_j}$. 
From the definition of an MWNW allocation, it follows that for the (fixed) integer $m_0$, $(a_i,a_j)$ is an integer partition of $m_0$ that maximizes the weighted Nash welfare restricted to agents $i$ and $j$. In other words, $a_i$ is an integer maximizing the following function:
\begin{align*}
f(a_i) := a_i^{w}\cdot (m_0-a_i)^{1-w}.
\end{align*}
By taking the derivative of this function, one finds that, if $a_i$ could be any real number, then $f$ is uniquely maximized at $a_i = w\cdot m_0$. Moreover, the function is concave, so it increases up to the maximum and then decreases. Therefore, 
$f(a_i) < f(\floor{w m_0})$ for all $a_i < \floor{w m_0}$, and 
$f(a_i) > f(\ceil{w m_0})$ for all $a_i > \ceil{w m_0}$. 
So the maximum weighted Nash welfare is attained when either $a_i=\floor{w m_0}$ or $a_i=\ceil{w m_0}$, as claimed.
\end{proof}

\begin{corollary}
\label{cor:mwnw_quota_2agents}
With $n=2$ agents and identical items, MWNW satisfies both lower and upper quota.
\end{corollary}

Interestingly, despite failing lower quota in the case of three agents, MWNW satisfies upper quota in this case.\footnote{We are grateful to Ewan Delanoy for help with this proof: https://math.stackexchange.com/a/4271210/29780.}

\begin{theorem}
With $n=3$ agents and identical items, MWNW satisfies upper quota.
\end{theorem}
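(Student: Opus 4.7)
The plan is to proceed by contradiction. By relabeling, it suffices to show that no MWNW allocation $(a_1, a_2, a_3)$ has $a_1 \ge \lceil q_1 \rceil + 1$. The cases $m \le 2$ follow by direct inspection: in any MWNW allocation no agent receives more than one item, which respects upper quota since $\lceil q_i \rceil \ge 1$ for every $i$ with $w_i > 0$. So I focus on $m \ge 3$, where giving a single item to each agent yields positive weighted product; the tiebreak rule in the definition of MWNW (first maximize the number of agents with positive utility) then guarantees $a_i \ge 1$ for every $i$, so in particular $a_1 \ge 2$.

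The core argument uses the local optimality of MWNW. For each $j \in \{2, 3\}$, moving one item from agent $1$ to agent $j$ cannot increase the weighted product:
\begin{align*}
a_1^{w_1} a_j^{w_j} \ge (a_1 - 1)^{w_1}(a_j + 1)^{w_j},
\end{align*}
which rearranges to $w_1 \ln(1 + 1/(a_1 - 1)) \ge w_j \ln(1 + 1/a_j)$. I would then chain this with two sharp analytic inequalities valid for $x \ge 1$: the upper bound $\ln(1 + 1/x) < 1/\sqrt{x(x+1)}$, and the classical strict lower bound $\ln(1 + 1/x) > 1/(x + 1/2)$ (equivalent to $(1 + 1/x)^{x + 1/2} > e$). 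The chain yields
\begin{align*}
\frac{w_1}{\sqrt{(a_1 - 1)\, a_1}} \;>\; \frac{w_j}{a_j + 1/2} \qquad (j \in \{2, 3\}).
\end{align*}

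Using the trivial estimate $\sqrt{(a_1 - 1) a_1} > a_1 - 1$, this gives $w_1(a_j + 1/2) > w_j(a_1 - 1)$ for each $j$. Summing the two inequalities and substituting $a_2 + a_3 = m - a_1$ yields $w_1(m - a_1 + 1) > (w_2 + w_3)(a_1 - 1)$, which rearranges to $w_N a_1 < w_1 m + w_N$, that is, $a_1 < q_1 + 1$. Since $a_1$ is an integer, this forces $a_1 \le \lceil q_1 \rceil$, contradicting the assumption.

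The main obstacle will be establishing the sharp inequality $\ln(1 + 1/x) < 1/\sqrt{x(x+1)}$ for $x \ge 1$, equivalently $\sqrt{y}\,\ln y < y - 1$ for $y > 1$. I would prove this via calculus on $g(y) = y - 1 - \sqrt{y}\,\ln y$: direct computation shows $g(1) = g'(1) = g''(1) = 0$ while $g'''(1) > 0$, and positivity on $(1, \infty)$ then follows from monotonicity of a sufficiently high derivative. Without this sharpness, the naive bounds $\ln(1+1/x) \le 1/x$ and $\ln(1+1/x) \ge 1/(x+1)$ only yield $a_1 < q_1 + 1 + w_1/w_N$, which falls short of upper quota by up to $w_1/w_N$; tightness of the analytic bound is therefore essential to the argument.
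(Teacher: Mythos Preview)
Your proof is correct and follows essentially the same strategy as the paper: both start from local optimality of MWNW (moving one item from the over-allocated agent to each of the other two cannot increase the weighted product), then bound $\ln(1+1/x)$ above on the large-bundle side and below on the small-bundle side, sum the two resulting inequalities, and conclude $a_1 < q_1 + 1$. The paper packages the algebra via intermediate quantities $S_1,S_2,S_3$, whereas you sum the two per-agent inequalities directly; your route is arguably cleaner.

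One remark: your detour through the sharp bound $\ln(1+1/x) < 1/\sqrt{x(x+1)}$ is unnecessary. After you follow it with $\sqrt{(a_1-1)a_1} > a_1-1$, the net effect on the agent-$1$ side is exactly $\ln(1+1/(a_1-1)) < 1/(a_1-1)$, which is nothing more than the elementary bound $\ln(1+t) < t$; this is precisely what the paper uses (in the form $\log z \le z-1$). The load-bearing analytic inequality, in both your argument and the paper's, is the \emph{lower} bound $\ln(1+1/x) > 1/(x+\tfrac12)$ applied on the agents-$2,3$ side; your diagnosis that the naive pair $\ln(1+1/x)\le 1/x$ and $\ln(1+1/x)\ge 1/(x+1)$ falls short by $w_1/w_N$ is correct, but the fix is to sharpen the lower bound, not the upper one. (Incidentally, your proposed calculus verification of $\sqrt{y}\ln y < y-1$ is easier than you suggest: the second derivative of $g(y)=y-1-\sqrt{y}\ln y$ is $\frac{\ln y}{4y^{3/2}}>0$ for $y>1$, so $g$ is convex with $g(1)=g'(1)=0$, and no third derivative is needed.)
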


\begin{proof}
Assume without loss of generality that $w_N = 1$.
If $m\le 2$, each agent gets at most one item in every MWNW allocation due to the tie-breaking convention, and upper quota holds trivially.
Assume therefore that $m\ge 3$.
Suppose that an MWNW allocation assigns $a_i$ items to agent~$i$.
Since the maximum weighted Nash welfare is nonzero, we must have $a_i\ge 1$ for all $i$.
By symmetry, it suffices to show that $a_3 \le \lceil w_3m\rceil$. 
This is clearly true if $a_3 = 1$, so assume that $a_3 \ge 2$.
We will show that $a_3-1 < w_3m$, which implies the desired conclusion.

Since the allocation $(a_1,a_2,a_3)$ yields a higher or equal weighted Nash welfare than $(a_1+1,a_2,a_3-1)$, we have
\[
a_1^{w_1} a_2^{w_2} a_3^{w_3} \ge (a_1+1)^{w_1} a_2^{w_2} (a_3-1)^{w_3}.
\]
This can be rewritten as
\begin{equation}
\label{eq:weight1}
w_1 \le w_3\cdot \frac{\ln\left(\frac{a_3}{a_3-1}\right)}{\ln\left(\frac{a_1+1}{a_1}\right)}.
\end{equation}
Similarly, comparing the allocation $(a_1,a_2,a_3)$ with $(a_1,a_2+1,a_3-1)$, we get
\begin{equation}
\label{eq:weight2}
w_2 \le w_3\cdot \frac{\ln\left(\frac{a_3}{a_3-1}\right)}{\ln\left(\frac{a_2+1}{a_2}\right)}.
\end{equation}
Adding up \eqref{eq:weight1} and \eqref{eq:weight2} and recalling that $w_1+w_2+w_3 = 1$, we get $w_3 \ge S_1$, where
\[
S_1 := \frac{1}{1 + \frac{\ln\left(\frac{a_3}{a_3-1}\right)}{\ln\left(\frac{a_1+1}{a_1}\right)} + \frac{\ln\left(\frac{a_3}{a_3-1}\right)}{\ln\left(\frac{a_2+1}{a_2}\right)}}.
\]
In order to show that $a_3-1 < w_3m$, it suffices to show that $S_2 > 0$, where
\[
S_2 := S_1 - \frac{a_3-1}{m} = \frac{1}{1 + \frac{\ln\left(\frac{a_3}{a_3-1}\right)}{\ln\left(\frac{a_1+1}{a_1}\right)} + \frac{\ln\left(\frac{a_3}{a_3-1}\right)}{\ln\left(\frac{a_2+1}{a_2}\right)}} - \frac{a_3-1}{a_1+a_2+a_3}.
\]

Let us write $z = \frac{a_3}{a_3-1}$, so that $a_3 = \frac{z}{z-1}$; we have $z\in(1,2]$.
We can rewrite $S_2$ as
\[
S_2 = \frac{1}{1 + \frac{\ln z}{\ln\left(\frac{a_1+1}{a_1}\right)} + \frac{\ln z}{\ln\left(\frac{a_2+1}{a_2}\right)}} - \frac{1}{(a_1+a_2+1)z - (a_1+a_2)}.
\]
It therefore suffices to show that $S_3 > 0$, where
\begin{align*}
S_3 
&:= ((a_1+a_2+1)z - (a_1+a_2)) - \left(1 + \frac{\ln z}{\ln\left(\frac{a_1+1}{a_1}\right)} + \frac{\ln z}{\ln\left(\frac{a_2+1}{a_2}\right)}\right) \\
&= (a_1+a_2+1)(z-1) - \left(\frac{\ln z}{\ln\left(\frac{a_1+1}{a_1}\right)} + \frac{\ln z}{\ln\left(\frac{a_2+1}{a_2}\right)}\right) \\
&\ge (a_1+a_2+1)(z-1) - \left(\frac{z-1}{\ln\left(\frac{a_1+1}{a_1}\right)} + \frac{z-1}{\ln\left(\frac{a_2+1}{a_2}\right)}\right) \\
&= (z-1)\left(\left(a_1 - \frac{1}{\ln\left(\frac{a_1+1}{a_1}\right)}\right) + \left(a_2 - \frac{1}{\ln\left(\frac{a_2+1}{a_2}\right)}\right) + 1 \right);
\end{align*}
for the inequality we use the well-known fact that $1+\ln x \le x$, which holds for all $x > 0$.
Hence,
\[
S_3 \ge (z-1)(f(a_1) + f(a_2) + 1),
\]
where $f(x) := x - \frac{1}{\ln\left(\frac{x+1}{x}\right)}$.
Assume without loss of generality that $f(a_1) \le f(a_2)$.
In order to show that $S_3 > 0$, it remains to show that $2f(a_1) + 1 > 0$, that is,
$2m_1 + 1 > \frac{2}{\ln\left(\frac{a_1+1}{a_1}\right)}$.
This is equivalent to $g(a_1) > 0$, where
\[
g(x) := \ln\left(1 + \frac{1}{x}\right) - \frac{2}{2x+1}.
\]
The derivative of $g$ is
\[
g'(x) = -\frac{1}{x(x+1)} + \frac{4}{(2x+1)^2} = -\frac{1}{x(x+1)(2x+1)^2},
\]
which is negative for every $x > 0$.
This means that $g$ is decreasing in the range $[1,\infty)$, and so $g(a_1) > \lim_{x\rightarrow\infty}g(x) = 0$, completing the proof.
\end{proof}

\subsection{Weighted egalitarian rule}
\label{sub:weg}

No notion that we have seen so far guarantees both lower and upper quotas.
In light of this, we introduce a new rule based on the well-known \emph{leximin} principle \citep{Moulin03}. With equal entitlements, leximin aims to maximize the smallest utility, then the second smallest utility, and so on. 
To extend it to the setting with different entitlements, we have to carefully consider how the utilities should be normalized.
A first idea is to normalize the agents' utilities by their proportional share, that is: 
\begin{align*}
\argmax_{(A_1,\dots,A_n)}\leximin_{i\in N} \frac{w_N\cdot u_i(A_i)}{w_i \cdot u_i(M)},
\end{align*}
where ``leximin'' means that we maximize the smallest value, then the second smallest, and so on.
This idea is similar to the one used in the asymmetric leximin bargaining solution of \citet{driesen2012asymmetric}.
However, the resulting rule yields a blatantly unfair outcome when there is a single item and two agents: since the smallest utility is always $0$, the item is allocated to the agent with a \emph{smaller} entitlement, as this makes the ratio $w_N/w_i$ larger. 
Therefore, we instead maximize the minimum \emph{difference} between the agents' normalized utilities and their proportional shares.

\begin{definition}[WEG]
A \emph{weighted egalitarian} allocation is an allocation in 
\begin{align*}
\argmax_{(A_1,\dots,A_n)} \leximin_{i\in N} \bigg(\frac{u_i(A_i)}{u_i(M)} - \frac{w_i}{w_N}\bigg).
\end{align*}
\end{definition}

This rule is similar to the  ``Leximin rule'' for apportionment \citep{biro2015fair}; the latter rule minimizes the leximin vector of the \emph{departures}, defined as the absolute difference $\left|\frac{w_i\cdot  u_i(M)}{w_N \cdot u_i(A_i)} - 1\right|$.

Note that in the instance with one item and two agents, a WEG allocation gives the item to the agent with a larger entitlement, as one would reasonably expect. Moreover, in \Cref{ex:quotas}, 
the WEG allocations are $(2,1,0)$ and $(2,0,1)$, which satisfy both quotas. The next theorem shows that this holds in general.

\begin{theorem}
\label{thm:weg_quota}
With identical items, every WEG allocation satisfies both lower and upper quota.
\end{theorem}

\begin{proof}
With $m$ identical items, a WEG allocation corresponds to a partition of $m$ into $a_1+\dots+a_n$ that belongs to
\begin{align*}
\argmax_{
a_1+\dots+a_n=m
}   \leximin_{i\in N}  \left(a_i - q_i\right),
\end{align*}
where $q_i = \frac{w_i}{w_N}\cdot m$.
Let $d_i := a_i - q_i$.
For every agent $i$:
\begin{itemize}
\item $d_i > -1$ if and only if $i$ gets at least her lower quota $\floor{q_i}$;
\item $d_i < 1$ if and only if $i$ gets at most her upper quota 
$\ceil{q_i}$.
\end{itemize}
There exists an allocation in which every agent gets at least her lower quota, so $\min_{i\in N} d_i > -1$ in such an allocation. 
Therefore, the same must hold for a WEG allocation, so this allocation satisfies lower quota for all agents.

If upper quota is violated for some agent $i$, then $d_i \ge 1$ and there is another agent $j$ for whom $d_j<0$. 
If a single item is moved from $i$ to $j$, then  $d_i$ remains nonnegative while $d_j$ increases, so the leximin vector strictly improves. Therefore, a WEG allocation satisfies upper quota for all agents.
\end{proof}

\begin{remark}
\label{rem:weg}
By the negative part of the proof of Theorem~\ref{thm:wef_quota}, 
since WEG satisfies both quotas, it cannot imply WEF$(x,1-x)$ for any $x\in[0,1]$.
Similarly, 
by Theorem~\ref{thm:mms_quota}, it cannot imply any positive approximation to WMMS or NMMS.

On the positive side, since WEG satisfies lower quota,
by Theorem \ref{thm:aps_quota}, WEG implies APS-fairness and OMMS-fairness.
In Appendix~\ref{app:WEG-binary},
we show that this implication holds even in the more general setting of \emph{binary} valuations.
\end{remark}

In contrast to share-based notions, a WEG allocation always exists by definition.
Moreover, it is relatively easy to explain such an allocation. 
For example, 
if a WEG allocation  in a particular instance yields a minimum difference  $\frac{u_i(A_i)}{u_i(M)} - \frac{w_i}{w_N} = -0.05$, one can explain to the agents that ``each of you receives only $5\%$ less than your proportional share (or better), and there is no allocation with a smaller deviation''. 
This makes the egalitarian approach attractive for further study in the context of unequal entitlements.

\section{Experiments}
\label{sec:experiments}
As we saw in Section~\ref{sec:fairness-up-to-1}, for each $y\in[0,1]$, the divisor picking sequence with function $f(t) = t+y$ satisfies WEF$(1-y,y)$ and WPROP$(1-y,y)$, and is the only divisor picking sequence to do so.
Picking sequences with smaller $y$ favor lower-entitlement agents, while those with larger $y$ treat higher-entitlement agents better.
The choice of picking sequence could therefore significantly influence the utilities that different agents receive from the resulting allocation, and an argument can be made for each choice using the corresponding WEF$(1-y,y)$ and WPROP$(1-y,y)$ notions, which are approximations to weighted envy-freeness and weighted proportionality, respectively.
In this section, we empirically study the likelihood that a picking sequence produces an allocation satisfying one of the following benchmarks \emph{exactly}: weighted envy-freeness, weighted proportionality, WMMS-fairness, and NMMS-fairness.
This provides us with a family of criteria for judging which picking sequence may be considered ``fairer'' than others.

For our experiments, we use the parameters $n = 3$ and $m\in\{6,8,10\}$.
We examine two distributions from which the utility of each agent for each item is independently drawn: the uniform distribution on the interval $[0,1]$ and the exponential distribution with mean~$1$.
The agents' weights are drawn uniformly at random from $[0,1]$.
We consider picking sequences $f(t) = t + y$ for $y\in\{0, 0.05, 0.1, \dots, 0.95, 1\}$; in case of ties in the highest ratio $f(t_i)/w_i$, we break ties in favor of larger $w_i$.
For each of these combinations and each fairness benchmark, we generated a number of instances and recorded the percentage of returned allocations that meet the benchmark.
In particular, for weighted envy-freeness and weighted proportionality we generated 100,000 instances, while for WMMS-fairness and NMMS-fairness we produced 20,000 instances due to the high running time required to compute the agents' WMMS and NMMS.
The results are shown in Figures~\ref{fig:WEF}--\ref{fig:NMMS}, and our code can be found at \url{https://github.com/erelsgl/fairness-with-entitlements}.

Intuitively, since medium values of $y$ such as $y = 0.5$ provide a natural middle ground between satisfying high-weight and low-weight agents, one may think that such values should perform optimally with respect to common fairness benchmarks.
As Figure~\ref{fig:WEF} shows, this is largely true when the benchmark is weighted envy-freeness: the curves first rise and then fall as one moves from $y = 0$ to $y = 1$.
Note also that the percentage of returned allocations that are weighted envy-free clearly increases as $m$ grows---with a larger number of items, there is more opportunity for any envy that arises to be eliminated as the agents pick their favorite items.
Furthermore, the percentage of weighted envy-free allocations is higher when the utilities are drawn from the exponential distribution than when they are drawn from the uniform distribution.
A possible explanation is that the exponential distribution allows a greater chance that, for each agent, some item is highly valuable to the agent (from the ``long tail'' of the distribution) and not so valuable to the remaining agents; allocating such an item to the agent immediately eliminates envy.

On the other hand---and perhaps surprisingly---as can be seen in Figures~\ref{fig:WPROP}--\ref{fig:NMMS}, picking sequences with low values of~$y$ fare better when it comes to weighted proportionality, WMMS-fairness, and NMMS-fairness. 
Given that these picking sequences favor lower-weight agents, our results indicate that satisfying such agents is more helpful for
fairness benchmarks based on a utility threshold for each agent 
than it is for weighted envy-freeness.
Providing a theoretical explanation of this phenomenon is an interesting direction for future research.\footnote{Theorems~\ref{thm:wef10_nmms} and \ref{thm:wefx1mx_nmms} provide a partial explanation concerning NMMS: the only $y$ for which WEF$(1-y,y)$, which is satisfied by the divisor picking sequence with $f(t) = t+y$, implies a positive approximation of NMMS is $y=0$.}

\begin{figure}
	\centering
	\begin{subfigure}{.5\textwidth} 
		\centering
\begin{tikzpicture}[scale=0.85]
\begin{axis}[
    title={(a) Uniform distribution on $[0,1]$},
    xlabel={},
    ylabel={},
    xmin=0, xmax=1,
    ymin=0, ymax=100,
    xtick= {0,0.2,0.4,0.6,0.8,1},
    ytick={0,20,40,60,80,100},
    legend pos=north west,
    ymajorgrids=true,
		xmajorgrids=true,
    grid style=dashed,
]

\addplot[
	    color=black,
	    mark=triangle*,
	    mark size=3,
		fill opacity=0.8,
  	    draw opacity=0.8,
	    ]
	    coordinates {
	    (0.0,26.773)(0.05,27.199)(0.1,27.52)(0.15,27.953)(0.2,28.278)(0.25,28.734)(0.3,28.999)(0.35,29.369)(0.4,29.699)(0.45,29.766)(0.5,29.859)(0.55,29.872)(0.6,29.404)(0.65,28.392)(0.7,27.783)(0.75,27.059)(0.8,26.305)(0.85,25.445)(0.9,24.215)(0.95,23.15)(1.0,22.331)
	    };
	    \addlegendentry{$m=6$}

\addplot[
    color=red,
    mark=square*,
		fill opacity=0.8,
  	draw opacity=0.8,
    ]
    coordinates {
   (0.0,40.826)(0.05,41.457)(0.1,42.181)(0.15,43.161)(0.2,44.23)(0.25,44.979)(0.3,46.049)(0.35,47.418)(0.4,48.019)(0.45,48.657)(0.5,48.658)(0.55,48.361)(0.6,48.28)(0.65,47.385)(0.7,46.348)(0.75,45.155)(0.8,43.73)(0.85,42.243)(0.9,40.839)(0.95,39.495)(1.0,37.372)
    };
    \addlegendentry{$m=8$}
    
\addplot[
    color=blue,
    mark=otimes*,
		fill opacity=0.8,
  	draw opacity=0.8,
    ]
    coordinates {
    (0.0,52.167)(0.05,52.926)(0.1,54.358)(0.15,55.352)(0.2,56.368)(0.25,58.164)(0.3,59.333)(0.35,60.762)(0.4,62.043)(0.45,62.771)(0.5,63.169)(0.55,63.404)(0.6,63.105)(0.65,62.462)(0.7,61.561)(0.75,60.162)(0.8,58.362)(0.85,56.609)(0.9,54.823)(0.95,52.596)(1.0,50.743)
    };
    \addlegendentry{$m=10$}
\end{axis}
\end{tikzpicture}
\end{subfigure}\begin{subfigure}{.5\textwidth} 
	\centering
\begin{tikzpicture}[scale=0.85]
\begin{axis}[
    title={(b) Exponential distribution with mean $1$},
    xlabel={},
    ylabel={},
    xmin=0, xmax=1,
    ymin=0, ymax=100,
    xtick= {0,0.2,0.4,0.6,0.8,1},
    ytick={0,20,40,60,80,100},
    legend pos=south west,
    ymajorgrids=true,
		xmajorgrids=true,
    grid style=dashed,
]

\addplot[
	    color=black,
	    mark size=3,
        mark=triangle*,
        fill opacity=0.8,
        draw opacity=0.8,
	    ]
	    coordinates {
	    (0.0,38.942)(0.05,38.983)(0.1,39.428)(0.15,39.91)(0.2,40.248)(0.25,40.184)(0.3,40.701)(0.35,40.561)(0.4,39.76)(0.45,39.293)(0.5,38.287)(0.55,37.061)(0.6,36.532)(0.65,35.02)(0.7,33.597)(0.75,32.345)(0.8,31.069)(0.85,29.932)(0.9,28.788)(0.95,27.826)(1.0,26.317)
    	   };
	    \addlegendentry{$m=6$}

\addplot[
    color=red,
    mark=square*,
    fill opacity=0.8,
  	draw opacity=0.8,
    ]
    coordinates {
    (0.0,53.234)(0.05,53.613)(0.1,54.286)(0.15,55.142)(0.2,55.945)(0.25,56.8)(0.3,57.578)(0.35,57.816)(0.4,57.557)(0.45,56.906)(0.5,56.114)(0.55,54.923)(0.6,53.858)(0.65,52.815)(0.7,50.643)(0.75,49.135)(0.8,47.544)(0.85,46.131)(0.9,44.581)(0.95,42.933)(1.0,41.536)
    };
    \addlegendentry{$m=8$}
    
\addplot[
    color=blue,
    mark=otimes*,
	fill opacity=0.8,
  	draw opacity=0.8,
    ]
    coordinates {
    (0.0,63.459)(0.05,64.142)(0.1,64.628)(0.15,65.793)(0.2,67.198)(0.25,68.428)(0.3,69.186)(0.35,69.712)(0.4,69.769)(0.45,69.537)(0.5,68.882)(0.55,67.623)(0.6,66.64)(0.65,65.32)(0.7,63.611)(0.75,62.136)(0.8,60.092)(0.85,58.491)(0.9,57.085)(0.95,55.26)(1.0,53.583)
    };
    \addlegendentry{$m=10$}
\end{axis}
\end{tikzpicture}
\end{subfigure}

\caption{Percentage of allocations produced by the divisor picking sequence with function $f(t) = t+y$ that satisfy weighted envy-freeness for three agents. The $x$-axis indicates the value of $y$ in the function.}
\label{fig:WEF}
\end{figure}

\begin{figure}
	\centering
	\begin{subfigure}{.5\textwidth} 
		\centering
\begin{tikzpicture}[scale=0.85]
\begin{axis}[
    title={(a) Uniform distribution on $[0,1]$},
    xlabel={},
    ylabel={},
    xmin=0, xmax=1,
    ymin=0, ymax=100,
    xtick= {0,0.2,0.4,0.6,0.8,1},
    ytick={0,20,40,60,80,100},
    legend pos=south west,
    ymajorgrids=true,
		xmajorgrids=true,
    grid style=dashed,
]

\addplot[
	    color=black,
	    mark size=3,
	    mark=triangle*,
			fill opacity=0.8,
  	  draw opacity=0.8,
	    ]
	    coordinates {
	    (0.0,66.414)(0.05,66.22)(0.1,66.514)(0.15,65.957)(0.2,65.482)(0.25,64.416)(0.3,63.693)(0.35,62.043)(0.4,60.957)(0.45,59.311)(0.5,57.318)(0.55,55.455)(0.6,53.113)(0.65,51.145)(0.7,48.99)(0.75,46.971)(0.8,44.806)(0.85,42.713)(0.9,41.256)(0.95,39.278)(1.0,37.847)
	    };
	    \addlegendentry{$m=6$}

\addplot[
    color=red,
    mark=square*,
		fill opacity=0.8,
  	draw opacity=0.8,
    ]
    coordinates {
    (0.0,82.117)(0.05,82.1)(0.1,82.419)(0.15,81.844)(0.2,81.463)(0.25,80.394)(0.3,79.801)(0.35,78.268)(0.4,77.264)(0.45,75.542)(0.5,73.517)(0.55,71.677)(0.6,69.419)(0.65,67.483)(0.7,65.137)(0.75,62.719)(0.8,60.629)(0.85,58.516)(0.9,56.64)(0.95,54.239)(1.0,52.256)
    };
    \addlegendentry{$m=8$}
    
\addplot[
    color=blue,
    mark=otimes*,
		fill opacity=0.8,
  	draw opacity=0.8,
    ]
    coordinates {
    (0.0,90.04)(0.05,89.984)(0.1,89.814)(0.15,89.336)(0.2,88.964)(0.25,88.054)(0.3,86.978)(0.35,86.062)(0.4,84.977)(0.45,83.546)(0.5,81.872)(0.55,80.006)(0.6,78.218)(0.65,76.532)(0.7,74.419)(0.75,72.693)(0.8,70.466)(0.85,68.537)(0.9,66.579)(0.95,64.491)(1.0,62.654)
    };
    \addlegendentry{$m=10$}
\end{axis}
\end{tikzpicture}
\end{subfigure}\begin{subfigure}{.5\textwidth} 
	\centering
\begin{tikzpicture}[scale=0.85]
\begin{axis}[
    title={(b) Exponential distribution with mean $1$},
    xlabel={},
    ylabel={},
    xmin=0, xmax=1,
    ymin=0, ymax=100,
    xtick= {0,0.2,0.4,0.6,0.8,1},
    ytick={0,20,40,60,80,100},
    legend pos=south west,
    ymajorgrids=true,
		xmajorgrids=true,
    grid style=dashed,
]

\addplot[
	    color=black,
	    mark size=3,
        mark=triangle*,
        fill opacity=0.8,
        draw opacity=0.8,
	    ]
	    coordinates {
	    (0.0,75.283)(0.05,74.143)(0.1,72.906)(0.15,71.399)(0.2,69.924)(0.25,67.536)(0.3,65.283)(0.35,63.473)(0.4,61.091)(0.45,58.687)(0.5,56.204)(0.55,53.804)(0.6,51.865)(0.65,49.685)(0.7,47.704)(0.75,45.466)(0.8,43.513)(0.85,41.979)(0.9,40.137)(0.95,38.71)(1.0,37.269)
    	   };
	    \addlegendentry{$m=6$}

\addplot[
    color=red,
    mark=square*,
    fill opacity=0.8,
  	draw opacity=0.8,
    ]
    coordinates {
    (0.0,87.595)(0.05,86.749)(0.1,85.671)(0.15,84.452)(0.2,83.321)(0.25,81.459)(0.3,79.733)(0.35,77.724)(0.4,75.997)(0.45,73.62)(0.5,71.496)(0.55,69.149)(0.6,66.76)(0.65,64.767)(0.7,62.537)(0.75,60.68)(0.8,58.86)(0.85,57.209)(0.9,55.094)(0.95,53.486)(1.0,51.777)
    };
    \addlegendentry{$m=8$}
    
\addplot[
    color=blue,
    mark=otimes*,
		fill opacity=0.8,
  	draw opacity=0.8,
    ]
    coordinates {
    (0.0,93.127)(0.05,92.465)(0.1,91.534)(0.15,90.64)(0.2,89.351)(0.25,88.246)(0.3,86.909)(0.35,85.144)(0.4,83.43)(0.45,81.788)(0.5,80.02)(0.55,77.943)(0.6,76.227)(0.65,74.341)(0.7,72.439)(0.75,70.881)(0.8,69.255)(0.85,66.984)(0.9,65.375)(0.95,63.856)(1.0,62.465)
    };
    \addlegendentry{$m=10$}
\end{axis}
\end{tikzpicture}
\end{subfigure}

\caption{Percentage of allocations produced by the divisor picking sequence with function $f(t) = t+y$ that satisfy weighted proportionality for three agents. The $x$-axis indicates the value of $y$ in the function.}
\label{fig:WPROP}
\end{figure}

\begin{figure}
	\centering
	\begin{subfigure}{.5\textwidth} 
		\centering
\begin{tikzpicture}[scale=0.85]
\begin{axis}[
    title={(a) Uniform distribution on $[0,1]$},
    xlabel={},
    ylabel={},
    xmin=0, xmax=1,
    ymin=0, ymax=100,
    xtick= {0,0.2,0.4,0.6,0.8,1},
    ytick={0,20,40,60,80,100},
    legend pos=south west,
    ymajorgrids=true,
		xmajorgrids=true,
    grid style=dashed,
]

\addplot[
	    color=black,
	    mark size=3,
	    mark=triangle*,
			fill opacity=0.8,
  	  draw opacity=0.8,
	    ]
	    coordinates {
	    (0.0,78.79)(0.05,78.41)(0.1,77.625)(0.15,76.25)(0.2,75.31)(0.25,73.595)(0.3,72.275)(0.35,69.655)(0.4,67.715)(0.45,65.55)(0.5,62.625)(0.55,60.405)(0.6,58.48)(0.65,55.595)(0.7,53.91)(0.75,51.315)(0.8,49.4)(0.85,46.35)(0.9,45.305)(0.95,43.235)(1.0,41.695)
	    };
	    \addlegendentry{$m=6$}

\addplot[
    color=red,
    mark=square*,
		fill opacity=0.8,
  	draw opacity=0.8,
    ]
    coordinates {
    (0.0,86.55)(0.05,86.315)(0.1,85.575)(0.15,84.935)(0.2,84.03)(0.25,83.365)(0.3,81.825)(0.35,80.32)(0.4,78.835)(0.45,76.775)(0.5,74.81)(0.55,72.62)(0.6,70.105)(0.65,68.23)(0.7,66.11)(0.75,63.835)(0.8,61.5)(0.85,59.74)(0.9,57.165)(0.95,55.345)(1.0,53.095)
    };
    \addlegendentry{$m=8$}
    
\addplot[
    color=blue,
    mark=otimes*,
		fill opacity=0.8,
  	draw opacity=0.8,
    ]
    coordinates {
    (0.0,91.52)(0.05,91.39)(0.1,91.19)(0.15,90.68)(0.2,90.015)(0.25,88.72)(0.3,88.395)(0.35,86.85)(0.4,85.53)(0.45,84.115)(0.5,82.035)(0.55,80.065)(0.6,78.99)(0.65,76.935)(0.7,74.345)(0.75,72.08)(0.8,70.43)(0.85,68.435)(0.9,66.635)(0.95,64.675)(1.0,62.24)
    };
    \addlegendentry{$m=10$}
\end{axis}
\end{tikzpicture}
\end{subfigure}\begin{subfigure}{.5\textwidth} 
	\centering
\begin{tikzpicture}[scale=0.85]
\begin{axis}[
    title={(b) Exponential distribution with mean $1$},
    xlabel={},
    ylabel={},
    xmin=0, xmax=1,
    ymin=0, ymax=100,
    xtick= {0,0.2,0.4,0.6,0.8,1},
    ytick={0,20,40,60,80,100},
    legend pos=south west,
    ymajorgrids=true,
		xmajorgrids=true,
    grid style=dashed,
]

\addplot[
	    color=black,
	    mark size=3,
        mark=triangle*,
        fill opacity=0.8,
        draw opacity=0.8,
	    ]
	    coordinates {
	    (0.0,83.915)(0.05,82.34)(0.1,80.595)(0.15,78.52)(0.2,76.185)(0.25,73.975)(0.3,71.69)(0.35,68.635)(0.4,65.685)(0.45,63.11)(0.5,61.105)(0.55,58.55)(0.6,56.465)(0.65,54.03)(0.7,51.27)(0.75,49.88)(0.8,47.27)(0.85,45.515)(0.9,44.215)(0.95,42.465)(1.0,40.91)
    	   };
	    \addlegendentry{$m=6$}

\addplot[
    color=red,
    mark=square*,
    fill opacity=0.8,
  	draw opacity=0.8,
    ]
    coordinates {
    (0.0,90.0)(0.05,88.8)(0.1,88.325)(0.15,86.68)(0.2,85.3)(0.25,83.78)(0.3,81.445)(0.35,79.27)(0.4,77.49)(0.45,74.695)(0.5,72.44)(0.55,70.585)(0.6,67.785)(0.65,65.925)(0.7,63.845)(0.75,62.19)(0.8,59.745)(0.85,57.54)(0.9,56.32)(0.95,54.59)(1.0,52.745)
    };
    \addlegendentry{$m=8$}
    
\addplot[
    color=blue,
    mark=otimes*,
		fill opacity=0.8,
  	draw opacity=0.8,
    ]
    coordinates {
    (0.0,94.125)(0.05,93.265)(0.1,92.48)(0.15,91.185)(0.2,89.915)(0.25,88.57)(0.3,87.17)(0.35,85.81)(0.4,83.725)(0.45,82.235)(0.5,79.985)(0.55,78.34)(0.6,76.66)(0.65,74.86)(0.7,72.945)(0.75,70.61)(0.8,68.865)(0.85,68.005)(0.9,65.915)(0.95,64.005)(1.0,62.08)
    };
    \addlegendentry{$m=10$}
\end{axis}
\end{tikzpicture}
\end{subfigure}

\caption{Percentage of allocations produced by the divisor picking sequence with function $f(t) = t+y$ that satisfy WMMS-fairness for three agents. The $x$-axis indicates the value of~$y$ in the function.}
\label{fig:WMMS}
\end{figure}

\begin{figure}
	\centering
	\begin{subfigure}{.5\textwidth} 
		\centering
\begin{tikzpicture}[scale=0.85]
\begin{axis}[
    title={(a) Uniform distribution on $[0,1]$},
    xlabel={},
    ylabel={},
    xmin=0, xmax=1,
    ymin=0, ymax=100,
    xtick= {0,0.2,0.4,0.6,0.8,1},
    ytick={0,20,40,60,80,100},
    legend pos=south west,
    ymajorgrids=true,
		xmajorgrids=true,
    grid style=dashed,
]

\addplot[
	    color=black,
	    mark size=3,
	    mark=triangle*,
			fill opacity=0.8,
  	  draw opacity=0.8,
	    ]
	    coordinates {
	    (0.0,83.185)(0.05,81.815)(0.1,80.7)(0.15,79.595)(0.2,78.22)(0.25,76.56)(0.3,74.685)(0.35,71.77)(0.4,70.24)(0.45,67.43)(0.5,65.515)(0.55,63.24)(0.6,61.565)(0.65,59.325)(0.7,56.94)(0.75,54.21)(0.8,51.755)(0.85,50.165)(0.9,48.655)(0.95,46.39)(1.0,45.08)
	    };
	    \addlegendentry{$m=6$}

\addplot[
    color=red,
    mark=square*,
		fill opacity=0.8,
  	draw opacity=0.8,
    ]
    coordinates {
    (0.0,86.875)(0.05,87.015)(0.1,86.47)(0.15,86.19)(0.2,85.17)(0.25,84.21)(0.3,82.87)(0.35,81.585)(0.4,79.185)(0.45,77.14)(0.5,75.04)(0.55,73.745)(0.6,71.64)(0.65,69.22)(0.7,67.375)(0.75,65.28)(0.8,62.125)(0.85,60.555)(0.9,58.19)(0.95,56.52)(1.0,54.67)
    };
    \addlegendentry{$m=8$}
    
\addplot[
    color=blue,
    mark=otimes*,
		fill opacity=0.8,
  	draw opacity=0.8,
    ]
    coordinates {
    (0.0,91.34)(0.05,91.055)(0.1,90.755)(0.15,90.235)(0.2,89.505)(0.25,88.91)(0.3,88.155)(0.35,86.625)(0.4,85.2)(0.45,83.985)(0.5,82.405)(0.55,80.665)(0.6,79.035)(0.65,76.635)(0.7,74.96)(0.75,72.96)(0.8,71.325)(0.85,68.74)(0.9,67.365)(0.95,65.88)(1.0,63.42)
    };
    \addlegendentry{$m=10$}
\end{axis}
\end{tikzpicture}
\end{subfigure}\begin{subfigure}{.5\textwidth} 
	\centering
\begin{tikzpicture}[scale=0.85]
\begin{axis}[
    title={(b) Exponential distribution with mean $1$},
    xlabel={},
    ylabel={},
    xmin=0, xmax=1,
    ymin=0, ymax=100,
    xtick= {0,0.2,0.4,0.6,0.8,1},
    ytick={0,20,40,60,80,100},
    legend pos=south west,
    ymajorgrids=true,
		xmajorgrids=true,
    grid style=dashed,
]

\addplot[
	    color=black,
	    mark size=3,
        mark=triangle*,
        fill opacity=0.8,
        draw opacity=0.8,
	    ]
	    coordinates {
	    (0.0,89.335)(0.05,87.52)(0.1,85.495)(0.15,83.53)(0.2,80.585)(0.25,78.645)(0.3,75.31)(0.35,73.45)(0.4,70.6)(0.45,67.9)(0.5,65.54)(0.55,62.595)(0.6,60.82)(0.65,58.07)(0.7,55.89)(0.75,53.315)(0.8,51.64)(0.85,49.455)(0.9,48.315)(0.95,46.095)(1.0,43.915)
    	   };
	    \addlegendentry{$m=6$}

\addplot[
    color=red,
    mark=square*,
    fill opacity=0.8,
  	draw opacity=0.8,
    ]
    coordinates {
    (0.0,92.66)(0.05,91.325)(0.1,90.205)(0.15,88.605)(0.2,86.93)(0.25,85.575)(0.3,83.245)(0.35,81.18)(0.4,79.3)(0.45,76.805)(0.5,74.435)(0.55,72.215)(0.6,70.155)(0.65,67.85)(0.7,65.63)(0.75,64.5)(0.8,62.03)(0.85,60.06)(0.9,58.415)(0.95,57.475)(1.0,55.315)
    };
    \addlegendentry{$m=8$}
    
\addplot[
    color=blue,
    mark=otimes*,
		fill opacity=0.8,
  	draw opacity=0.8,
    ]
    coordinates {
    (0.0,94.745)(0.05,94.04)(0.1,93.19)(0.15,91.775)(0.2,90.87)(0.25,89.625)(0.3,87.725)(0.35,86.44)(0.4,84.64)(0.45,83.07)(0.5,81.665)(0.55,79.26)(0.6,77.67)(0.65,75.27)(0.7,74.295)(0.75,72.295)(0.8,70.6)(0.85,68.4)(0.9,66.715)(0.95,64.955)(1.0,64.03)
    };
    \addlegendentry{$m=10$}
\end{axis}
\end{tikzpicture}
\end{subfigure}

\caption{Percentage of allocations produced by the divisor picking sequence with function $f(t) = t+y$ that satisfy NMMS-fairness for three agents. The $x$-axis indicates the value of~$y$ in the function.}
\label{fig:NMMS}
\end{figure}

\section{Conclusion and Future Work}

In this paper, we revisited known fairness notions for the setting where agents can have different entitlements to the resource, and introduced several new notions for this setting.
Our work further reveals the richness of weighted fair division that has been uncovered by a number of recent papers.
Indeed, when all agents have the same weight, WEF$(x,1-x)$ reduces to EF1 and WPROP$(x,1-x)$ reduces to PROP1 for all $x\in[0,1]$, while WMMS, NMMS, and OMMS all reduce to MMS.
We believe that the concepts we have introduced add meaningful value beyond those proposed in prior work.
In particular, the notions WEF$(x,1-x)$ and WPROP$(x,1-x)$ allow us to choose the degree to which we want to prioritize agents with larger weights in comparison to those with smaller weights.
Furthermore, our NMMS notion provides an intuitive generalization of the well-studied MMS criterion for which a nontrivial approximation can always be attained.

Given the plethora of fairness notions in the weighted setting---several of which we showed are incompatible with, or do not imply any approximation of, one another---it should come as no surprise that the choice of notion plays a key role when deciding which rules or allocations are fairer than others.
Indeed, this is clearly illustrated in our experimental results in Section~\ref{sec:experiments}, which indicate that the fairest picking sequences according to weighted envy-freeness are different from the ones according to weighted proportionality and maximin share-based notions.
In light of this, one should be careful about ``cherry picking'', that is, justifying the use of certain algorithms or outcomes based only on specific (and perhaps favorably chosen) fairness notions.
An avenue for future research is to conduct experiments with human subjects---this would demonstrate which fairness notions are considered fairer by humans, similarly to the ``moral machine'' experiment \citep{awad2018moral} and various earlier experiments in fair division
\citep{dickinson2002fair,engelmann2004inequality,fehr2006inequality,herreiner2007distributing}.

On the theoretical front, a possible future work direction is to study our new fairness notions in conjunction with other properties, for example, 
the economic efficiency property of 
\emph{Pareto optimality}.
\citet{ChakrabortyIgSu20} have shown, by means of generalizing \citet{BarmanKrVa18}'s market-based algorithm, that WEF$(1,0)$ is compatible with Pareto optimality, but it remains unclear whether their argument can be further generalized to work for WEF$(x,1-x)$ when $x < 1$.
Considering the issue of strategic manipulation could also lead to useful insights---even in the unweighted setting, previous work has revealed that picking sequences do not perform well in view of strategyproofness \citep{BouveretLa14,TominagaToYo16,XiaoLi20}.
Further, one could try to turn our non-implication findings into incompatibility results,\footnote{
An example is the discussion after \Cref{thm:wefx1mx_nmms}, which  shows that WEF$(x,1-x)$ for any $x\in[0,1)$ is incompatible with any positive approximation of NMMS.
}
or to demonstrate that the notions that do not imply each other are in fact compatible.
Finally, extending some of our results to non-additive utilities is a challenging but important direction as well; a step in that direction was taken in the follow-up work of \citet{MontanariScSu24}, which focused on submodular utilities.

\subsection*{Acknowledgments}

This work was partially supported by the Israel Science Foundation under grant number 712/20, by the Singapore Ministry of Education under grant number MOE-T2EP20221-0001, and by an NUS Start-up Grant.
A preliminary version of the work appeared in Proceedings
of the 36th AAAI Conference on Artificial Intelligence, February--March 2022.
We would like to thank the anonymous reviewers of AAAI 2022 and ACM Transactions on Economics and Computation for their valuable comments.

\bibliographystyle{ACM-Reference-Format}
\bibliography{main}

%%% -*-BibTeX-*-
%%% Do NOT edit. File created by BibTeX with style
%%% ACM-Reference-Format-Journals [18-Jan-2012].

\begin{thebibliography}{72}

%%% ====================================================================
%%% NOTE TO THE USER: you can override these defaults by providing
%%% customized versions of any of these macros before the \bibliography
%%% command.  Each of them MUST provide its own final punctuation,
%%% except for \shownote{}, \showDOI{}, and \showURL{}.  The latter two
%%% do not use final punctuation, in order to avoid confusing it with
%%% the Web address.
%%%
%%% To suppress output of a particular field, define its macro to expand
%%% to an empty string, or better, \unskip, like this:
%%%
%%% \newcommand{\showDOI}[1]{\unskip}   % LaTeX syntax
%%%
%%% \def \showDOI #1{\unskip}           % plain TeX syntax
%%%
%%% ====================================================================

\ifx \showCODEN    \undefined \def \showCODEN     #1{\unskip}     \fi
\ifx \showDOI      \undefined \def \showDOI       #1{#1}\fi
\ifx \showISBNx    \undefined \def \showISBNx     #1{\unskip}     \fi
\ifx \showISBNxiii \undefined \def \showISBNxiii  #1{\unskip}     \fi
\ifx \showISSN     \undefined \def \showISSN      #1{\unskip}     \fi
\ifx \showLCCN     \undefined \def \showLCCN      #1{\unskip}     \fi
\ifx \shownote     \undefined \def \shownote      #1{#1}          \fi
\ifx \showarticletitle \undefined \def \showarticletitle #1{#1}   \fi
\ifx \showURL      \undefined \def \showURL       {\relax}        \fi
% The following commands are used for tagged output and should be
% invisible to TeX
\providecommand\bibfield[2]{#2}
\providecommand\bibinfo[2]{#2}
\providecommand\natexlab[1]{#1}
\providecommand\showeprint[2][]{arXiv:#2}

\bibitem[\protect\citeauthoryear{Aleksandrov, Aziz, Gaspers, and
  Walsh}{Aleksandrov et~al\mbox{.}}{2015}]%
        {aleksandrov2015online}
\bibfield{author}{\bibinfo{person}{Martin Aleksandrov}, \bibinfo{person}{Haris
  Aziz}, \bibinfo{person}{Serge Gaspers}, {and} \bibinfo{person}{Toby Walsh}.}
  \bibinfo{year}{2015}\natexlab{}.
\newblock \showarticletitle{Online fair division: Analysing a food bank
  problem}. In \bibinfo{booktitle}{\emph{Proceedings of the 24th International
  Joint Conference on Artificial Intelligence (IJCAI)}}.
  \bibinfo{pages}{2540--2546}.
\newblock


\bibitem[\protect\citeauthoryear{Amanatidis, Birmpas, and Markakis}{Amanatidis
  et~al\mbox{.}}{2018}]%
        {AmanatidisBiMa18}
\bibfield{author}{\bibinfo{person}{Georgios Amanatidis},
  \bibinfo{person}{Georgios Birmpas}, {and} \bibinfo{person}{Evangelos
  Markakis}.} \bibinfo{year}{2018}\natexlab{}.
\newblock \showarticletitle{Comparing approximate relaxations of
  envy-freeness}. In \bibinfo{booktitle}{\emph{Proceedings of the 27th
  International Joint Conference on Artificial Intelligence (IJCAI)}}.
  \bibinfo{pages}{42--48}.
\newblock


\bibitem[\protect\citeauthoryear{Awad, Dsouza, Kim, Schulz, Henrich, Shariff,
  Bonnefon, and Rahwan}{Awad et~al\mbox{.}}{2018}]%
        {awad2018moral}
\bibfield{author}{\bibinfo{person}{Edmond Awad}, \bibinfo{person}{Sohan
  Dsouza}, \bibinfo{person}{Richard Kim}, \bibinfo{person}{Jonathan Schulz},
  \bibinfo{person}{Joseph Henrich}, \bibinfo{person}{Azim Shariff},
  \bibinfo{person}{Jean-Fran{\c{c}}ois Bonnefon}, {and} \bibinfo{person}{Iyad
  Rahwan}.} \bibinfo{year}{2018}\natexlab{}.
\newblock \showarticletitle{The moral machine experiment}.
\newblock \bibinfo{journal}{\emph{Nature}} \bibinfo{volume}{563},
  \bibinfo{number}{7729} (\bibinfo{year}{2018}), \bibinfo{pages}{59--64}.
\newblock


\bibitem[\protect\citeauthoryear{Aziz, Caragiannis, Igarashi, and Walsh}{Aziz
  et~al\mbox{.}}{2022}]%
        {AzizCaIg22}
\bibfield{author}{\bibinfo{person}{Haris Aziz}, \bibinfo{person}{Ioannis
  Caragiannis}, \bibinfo{person}{Ayumi Igarashi}, {and} \bibinfo{person}{Toby
  Walsh}.} \bibinfo{year}{2022}\natexlab{}.
\newblock \showarticletitle{Fair allocation of indivisible goods and chores}.
\newblock \bibinfo{journal}{\emph{Autonomous Agents and Multi-Agent Systems}}
  \bibinfo{volume}{36}, \bibinfo{number}{1} (\bibinfo{year}{2022}),
  \bibinfo{pages}{3:1--3:21}.
\newblock


\bibitem[\protect\citeauthoryear{Aziz, Chan, and Li}{Aziz
  et~al\mbox{.}}{2019}]%
        {AzizChLi19}
\bibfield{author}{\bibinfo{person}{Haris Aziz}, \bibinfo{person}{Hau Chan},
  {and} \bibinfo{person}{Bo Li}.} \bibinfo{year}{2019}\natexlab{}.
\newblock \showarticletitle{Weighted maxmin fair share allocation of
  indivisible chores}. In \bibinfo{booktitle}{\emph{Proceedings of the 28th
  International Joint Conference on Artificial Intelligence (IJCAI)}}.
  \bibinfo{pages}{46--52}.
\newblock


\bibitem[\protect\citeauthoryear{Aziz, Ganguly, and Micha}{Aziz
  et~al\mbox{.}}{2023}]%
        {AzizGaMi23}
\bibfield{author}{\bibinfo{person}{Haris Aziz}, \bibinfo{person}{Aditya
  Ganguly}, {and} \bibinfo{person}{Evi Micha}.}
  \bibinfo{year}{2023}\natexlab{}.
\newblock \showarticletitle{Best of both worlds fairness under entitlements}.
  In \bibinfo{booktitle}{\emph{Proceedings of the 22nd International Conference
  on Autonomous Agents and Multiagent Systems (AAMAS)}}.
  \bibinfo{pages}{941--948}.
\newblock


\bibitem[\protect\citeauthoryear{Aziz, Gaspers, Mackenzie, and Walsh}{Aziz
  et~al\mbox{.}}{2015}]%
        {aziz2015fair}
\bibfield{author}{\bibinfo{person}{Haris Aziz}, \bibinfo{person}{Serge
  Gaspers}, \bibinfo{person}{Simon Mackenzie}, {and} \bibinfo{person}{Toby
  Walsh}.} \bibinfo{year}{2015}\natexlab{}.
\newblock \showarticletitle{Fair assignment of indivisible objects under
  ordinal preferences}.
\newblock \bibinfo{journal}{\emph{Artificial Intelligence}}
  \bibinfo{volume}{227} (\bibinfo{year}{2015}), \bibinfo{pages}{71--92}.
\newblock


\bibitem[\protect\citeauthoryear{Aziz, Moulin, and Sandomirskiy}{Aziz
  et~al\mbox{.}}{2020}]%
        {AzizMoSa20}
\bibfield{author}{\bibinfo{person}{Haris Aziz}, \bibinfo{person}{Herv\'{e}
  Moulin}, {and} \bibinfo{person}{Fedor Sandomirskiy}.}
  \bibinfo{year}{2020}\natexlab{}.
\newblock \showarticletitle{A polynomial-time algorithm for computing a
  {P}areto optimal and almost proportional allocation}.
\newblock \bibinfo{journal}{\emph{Operations Research Letters}}
  \bibinfo{volume}{48}, \bibinfo{number}{5} (\bibinfo{year}{2020}),
  \bibinfo{pages}{573--578}.
\newblock


\bibitem[\protect\citeauthoryear{Aziz and Ye}{Aziz and Ye}{2014}]%
        {Aziz2014Cake}
\bibfield{author}{\bibinfo{person}{Haris Aziz} {and} \bibinfo{person}{Chun
  Ye}.} \bibinfo{year}{2014}\natexlab{}.
\newblock \showarticletitle{Cake cutting algorithms for piecewise constant and
  piecewise uniform valuations}. In \bibinfo{booktitle}{\emph{Proceedings of
  the 10th International Conference on Web and Internet Economics (WINE)}}.
  \bibinfo{pages}{1--14}.
\newblock


\bibitem[\protect\citeauthoryear{Babaioff, Ezra, and Feige}{Babaioff
  et~al\mbox{.}}{2021a}]%
        {babaioff2021fair}
\bibfield{author}{\bibinfo{person}{Moshe Babaioff}, \bibinfo{person}{Tomer
  Ezra}, {and} \bibinfo{person}{Uriel Feige}.}
  \bibinfo{year}{2021}\natexlab{a}.
\newblock \showarticletitle{Fair and truthful mechanisms for dichotomous
  valuations}. In \bibinfo{booktitle}{\emph{Proceedings of the 35th AAAI
  Conference on Artificial Intelligence (AAAI)}}. \bibinfo{pages}{5119--5126}.
\newblock


\bibitem[\protect\citeauthoryear{Babaioff, Ezra, and Feige}{Babaioff
  et~al\mbox{.}}{2021b}]%
        {BabaioffEzFe21}
\bibfield{author}{\bibinfo{person}{Moshe Babaioff}, \bibinfo{person}{Tomer
  Ezra}, {and} \bibinfo{person}{Uriel Feige}.}
  \bibinfo{year}{2021}\natexlab{b}.
\newblock \showarticletitle{Fair-share allocations for agents with arbitrary
  entitlements}. In \bibinfo{booktitle}{\emph{Proceedings of the 22nd ACM
  Conference on Economics and Computation (EC)}}. \bibinfo{pages}{127}.
\newblock


\bibitem[\protect\citeauthoryear{Babaioff, Nisan, and Talgam-Cohen}{Babaioff
  et~al\mbox{.}}{2021c}]%
        {BabaioffNiTa21}
\bibfield{author}{\bibinfo{person}{Moshe Babaioff}, \bibinfo{person}{Noam
  Nisan}, {and} \bibinfo{person}{Inbal Talgam-Cohen}.}
  \bibinfo{year}{2021}\natexlab{c}.
\newblock \showarticletitle{Competitive equilibrium with indivisible goods and
  generic budgets}.
\newblock \bibinfo{journal}{\emph{Mathematics of Operations Research}}
  \bibinfo{volume}{46}, \bibinfo{number}{1} (\bibinfo{year}{2021}),
  \bibinfo{pages}{382--403}.
\newblock


\bibitem[\protect\citeauthoryear{Balinski and Young}{Balinski and
  Young}{1975}]%
        {BalinskiYo75}
\bibfield{author}{\bibinfo{person}{Michel~L. Balinski} {and}
  \bibinfo{person}{H.~Peyton Young}.} \bibinfo{year}{1975}\natexlab{}.
\newblock \showarticletitle{The quota method of apportionment}.
\newblock \bibinfo{journal}{\emph{Amer. Math. Monthly}} \bibinfo{volume}{82},
  \bibinfo{number}{7} (\bibinfo{year}{1975}), \bibinfo{pages}{701--730}.
\newblock


\bibitem[\protect\citeauthoryear{Balinski and Young}{Balinski and
  Young}{2001}]%
        {BalinskiYo01}
\bibfield{author}{\bibinfo{person}{Michel~L. Balinski} {and}
  \bibinfo{person}{H.~Peyton Young}.} \bibinfo{year}{2001}\natexlab{}.
\newblock \bibinfo{booktitle}{\emph{Fair Representation: Meeting the Ideal of
  One Man, One Vote}}.
\newblock \bibinfo{publisher}{Brookings Institution Press}.
\newblock


\bibitem[\protect\citeauthoryear{Barbanel}{Barbanel}{1995}]%
        {barbanel1996game}
\bibfield{author}{\bibinfo{person}{Julius~B. Barbanel}.}
  \bibinfo{year}{1995}\natexlab{}.
\newblock \showarticletitle{Game-theoretic algorithms for fair and strongly
  fair cake division with entitlements}.
\newblock \bibinfo{journal}{\emph{Colloquium Mathematicae}}
  \bibinfo{volume}{69}, \bibinfo{number}{1} (\bibinfo{year}{1995}),
  \bibinfo{pages}{59--73}.
\newblock


\bibitem[\protect\citeauthoryear{Barman, Krishnamurthy, and Vaish}{Barman
  et~al\mbox{.}}{2018}]%
        {BarmanKrVa18}
\bibfield{author}{\bibinfo{person}{Siddharth Barman},
  \bibinfo{person}{Sanath~Kumar Krishnamurthy}, {and} \bibinfo{person}{Rohit
  Vaish}.} \bibinfo{year}{2018}\natexlab{}.
\newblock \showarticletitle{Finding fair and efficient allocations}. In
  \bibinfo{booktitle}{\emph{Proceedings of the 19th ACM Conference on Economics
  and Computation (EC)}}. \bibinfo{pages}{557--574}.
\newblock


\bibitem[\protect\citeauthoryear{Barman and Verma}{Barman and Verma}{2021}]%
        {barman2021existence}
\bibfield{author}{\bibinfo{person}{Siddharth Barman} {and}
  \bibinfo{person}{Paritosh Verma}.} \bibinfo{year}{2021}\natexlab{}.
\newblock \showarticletitle{Existence and computation of maximin fair
  allocations under matroid-rank valuations}. In
  \bibinfo{booktitle}{\emph{Proceedings of the 20th International Conference on
  Autonomous Agents and Multiagent Systems (AAMAS)}}.
  \bibinfo{pages}{169--177}.
\newblock


\bibitem[\protect\citeauthoryear{Benabbou, Chakraborty, Igarashi, and
  Zick}{Benabbou et~al\mbox{.}}{2021}]%
        {benabbou2021finding}
\bibfield{author}{\bibinfo{person}{Nawal Benabbou}, \bibinfo{person}{Mithun
  Chakraborty}, \bibinfo{person}{Ayumi Igarashi}, {and} \bibinfo{person}{Yair
  Zick}.} \bibinfo{year}{2021}\natexlab{}.
\newblock \showarticletitle{Finding fair and efficient allocations for matroid
  rank valuations}.
\newblock \bibinfo{journal}{\emph{ACM Transactions on Economics and
  Computation}} \bibinfo{volume}{9}, \bibinfo{number}{4}
  (\bibinfo{year}{2021}), \bibinfo{pages}{21:1--21:41}.
\newblock


\bibitem[\protect\citeauthoryear{Bir{\'o}, K{\'o}czy, and Sziklai}{Bir{\'o}
  et~al\mbox{.}}{2015}]%
        {biro2015fair}
\bibfield{author}{\bibinfo{person}{P{\'e}ter Bir{\'o}},
  \bibinfo{person}{L{\'a}szl{\'o}~{\'A}. K{\'o}czy}, {and}
  \bibinfo{person}{Bal{\'a}zs Sziklai}.} \bibinfo{year}{2015}\natexlab{}.
\newblock \showarticletitle{Fair apportionment in the view of the {V}enice
  {C}ommission’s recommendation}.
\newblock \bibinfo{journal}{\emph{Mathematical Social Sciences}}
  \bibinfo{volume}{77} (\bibinfo{year}{2015}), \bibinfo{pages}{32--41}.
\newblock


\bibitem[\protect\citeauthoryear{Bouveret and Lang}{Bouveret and Lang}{2014}]%
        {BouveretLa14}
\bibfield{author}{\bibinfo{person}{Sylvain Bouveret} {and}
  \bibinfo{person}{J\'{e}r\^{o}me Lang}.} \bibinfo{year}{2014}\natexlab{}.
\newblock \showarticletitle{Manipulating picking sequences}. In
  \bibinfo{booktitle}{\emph{Proceedings of the 21st European Conference on
  Artificial Intelligence (ECAI)}}. \bibinfo{pages}{141--146}.
\newblock


\bibitem[\protect\citeauthoryear{Brams, Jones, and Klamler}{Brams
  et~al\mbox{.}}{2011}]%
        {Brams2011DivideandConquer}
\bibfield{author}{\bibinfo{person}{Steven~J. Brams},
  \bibinfo{person}{Michael~A. Jones}, {and} \bibinfo{person}{Christian
  Klamler}.} \bibinfo{year}{2011}\natexlab{}.
\newblock \showarticletitle{Divide-and-conquer: a proportional, minimal-envy
  cake-cutting algorithm}.
\newblock \bibinfo{journal}{\emph{SIAM Rev.}} \bibinfo{volume}{53},
  \bibinfo{number}{2} (\bibinfo{year}{2011}), \bibinfo{pages}{291--307}.
\newblock


\bibitem[\protect\citeauthoryear{Budish}{Budish}{2011}]%
        {Budish11}
\bibfield{author}{\bibinfo{person}{Eric Budish}.}
  \bibinfo{year}{2011}\natexlab{}.
\newblock \showarticletitle{The combinatorial assignment problem: Approximate
  competitive equilibrium from equal incomes}.
\newblock \bibinfo{journal}{\emph{Journal of Political Economy}}
  \bibinfo{volume}{119}, \bibinfo{number}{6} (\bibinfo{year}{2011}),
  \bibinfo{pages}{1061--1103}.
\newblock


\bibitem[\protect\citeauthoryear{Caragiannis, Kurokawa, Moulin, Procaccia,
  Shah, and Wang}{Caragiannis et~al\mbox{.}}{2019}]%
        {CaragiannisKuMo19}
\bibfield{author}{\bibinfo{person}{Ioannis Caragiannis}, \bibinfo{person}{David
  Kurokawa}, \bibinfo{person}{Herv\'{e} Moulin}, \bibinfo{person}{Ariel~D.
  Procaccia}, \bibinfo{person}{Nisarg Shah}, {and} \bibinfo{person}{Junxing
  Wang}.} \bibinfo{year}{2019}\natexlab{}.
\newblock \showarticletitle{The unreasonable fairness of maximum {N}ash
  welfare}.
\newblock \bibinfo{journal}{\emph{ACM Transactions on Economics and
  Computation}} \bibinfo{volume}{7}, \bibinfo{number}{3}
  (\bibinfo{year}{2019}), \bibinfo{pages}{12:1--12:32}.
\newblock


\bibitem[\protect\citeauthoryear{Chakraborty, Igarashi, Suksompong, and
  Zick}{Chakraborty et~al\mbox{.}}{2021a}]%
        {ChakrabortyIgSu20}
\bibfield{author}{\bibinfo{person}{Mithun Chakraborty}, \bibinfo{person}{Ayumi
  Igarashi}, \bibinfo{person}{Warut Suksompong}, {and} \bibinfo{person}{Yair
  Zick}.} \bibinfo{year}{2021}\natexlab{a}.
\newblock \showarticletitle{Weighted envy-freeness in indivisible item
  allocation}.
\newblock \bibinfo{journal}{\emph{ACM Transactions on Economics and
  Computation}} \bibinfo{volume}{9}, \bibinfo{number}{3}
  (\bibinfo{year}{2021}), \bibinfo{pages}{18:1--18:39}.
\newblock


\bibitem[\protect\citeauthoryear{Chakraborty, Schmidt-Kraepelin, and
  Suksompong}{Chakraborty et~al\mbox{.}}{2021b}]%
        {ChakrabortyScSu21}
\bibfield{author}{\bibinfo{person}{Mithun Chakraborty}, \bibinfo{person}{Ulrike
  Schmidt-Kraepelin}, {and} \bibinfo{person}{Warut Suksompong}.}
  \bibinfo{year}{2021}\natexlab{b}.
\newblock \showarticletitle{Picking sequences and monotonicity in weighted fair
  division}.
\newblock \bibinfo{journal}{\emph{Artificial Intelligence}}
  \bibinfo{volume}{301} (\bibinfo{year}{2021}), \bibinfo{pages}{103578}.
\newblock


\bibitem[\protect\citeauthoryear{Conitzer, Freeman, and Shah}{Conitzer
  et~al\mbox{.}}{2017}]%
        {ConitzerFrSh17}
\bibfield{author}{\bibinfo{person}{Vincent Conitzer}, \bibinfo{person}{Rupert
  Freeman}, {and} \bibinfo{person}{Nisarg Shah}.}
  \bibinfo{year}{2017}\natexlab{}.
\newblock \showarticletitle{Fair public decision making}. In
  \bibinfo{booktitle}{\emph{Proceedings of the 18th ACM Conference on Economics
  and Computation (EC)}}. \bibinfo{pages}{629--646}.
\newblock


\bibitem[\protect\citeauthoryear{Corradi and Corradi}{Corradi and
  Corradi}{2002}]%
        {corradi2001adjusted}
\bibfield{author}{\bibinfo{person}{Marco~Claudio Corradi} {and}
  \bibinfo{person}{Valentina Corradi}.} \bibinfo{year}{2002}\natexlab{}.
\newblock \showarticletitle{The {A}djusted {K}naster procedure under unequal
  entitlements}.
\newblock \bibinfo{journal}{\emph{Decisions in Economics and Finance}}
  \bibinfo{volume}{25}, \bibinfo{number}{2} (\bibinfo{year}{2002}),
  \bibinfo{pages}{157--160}.
\newblock


\bibitem[\protect\citeauthoryear{Crew, Narayanan, and Spirkl}{Crew
  et~al\mbox{.}}{2020}]%
        {crew2020disproportionate}
\bibfield{author}{\bibinfo{person}{Logan Crew}, \bibinfo{person}{Bhargav
  Narayanan}, {and} \bibinfo{person}{Sophie Spirkl}.}
  \bibinfo{year}{2020}\natexlab{}.
\newblock \showarticletitle{Disproportionate division}.
\newblock \bibinfo{journal}{\emph{Bulletin of the London Mathematical Society}}
  \bibinfo{volume}{52}, \bibinfo{number}{5} (\bibinfo{year}{2020}),
  \bibinfo{pages}{885--890}.
\newblock


\bibitem[\protect\citeauthoryear{Cseh and Fleiner}{Cseh and Fleiner}{2020}]%
        {cseh2020complexity}
\bibfield{author}{\bibinfo{person}{{\'A}gnes Cseh} {and}
  \bibinfo{person}{Tam{\'a}s Fleiner}.} \bibinfo{year}{2020}\natexlab{}.
\newblock \showarticletitle{The complexity of cake cutting with unequal
  shares}.
\newblock \bibinfo{journal}{\emph{ACM Transactions on Algorithms}}
  \bibinfo{volume}{16}, \bibinfo{number}{3} (\bibinfo{year}{2020}),
  \bibinfo{pages}{29:1--29:21}.
\newblock


\bibitem[\protect\citeauthoryear{Dall'Aglio and Maccheroni}{Dall'Aglio and
  Maccheroni}{2009}]%
        {dall2009disputed}
\bibfield{author}{\bibinfo{person}{Marco Dall'Aglio} {and}
  \bibinfo{person}{Fabio Maccheroni}.} \bibinfo{year}{2009}\natexlab{}.
\newblock \showarticletitle{Disputed lands}.
\newblock \bibinfo{journal}{\emph{Games and Economic Behavior}}
  \bibinfo{volume}{66}, \bibinfo{number}{1} (\bibinfo{year}{2009}),
  \bibinfo{pages}{57--77}.
\newblock


\bibitem[\protect\citeauthoryear{Dickinson and Tiefenthaler}{Dickinson and
  Tiefenthaler}{2002}]%
        {dickinson2002fair}
\bibfield{author}{\bibinfo{person}{David~L. Dickinson} {and}
  \bibinfo{person}{Jill Tiefenthaler}.} \bibinfo{year}{2002}\natexlab{}.
\newblock \showarticletitle{What is fair? {E}xperimental evidence}.
\newblock \bibinfo{journal}{\emph{Southern Economic Journal}}
  \bibinfo{volume}{69}, \bibinfo{number}{2} (\bibinfo{year}{2002}),
  \bibinfo{pages}{414--428}.
\newblock


\bibitem[\protect\citeauthoryear{Dolev, Feitelson, Halpern, Kupferman, and
  Linial}{Dolev et~al\mbox{.}}{2012}]%
        {dolev2012no}
\bibfield{author}{\bibinfo{person}{Danny Dolev}, \bibinfo{person}{Dror~G.
  Feitelson}, \bibinfo{person}{Joseph~Y. Halpern}, \bibinfo{person}{Raz
  Kupferman}, {and} \bibinfo{person}{Nathan Linial}.}
  \bibinfo{year}{2012}\natexlab{}.
\newblock \showarticletitle{No justified complaints: On fair sharing of
  multiple resources}. In \bibinfo{booktitle}{\emph{Proceedings of the 3rd
  Innovations in Theoretical Computer Science Conference (ITCS)}}.
  \bibinfo{pages}{68--75}.
\newblock


\bibitem[\protect\citeauthoryear{Driesen}{Driesen}{2012}]%
        {driesen2012asymmetric}
\bibfield{author}{\bibinfo{person}{Bram Driesen}.}
  \bibinfo{year}{2012}\natexlab{}.
\newblock \bibinfo{booktitle}{\emph{The Asymmetric Leximin Solution}}.
\newblock \bibinfo{type}{{T}echnical {R}eport}.
  \bibinfo{institution}{University of Heidelberg, Department of Economics}.
\newblock


\bibitem[\protect\citeauthoryear{Dubins and Spanier}{Dubins and
  Spanier}{1961}]%
        {Dubins1961How}
\bibfield{author}{\bibinfo{person}{Lester~E. Dubins} {and}
  \bibinfo{person}{Edwin~H. Spanier}.} \bibinfo{year}{1961}\natexlab{}.
\newblock \showarticletitle{How to cut a cake fairly}.
\newblock \bibinfo{journal}{\emph{Amer. Math. Monthly}} \bibinfo{volume}{68},
  \bibinfo{number}{1} (\bibinfo{year}{1961}), \bibinfo{pages}{1--17}.
\newblock


\bibitem[\protect\citeauthoryear{Engelberg}{Engelberg}{1993}]%
        {Engelberg93}
\bibfield{author}{\bibinfo{person}{Stephen Engelberg}.}
  \bibinfo{year}{1993}\natexlab{}.
\newblock \showarticletitle{Czechoslovakia breaks in two, to wide regret}.
\newblock \bibinfo{journal}{\emph{New York Times}}  \bibinfo{volume}{142}
  (\bibinfo{year}{1993}), \bibinfo{pages}{A1}.
\newblock
\newblock
\shownote{January 1, 1993.}


\bibitem[\protect\citeauthoryear{Engelmann and Strobel}{Engelmann and
  Strobel}{2004}]%
        {engelmann2004inequality}
\bibfield{author}{\bibinfo{person}{Dirk Engelmann} {and}
  \bibinfo{person}{Martin Strobel}.} \bibinfo{year}{2004}\natexlab{}.
\newblock \showarticletitle{Inequality aversion, efficiency, and maximin
  preferences in simple distribution experiments}.
\newblock \bibinfo{journal}{\emph{American Economic Review}}
  \bibinfo{volume}{94}, \bibinfo{number}{4} (\bibinfo{year}{2004}),
  \bibinfo{pages}{857--869}.
\newblock


\bibitem[\protect\citeauthoryear{Farhadi, Ghodsi, Hajiaghayi, Lahaie, Pennock,
  Seddighin, Seddighin, and Yami}{Farhadi et~al\mbox{.}}{2019}]%
        {FarhadiGhHa19}
\bibfield{author}{\bibinfo{person}{Alireza Farhadi}, \bibinfo{person}{Mohammad
  Ghodsi}, \bibinfo{person}{MohammadTaghi Hajiaghayi},
  \bibinfo{person}{Sebastien Lahaie}, \bibinfo{person}{David Pennock},
  \bibinfo{person}{Masoud Seddighin}, \bibinfo{person}{Saeed Seddighin}, {and}
  \bibinfo{person}{Hadi Yami}.} \bibinfo{year}{2019}\natexlab{}.
\newblock \showarticletitle{Fair allocation of indivisible goods to asymmetric
  agents}.
\newblock \bibinfo{journal}{\emph{Journal of Artificial Intelligence Research}}
   \bibinfo{volume}{64} (\bibinfo{year}{2019}), \bibinfo{pages}{1--20}.
\newblock


\bibitem[\protect\citeauthoryear{Fehr, Naef, and Schmidt}{Fehr
  et~al\mbox{.}}{2006}]%
        {fehr2006inequality}
\bibfield{author}{\bibinfo{person}{Ernst Fehr}, \bibinfo{person}{Michael Naef},
  {and} \bibinfo{person}{Klaus~M. Schmidt}.} \bibinfo{year}{2006}\natexlab{}.
\newblock \showarticletitle{Inequality aversion, efficiency, and maximin
  preferences in simple distribution experiments: Comment}.
\newblock \bibinfo{journal}{\emph{American Economic Review}}
  \bibinfo{volume}{96}, \bibinfo{number}{5} (\bibinfo{year}{2006}),
  \bibinfo{pages}{1912--1917}.
\newblock


\bibitem[\protect\citeauthoryear{Feige, Sapir, and Tauber}{Feige
  et~al\mbox{.}}{2021}]%
        {FeigeSaTa21}
\bibfield{author}{\bibinfo{person}{Uriel Feige}, \bibinfo{person}{Ariel Sapir},
  {and} \bibinfo{person}{Laliv Tauber}.} \bibinfo{year}{2021}\natexlab{}.
\newblock \showarticletitle{A tight negative example for {MMS} fair
  allocations}. In \bibinfo{booktitle}{\emph{Proceedings of the 17th
  International Conference on Web and Internet Economics (WINE)}}.
  \bibinfo{pages}{355--372}.
\newblock


\bibitem[\protect\citeauthoryear{Garg and Taki}{Garg and Taki}{2021}]%
        {GargTa21}
\bibfield{author}{\bibinfo{person}{Jugal Garg} {and} \bibinfo{person}{Setareh
  Taki}.} \bibinfo{year}{2021}\natexlab{}.
\newblock \showarticletitle{An improved approximation algorithm for maximin
  shares}.
\newblock \bibinfo{journal}{\emph{Artificial Intelligence}}
  \bibinfo{volume}{300} (\bibinfo{year}{2021}), \bibinfo{pages}{103547}.
\newblock


\bibitem[\protect\citeauthoryear{Ghodsi, Hajiaghayi, Seddighin, Seddighin, and
  Yami}{Ghodsi et~al\mbox{.}}{2021}]%
        {GhodsiHaSe21}
\bibfield{author}{\bibinfo{person}{Mohammad Ghodsi},
  \bibinfo{person}{Mohammad~Taghi Hajiaghayi}, \bibinfo{person}{Masoud
  Seddighin}, \bibinfo{person}{Saeed Seddighin}, {and} \bibinfo{person}{Hadi
  Yami}.} \bibinfo{year}{2021}\natexlab{}.
\newblock \showarticletitle{Fair allocation of indivisible goods: Improvement}.
\newblock \bibinfo{journal}{\emph{Mathematics of Operations Research}}
  \bibinfo{volume}{46}, \bibinfo{number}{3} (\bibinfo{year}{2021}),
  \bibinfo{pages}{1038--1053}.
\newblock


\bibitem[\protect\citeauthoryear{Goldman and Procaccia}{Goldman and
  Procaccia}{2014}]%
        {goldman2015spliddit}
\bibfield{author}{\bibinfo{person}{Jonathan Goldman} {and}
  \bibinfo{person}{Ariel~D. Procaccia}.} \bibinfo{year}{2014}\natexlab{}.
\newblock \showarticletitle{Spliddit: Unleashing fair division algorithms}.
\newblock \bibinfo{journal}{\emph{ACM SIGecom Exchanges}} \bibinfo{volume}{13},
  \bibinfo{number}{2} (\bibinfo{year}{2014}), \bibinfo{pages}{41--46}.
\newblock


\bibitem[\protect\citeauthoryear{Gutman and Nisan}{Gutman and Nisan}{2012}]%
        {gutman2012fair}
\bibfield{author}{\bibinfo{person}{Avital Gutman} {and} \bibinfo{person}{Noam
  Nisan}.} \bibinfo{year}{2012}\natexlab{}.
\newblock \showarticletitle{Fair allocation without trade}. In
  \bibinfo{booktitle}{\emph{Proceedings of the 11th International Conference on
  Autonomous Agents and Multiagent Systems (AAMAS)}}.
  \bibinfo{pages}{719--728}.
\newblock


\bibitem[\protect\citeauthoryear{Han and Suksompong}{Han and
  Suksompong}{2024}]%
        {han2024fast}
\bibfield{author}{\bibinfo{person}{Jiatong Han} {and} \bibinfo{person}{Warut
  Suksompong}.} \bibinfo{year}{2024}\natexlab{}.
\newblock \showarticletitle{Fast \& {F}air: a collaborative platform for fair
  division applications}. In \bibinfo{booktitle}{\emph{Proceedings of the 38th
  AAAI Conference on Artificial Intelligence (AAAI)}}.
  \bibinfo{pages}{23796--23798}.
\newblock


\bibitem[\protect\citeauthoryear{Herreiner and Puppe}{Herreiner and
  Puppe}{2007}]%
        {herreiner2007distributing}
\bibfield{author}{\bibinfo{person}{Dorothea~K. Herreiner} {and}
  \bibinfo{person}{Clemens Puppe}.} \bibinfo{year}{2007}\natexlab{}.
\newblock \showarticletitle{Distributing indivisible goods fairly: Evidence
  from a questionnaire study}.
\newblock \bibinfo{journal}{\emph{Analyse \& Kritik}} \bibinfo{volume}{29},
  \bibinfo{number}{2} (\bibinfo{year}{2007}), \bibinfo{pages}{235--258}.
\newblock


\bibitem[\protect\citeauthoryear{Hoefer, Schmalhofer, and Varricchio}{Hoefer
  et~al\mbox{.}}{2023}]%
        {HoeferScVa23}
\bibfield{author}{\bibinfo{person}{Martin Hoefer}, \bibinfo{person}{Marco
  Schmalhofer}, {and} \bibinfo{person}{Giovanna Varricchio}.}
  \bibinfo{year}{2023}\natexlab{}.
\newblock \showarticletitle{Best of both worlds: Agents with entitlements}. In
  \bibinfo{booktitle}{\emph{Proceedings of the 22nd International Conference on
  Autonomous Agents and Multiagent Systems (AAMAS)}}.
  \bibinfo{pages}{564--572}.
\newblock


\bibitem[\protect\citeauthoryear{Jank{\'o} and Jo{\'o}}{Jank{\'o} and
  Jo{\'o}}{2022}]%
        {janko2022cutting}
\bibfield{author}{\bibinfo{person}{Zsuzsanna Jank{\'o}} {and}
  \bibinfo{person}{Attila Jo{\'o}}.} \bibinfo{year}{2022}\natexlab{}.
\newblock \showarticletitle{Cutting a cake for infinitely many guests}.
\newblock \bibinfo{journal}{\emph{Electronic Journal of Combinatorics}}
  \bibinfo{volume}{29}, \bibinfo{number}{1} (\bibinfo{year}{2022}),
  \bibinfo{pages}{\#P1.42}.
\newblock


\bibitem[\protect\citeauthoryear{Kalai}{Kalai}{1977}]%
        {kalai1977nonsymmetric}
\bibfield{author}{\bibinfo{person}{Ehud Kalai}.}
  \bibinfo{year}{1977}\natexlab{}.
\newblock \showarticletitle{Nonsymmetric {N}ash solutions and replications of
  2-person bargaining}.
\newblock \bibinfo{journal}{\emph{International Journal of Game Theory}}
  \bibinfo{volume}{6}, \bibinfo{number}{3} (\bibinfo{year}{1977}),
  \bibinfo{pages}{129--133}.
\newblock


\bibitem[\protect\citeauthoryear{Kurokawa, Procaccia, and Wang}{Kurokawa
  et~al\mbox{.}}{2018}]%
        {KurokawaPrWa18}
\bibfield{author}{\bibinfo{person}{David Kurokawa}, \bibinfo{person}{Ariel~D.
  Procaccia}, {and} \bibinfo{person}{Junxing Wang}.}
  \bibinfo{year}{2018}\natexlab{}.
\newblock \showarticletitle{Fair enough: Guaranteeing approximate maximin
  shares}.
\newblock \bibinfo{journal}{\emph{J. ACM}} \bibinfo{volume}{64},
  \bibinfo{number}{2} (\bibinfo{year}{2018}), \bibinfo{pages}{8:1--8:27}.
\newblock


\bibitem[\protect\citeauthoryear{Lipton, Markakis, Mossel, and Saberi}{Lipton
  et~al\mbox{.}}{2004}]%
        {LiptonMaMo04}
\bibfield{author}{\bibinfo{person}{Richard~J. Lipton},
  \bibinfo{person}{Evangelos Markakis}, \bibinfo{person}{Elchanan Mossel},
  {and} \bibinfo{person}{Amin Saberi}.} \bibinfo{year}{2004}\natexlab{}.
\newblock \showarticletitle{On approximately fair allocations of indivisible
  goods}. In \bibinfo{booktitle}{\emph{Proceedings of the 5th ACM Conference on
  Economics and Computation (EC)}}. \bibinfo{pages}{125--131}.
\newblock


\bibitem[\protect\citeauthoryear{Montanari, Schmidt-Kraepelin, Suksompong, and
  Teh}{Montanari et~al\mbox{.}}{2024}]%
        {MontanariScSu24}
\bibfield{author}{\bibinfo{person}{Luisa Montanari}, \bibinfo{person}{Ulrike
  Schmidt-Kraepelin}, \bibinfo{person}{Warut Suksompong}, {and}
  \bibinfo{person}{Nicholas Teh}.} \bibinfo{year}{2024}\natexlab{}.
\newblock \showarticletitle{Weighted envy-freeness for submodular valuations}.
  In \bibinfo{booktitle}{\emph{Proceedings of the 38th AAAI Conference on
  Artificial Intelligence (AAAI)}}. \bibinfo{pages}{9865--9873}.
\newblock


\bibitem[\protect\citeauthoryear{Moulin}{Moulin}{2003}]%
        {Moulin03}
\bibfield{author}{\bibinfo{person}{Herv\'{e} Moulin}.}
  \bibinfo{year}{2003}\natexlab{}.
\newblock \bibinfo{booktitle}{\emph{Fair Division and Collective Welfare}}.
\newblock \bibinfo{publisher}{MIT Press}.
\newblock


\bibitem[\protect\citeauthoryear{Nash~Jr.}{Nash~Jr.}{1950}]%
        {nash1950bargaining}
\bibfield{author}{\bibinfo{person}{John~F. Nash~Jr.}}
  \bibinfo{year}{1950}\natexlab{}.
\newblock \showarticletitle{The bargaining problem}.
\newblock \bibinfo{journal}{\emph{Econometrica}} \bibinfo{volume}{18},
  \bibinfo{number}{2} (\bibinfo{year}{1950}), \bibinfo{pages}{155--162}.
\newblock


\bibitem[\protect\citeauthoryear{Othman, Sandholm, and Budish}{Othman
  et~al\mbox{.}}{2010}]%
        {othman2010finding}
\bibfield{author}{\bibinfo{person}{Abraham Othman}, \bibinfo{person}{Tuomas
  Sandholm}, {and} \bibinfo{person}{Eric Budish}.}
  \bibinfo{year}{2010}\natexlab{}.
\newblock \showarticletitle{Finding approximate competitive equilibria:
  Efficient and fair course allocation}. In
  \bibinfo{booktitle}{\emph{Proceedings of the 9th International Conference on
  Autonomous Agents and Multiagent Systems (AAMAS)}}.
  \bibinfo{pages}{873--880}.
\newblock


\bibitem[\protect\citeauthoryear{Pathak, S{\"o}nmez, {\"U}nver, and
  Yenmez}{Pathak et~al\mbox{.}}{2021}]%
        {pathak2021fair}
\bibfield{author}{\bibinfo{person}{Parag~A. Pathak}, \bibinfo{person}{Tayfun
  S{\"o}nmez}, \bibinfo{person}{M.~Utku {\"U}nver}, {and}
  \bibinfo{person}{M.~Bumin Yenmez}.} \bibinfo{year}{2021}\natexlab{}.
\newblock \showarticletitle{Fair allocation of vaccines, ventilators and
  antiviral treatments: Leaving no ethical value behind in health care
  rationing}. In \bibinfo{booktitle}{\emph{Proceedings of the 22nd ACM
  Conference on Economics and Computation (EC)}}. \bibinfo{pages}{785--786}.
\newblock


\bibitem[\protect\citeauthoryear{Procaccia}{Procaccia}{2019}]%
        {Procaccia19}
\bibfield{author}{\bibinfo{person}{Ariel~D. Procaccia}.}
  \bibinfo{year}{2019}\natexlab{}.
\newblock \showarticletitle{Axioms should explain solutions}.
\newblock In \bibinfo{booktitle}{\emph{The Future of Economic Design}},
  \bibfield{editor}{\bibinfo{person}{Jean-Fran\c{c}ois Laslier},
  \bibinfo{person}{Herv\'{e} Moulin}, \bibinfo{person}{M.~Remzi Sanver}, {and}
  \bibinfo{person}{William~S. Zwicker}} (Eds.). \bibinfo{publisher}{Springer},
  \bibinfo{pages}{195--199}.
\newblock


\bibitem[\protect\citeauthoryear{Raith}{Raith}{2000}]%
        {raith2000fair}
\bibfield{author}{\bibinfo{person}{Matthias~G. Raith}.}
  \bibinfo{year}{2000}\natexlab{}.
\newblock \showarticletitle{Fair-negotiation procedures}.
\newblock \bibinfo{journal}{\emph{Mathematical Social Sciences}}
  \bibinfo{volume}{39}, \bibinfo{number}{3} (\bibinfo{year}{2000}),
  \bibinfo{pages}{303--322}.
\newblock


\bibitem[\protect\citeauthoryear{Reijnierse and Potters}{Reijnierse and
  Potters}{1998}]%
        {reijnierse1998finding}
\bibfield{author}{\bibinfo{person}{J.~H. Reijnierse} {and}
  \bibinfo{person}{J.~A.~M. Potters}.} \bibinfo{year}{1998}\natexlab{}.
\newblock \showarticletitle{On finding an envy-free {P}areto-optimal division}.
\newblock \bibinfo{journal}{\emph{Mathematical Programming}}
  \bibinfo{volume}{83}, \bibinfo{number}{1--3} (\bibinfo{year}{1998}),
  \bibinfo{pages}{291--311}.
\newblock


\bibitem[\protect\citeauthoryear{Robertson and Webb}{Robertson and
  Webb}{1997}]%
        {Robertson1997Extensions}
\bibfield{author}{\bibinfo{person}{Jack~M. Robertson} {and}
  \bibinfo{person}{William~A. Webb}.} \bibinfo{year}{1997}\natexlab{}.
\newblock \showarticletitle{Extensions of cut-and-choose fair division}.
\newblock \bibinfo{journal}{\emph{Elemente der Mathematik}}
  \bibinfo{volume}{52}, \bibinfo{number}{1} (\bibinfo{year}{1997}),
  \bibinfo{pages}{23--30}.
\newblock


\bibitem[\protect\citeauthoryear{Robertson and Webb}{Robertson and
  Webb}{1998}]%
        {Robertson1998CakeCutting}
\bibfield{author}{\bibinfo{person}{Jack~M. Robertson} {and}
  \bibinfo{person}{William~A. Webb}.} \bibinfo{year}{1998}\natexlab{}.
\newblock \bibinfo{booktitle}{\emph{Cake-Cutting Algorithms: Be Fair if You
  Can}}.
\newblock \bibinfo{publisher}{Peters/CRC Press}.
\newblock


\bibitem[\protect\citeauthoryear{Segal-Halevi}{Segal-Halevi}{2019a}]%
        {segalhalevi19cake}
\bibfield{author}{\bibinfo{person}{Erel Segal-Halevi}.}
  \bibinfo{year}{2019}\natexlab{a}.
\newblock \showarticletitle{Cake-cutting with different entitlements: How many
  cuts are needed?}
\newblock \bibinfo{journal}{\emph{J. Math. Anal. Appl.}} \bibinfo{volume}{480},
  \bibinfo{number}{1} (\bibinfo{year}{2019}), \bibinfo{pages}{123382}.
\newblock


\bibitem[\protect\citeauthoryear{Segal-Halevi}{Segal-Halevi}{2019b}]%
        {Segalhalevi19}
\bibfield{author}{\bibinfo{person}{Erel Segal-Halevi}.}
  \bibinfo{year}{2019}\natexlab{b}.
\newblock \bibinfo{title}{The maximin share dominance relation}.
\newblock \bibinfo{howpublished}{arXiv preprint arXiv:1912.08763}.
\newblock


\bibitem[\protect\citeauthoryear{Segal-Halevi}{Segal-Halevi}{2020}]%
        {segal2020competitive}
\bibfield{author}{\bibinfo{person}{Erel Segal-Halevi}.}
  \bibinfo{year}{2020}\natexlab{}.
\newblock \showarticletitle{Competitive equilibrium for almost all incomes:
  Existence and fairness}.
\newblock \bibinfo{journal}{\emph{Autonomous Agents and Multi-Agent Systems}}
  \bibinfo{volume}{34}, \bibinfo{number}{1} (\bibinfo{year}{2020}),
  \bibinfo{pages}{26:1--26:50}.
\newblock


\bibitem[\protect\citeauthoryear{Shishido and Zeng}{Shishido and Zeng}{1999}]%
        {Shishido1999MarkChooseCut}
\bibfield{author}{\bibinfo{person}{Harunor Shishido} {and}
  \bibinfo{person}{Dao-Zhi Zeng}.} \bibinfo{year}{1999}\natexlab{}.
\newblock \showarticletitle{Mark-choose-cut algorithms for fair and strongly
  fair division}.
\newblock \bibinfo{journal}{\emph{Group Decision and Negotiation}}
  \bibinfo{volume}{8}, \bibinfo{number}{2} (\bibinfo{year}{1999}),
  \bibinfo{pages}{125--137}.
\newblock


\bibitem[\protect\citeauthoryear{Suksompong}{Suksompong}{2025}]%
        {Suksompong25}
\bibfield{author}{\bibinfo{person}{Warut Suksompong}.}
  \bibinfo{year}{2025}\natexlab{}.
\newblock \showarticletitle{Weighted fair division of indivisible items: A
  review}.
\newblock \bibinfo{journal}{\emph{Inform. Process. Lett.}}
  \bibinfo{volume}{187} (\bibinfo{year}{2025}), \bibinfo{pages}{106519}.
\newblock


\bibitem[\protect\citeauthoryear{Suksompong and Teh}{Suksompong and
  Teh}{2022}]%
        {SuksompongTe22}
\bibfield{author}{\bibinfo{person}{Warut Suksompong} {and}
  \bibinfo{person}{Nicholas Teh}.} \bibinfo{year}{2022}\natexlab{}.
\newblock \showarticletitle{On maximum weighted {N}ash welfare for binary
  valuations}.
\newblock \bibinfo{journal}{\emph{Mathematical Social Sciences}}
  \bibinfo{volume}{117} (\bibinfo{year}{2022}), \bibinfo{pages}{101--108}.
\newblock


\bibitem[\protect\citeauthoryear{Suksompong and Teh}{Suksompong and
  Teh}{2023}]%
        {SuksompongTe23}
\bibfield{author}{\bibinfo{person}{Warut Suksompong} {and}
  \bibinfo{person}{Nicholas Teh}.} \bibinfo{year}{2023}\natexlab{}.
\newblock \showarticletitle{Weighted fair division with matroid-rank
  valuations: {M}onotonicity and strategyproofness}.
\newblock \bibinfo{journal}{\emph{Mathematical Social Sciences}}
  \bibinfo{volume}{126} (\bibinfo{year}{2023}), \bibinfo{pages}{48--59}.
\newblock


\bibitem[\protect\citeauthoryear{Thomson}{Thomson}{1994}]%
        {thomson1994cooperative}
\bibfield{author}{\bibinfo{person}{William Thomson}.}
  \bibinfo{year}{1994}\natexlab{}.
\newblock \showarticletitle{Cooperative models of bargaining}.
\newblock In \bibinfo{booktitle}{\emph{Handbook of Game Theory with Economic
  Applications}}, \bibfield{editor}{\bibinfo{person}{Robert Aumann} {and}
  \bibinfo{person}{Sergiu Hart}} (Eds.). Vol.~\bibinfo{volume}{2}.
  \bibinfo{publisher}{Elsevier}, Chapter~35, \bibinfo{pages}{1237--1284}.
\newblock


\bibitem[\protect\citeauthoryear{Tominaga, Todo, and Yokoo}{Tominaga
  et~al\mbox{.}}{2016}]%
        {TominagaToYo16}
\bibfield{author}{\bibinfo{person}{Yuto Tominaga}, \bibinfo{person}{Taiki
  Todo}, {and} \bibinfo{person}{Makoto Yokoo}.}
  \bibinfo{year}{2016}\natexlab{}.
\newblock \showarticletitle{Manipulations in two-agent sequential allocation
  with random sequences}. In \bibinfo{booktitle}{\emph{Proceedings of the 15th
  International Conference on Autonomous Agents and Multiagent Systems
  (AAMAS)}}. \bibinfo{pages}{141--149}.
\newblock


\bibitem[\protect\citeauthoryear{Wu, Zhang, and Zhou}{Wu et~al\mbox{.}}{2023}]%
        {WuZhZh23}
\bibfield{author}{\bibinfo{person}{Xiaowei Wu}, \bibinfo{person}{Cong Zhang},
  {and} \bibinfo{person}{Shengwei Zhou}.} \bibinfo{year}{2023}\natexlab{}.
\newblock \showarticletitle{Weighted {EF1} allocations for indivisible chores}.
  In \bibinfo{booktitle}{\emph{Proceedings of the 24th ACM Conference on
  Economics and Computation (EC)}}. \bibinfo{pages}{1155}.
\newblock


\bibitem[\protect\citeauthoryear{Xiao and Ling}{Xiao and Ling}{2020}]%
        {XiaoLi20}
\bibfield{author}{\bibinfo{person}{Mingyu Xiao} {and} \bibinfo{person}{Jiaxing
  Ling}.} \bibinfo{year}{2020}\natexlab{}.
\newblock \showarticletitle{Algorithms for manipulating sequential allocation}.
  In \bibinfo{booktitle}{\emph{Proceedings of the 34th AAAI Conference on
  Artificial Intelligence (AAAI)}}. \bibinfo{pages}{2302--2309}.
\newblock


\bibitem[\protect\citeauthoryear{Zeng}{Zeng}{2000}]%
        {zeng2000approximate}
\bibfield{author}{\bibinfo{person}{Dao-Zhi Zeng}.}
  \bibinfo{year}{2000}\natexlab{}.
\newblock \showarticletitle{Approximate envy-free procedures}.
\newblock In \bibinfo{booktitle}{\emph{Game Practice: Contributions from
  Applied Game Theory}}, \bibfield{editor}{\bibinfo{person}{Fioravante
  Patrone}, \bibinfo{person}{Ignacio Garc\'{i}a-Jurado}, {and}
  \bibinfo{person}{Stef Tijs}} (Eds.). \bibinfo{publisher}{Springer},
  Chapter~17, \bibinfo{pages}{259--271}.
\newblock


\end{thebibliography}

\appendix
\section{Omitted Proofs}
\label{app:omitted}

\subsection*{Proof of \Cref{thm:picking-WEF}}

Our proof follows the template laid out in Theorem~3.3 of \citet{ChakrabortyIgSu20}.
For notational convenience, let $y := 1-x$.

($\Rightarrow$) 
Assume that $\pi$ fulfills \wefxy.
In particular, 
for every integer $k\geq 1$,
\wefxy is satisfied in an instance where every agent has utility $1$ for some $k$ items and $0$ for the remaining $m-k$ items.
The $k$ valuable items are picked in the first $k$ turns, so if we define $t_i$ and $t_j$ based on this prefix, agent $i$ has utility $t_i$ for bundle $A_i$ and utility $t_j$ for bundle~$A_j$.
If $t_j=0$ then the claim trivially holds. Otherwise, the most valuable item $g$ in $A_j$ has a value of $1$, so by definition of \wefxy,
\begin{align*}
\frac{t_i+y}{w_i}
=
\frac{u_i(A_i) + y\cdot u_i(g)}{w_i} 
\geq
\frac{u_i(A_j) - x\cdot u_i(g)}{w_j}
=
\frac{t_j-x}{w_j}.
\end{align*}
It follows that $t_i + y \geq \frac{w_i}{w_j}(t_j-x)$.

($\Leftarrow$)
Consider any pair of agents $i,j$.
It suffices to show that the WEF$(x,y)$ condition for agent $i$ towards agent $j$ is fulfilled after every pick by agent $j$.
Consider any pick by agent~$j$---suppose it is the agent's $t_j$-th pick.
We divide the sequence of picks up to this pick into \emph{phases}, where each phase $\ell\in[t_j]$ consists of the picks after agent~$j$'s $(\ell-1)$st pick up to (and including) the agent's $\ell$-th pick. 
We denote:
\begin{itemize}
\item $\tau_\ell := $ the number of times agent $i$ picks in phase $\ell$ (that is, between agent $j$'s
 $(\ell-1)$st and $\ell$-th picks);
\item 
$\alpha_1^\ell, \alpha_2^\ell, \dots,\alpha_{\tau_\ell}^\ell := $
agent $i$'s utilities for the items that she picks herself in phase $\ell$ (if any);
\item 
$\alpha_\ell :=  \alpha_1^\ell + \cdots + \alpha_{\tau_\ell}^\ell = $ the total utility gained by agent $i$ in phase $\ell$;
\item 
$\beta_\ell := $ agent $i$'s utility for the item that agent $j$ picks at the end of phase $\ell$.
\end{itemize}

Let $\rho := w_i/w_j$. 
For each integer $s\in[t_j]$,
applying the condition in the theorem statement to the picking sequence up to and including phase $s$, we have
\begin{equation}
\label{eq:WWEF-picks}
y+\sum_{\ell=1}^s \tau_\ell ~\ge~ \rho(s-x)
\qquad\qquad \forall s\in [t_j].
\end{equation}
Every time agent $i$ picks, she picks an item with the highest utility for her among the available items. 
In particular, in each phase $\ell$, she picks $\tau_{\ell}$ items each of which yields at least as high utility to her as each item not yet picked by agent $j$. This implies
\begin{equation}
\label{eq:WWEF-utils}
\alpha_\ell
\geq \tau_\ell \cdot \max_{\ell\le r \le t_j} \beta_r
\qquad\qquad \forall \ell\in [t_j].
\end{equation}
Note that (\ref{eq:WWEF-utils}) holds trivially if $\tau_\ell=0$ since both sides are zero.

To prove the theorem, we prove the following auxiliary claim for all $s\in[t_j]$:
\begin{align}
\label{eq:claim}
y\cdot \max_{1\le r\le t_j}\beta_r &+ \sum_{\ell=1}^s
\alpha_{\ell} \ge \rho\left(\sum_{\ell=1}^s\beta_\ell - x\beta_1 \right) + \left(y+\sum_{\ell=1}^s\tau_\ell - \rho(s-x)\right)\max_{s\le r\le t_j}\beta_r.
\end{align}
To prove \eqref{eq:claim}, we use induction on $s$.
For the base case $s=1$, apply (\ref{eq:WWEF-utils}) with $\ell=1$, adding the term $y\cdot \max_{1\le r\le t_j}\beta_r$ to both sides:
\begin{align*}
y\cdot \max_{1\le r\le t_j}\beta_r + \alpha_1
&\ge (y+\tau_1)\cdot  \max_{1\le r\le t_j}\beta_r \\
&\ge \rho y\beta_1 + (y+\tau_1-\rho y)
\cdot \max_{1\le r\le t_j}\beta_r \\
&= \rho(1-x)\beta_1 + (y+\tau_1-\rho(1-x))\cdot \max_{1\le r\le t_j}\beta_r.
\end{align*}
For the inductive step, assume that \eqref{eq:claim} holds for some $s-1\ge 1$; we will prove it for $s$.
Using the inductive hypothesis for the first inequality below, we have
\begin{align*}
y\cdot &\max_{1\le r\le t_j}\beta_r + \sum_{\ell=1}^s \alpha_{\ell} \\
&= y\cdot \max_{1\le r\le t_j}\beta_r + \sum_{\ell=1}^{s-1}\alpha_{\ell} + \alpha_{s} \\
&\ge \rho\left(\sum_{\ell=1}^{s-1}\beta_\ell - x\beta_1\right) + \left(y+\sum_{\ell=1}^{s-1}\tau_\ell - \rho(s-1-x)\right) \cdot\max_{s-1\le r\le t_j}\beta_r + \alpha_{s}
\\
&\overset{\eqref{eq:WWEF-utils}}{\ge} \rho\left(\sum_{\ell=1}^{s-1}\beta_\ell - x\beta_1\right) + \left(y+\sum_{\ell=1}^{s-1}\tau_\ell - \rho(s-1-x)\right) \cdot\max_{s-1\le r\le t_j}\beta_r  + \tau_s\max_{s \le r \le t_j}\beta_r 
\\
&\ge \rho\left(\sum_{\ell=1}^{s-1}\beta_\ell - x\beta_1\right) + \left(y+\sum_{\ell=1}^{s-1}\tau_\ell - \rho(s-1-x)\right)  \cdot\max_{s\le r\le t_j}\beta_r  + \tau_s\max_{s\le r\le t_j}\beta_r 
&&\text{(*)}
\\
&= \rho\left(\sum_{\ell=1}^{s-1}\beta_\ell - x\beta_1\right) + \left(y+\sum_{\ell=1}^s\tau_\ell - \rho(s-1-x)\right) \cdot\max_{s\le r\le t_j}\beta_r \\
&= \rho\left(\sum_{\ell=1}^{s-1}\beta_\ell - x\beta_1\right) + \rho\max_{s\le r\le t_j}\beta_r   + \left(y+\sum_{\ell=1}^s\tau_\ell - \rho(s-x)\right)
\cdot \max_{s\le r\le t_j}\beta_r \\
&\ge \rho\left(\sum_{\ell=1}^{s-1}\beta_\ell - x\beta_1\right) + \rho\beta_s  + \left(y+\sum_{\ell=1}^s\tau_\ell - \rho(s-x)\right)\cdot \max_{s\le r\le t_j}\beta_r \\
&= \rho\left(\sum_{\ell=1}^{s}\beta_\ell - x\beta_1\right) + \left(y+\sum_{\ell=1}^s\tau_\ell - \rho(s-x)\right) \cdot \max_{s\le r\le t_j}\beta_r.
\end{align*}
The inequality denoted by (*) follows from the fact that, by \eqref{eq:WWEF-picks},
the expression in the rightmost parentheses is at least $0$, and  $\max_{s-1\le r\le t_j}\beta_r \ge \max_{s\le r\le t_j}\beta_r$.
This completes the induction and establishes \eqref{eq:claim}.

Using \eqref{eq:claim} with $s = t_j$ and \eqref{eq:WWEF-picks}, we get that
\begin{align*}
y\cdot \max_{1\le r\le t_j}\beta_r + \sum_{\ell=1}^{t_j}
\alpha_{\ell}
~\ge~ \rho\left(\sum_{\ell=1}^{t_j}\beta_\ell - x\beta_1 \right).
\end{align*}

Letting $A_i$ and $A_j$ be the bundles of agents $i$ and $j$ after agent $j$'s $t_j$-th pick, and $g$ be the item in $A_j$ for which agent $i$ has the highest utility (so that $u_i(g) = \max_{1\le r\le t_j}\beta_r$), we have 
\begin{align*}
y\cdot u_i(g) + u_i(A_i) 
~\ge~
\frac{w_i}{w_j}\cdot (u_i(A_j) - x\beta_1) 
~\ge~
\frac{w_i}{w_j}\cdot (u_i(A_j) - x\cdot u_i(g)).
\end{align*}
Hence the \wefxy condition for agent $i$ towards agent $j$ is fulfilled, completing the proof.

\subsection*{Proof of \Cref{thm:picking-WPROP}}

Our proof proceeds in a similar way as the WPROP1 proof in Theorem~3.3 of \citet{ChakrabortyScSu21}.
For notational convenience, let $y := 1-x$.

($\Rightarrow$) 
Assume that $\pi$ fulfills WPROP$(x,y)$.
In particular, for every integer $k\geq 1$, WPROP$(x,y)$ holds for an instance where every agent has utility $1$ for some $k$ items and $0$ for the remaining $m-k$ items.
The $k$ valuable items are picked in the first $k$ turns, so if we define $t_i$ based on this prefix, agent $i$ has utility $t_i$ for bundle $A_i$ and $k$ for the entire set of items.
If $t_i=k$ then the claim trivially holds.
Otherwise, the most valuable item $g$ not in $A_i$ has a value of $1$, 
so by definition of WPROP$(x,y)$,
\begin{align*}
t_i+y
=
u_i(A_i) + y\cdot u_i(g)
\geq
\frac{w_i}{w_N}\cdot (u_i(M) - n\cdot x\cdot u_i(g))
=
\frac{w_i}{w_N}\cdot (k-nx).
\end{align*}
It follows that $t_i + (1-x)\geq \frac{w_i}{w_N}\cdot (k-nx)$.

($\Leftarrow$) 
Let $\rho := w_i/w_N$, and assume that in $\pi$, agent $i$ picks $t_i$ times.
Denote by $\tau_0$ the number of times other agents pick before agent~$i$'s first pick, and $\tau_\ell$ the number of times other agents pick after agent $i$'s $\ell$-th pick and before her $(\ell+1)$st pick for $1\leq\ell\leq t_i$, where we insert a dummy empty pick for agent $i$ at the end of the picking sequence.
Let agent $i$'s utility for the items that other agents pick in phase $\ell$ be $\beta^\ell_1,\beta^\ell_2,\dots,\beta^\ell_{\tau_\ell}$, where the phases $0,1,\dots,t_i$ are defined as in the previous sentence.
Moreover, let agent $i$'s utility for the items that she picks be $\alpha_1,\dots,\alpha_{t_i}$, respectively.
We may assume that at least one other agent has at least one pick; otherwise the WPROP$(x,y)$ condition holds trivially.
Let $B$ be a singleton set consisting of an item in $M\setminus A_i$ for which agent~$i$ has the highest utility.
For the prefix before agent $i$'s first pick, the condition in the theorem statement implies that $y\ge \rho(\tau_0-nx)$, i.e., $y+\rho nx \ge \rho\tau_0$.
By definition of $B$, we have $u_i(B)\ge \beta^\ell_j$ for all $0\leq\ell\leq t_i$ and $1\leq j\leq \tau_\ell$, so $\tau_0u_i(B)\ge \sum_{j=1}^{\tau_0}\beta^0_j$.
It follows that 
\begin{align}
(y+\rho nx)\cdot u_i(B) 
&= (\rho\tau_0 + (y+\rho nx-\rho\tau_0))\cdot u_i(B) \nonumber \\
&\ge \rho\sum_{j=1}^{\tau_0}\beta^0_j + (y+\rho nx-\rho\tau_0)\max_{\substack{0\le \ell\le t_i \\ 1\le j\le \tau_\ell}}\beta^\ell_j.
\label{eq:WPROP-init}
\end{align}

We claim that for each $0\leq s\leq t_i$,
\begin{align*}
(y+\rho nx)\cdot u_i(B) &+ (1-\rho)\sum_{\ell=1}^s \alpha_\ell \\ 
&\ge \rho\sum_{\ell=0}^s\sum_{j=1}^{\tau_\ell}\beta^\ell_j + \left(y+\rho nx+s(1-\rho)-\rho\sum_{\ell=0}^s \tau_\ell\right)\max_{\substack{s\le \ell\le t_i \\ 1\le j\le \tau_\ell}}\beta^\ell_j.
\end{align*}
To prove the claim, we use induction on $s$.
The base case $s=0$ follows immediately from (\ref{eq:WPROP-init}) since the terms $(1-\rho)\sum_{\ell=1}^s \alpha_\ell$ and $s(1-\rho)$ vanish.
For the inductive step, assume that the claim holds for some $s-1\ge 0$; we will prove it for $s$.
For the prefix before agent $i$'s $s$-th pick, the condition in the theorem statement implies that $(s-1) + y \ge \rho\left(s-1 + \sum_{\ell=0}^{s-1}\tau_\ell - nx\right)$, or equivalently
\begin{equation}
\label{eq:WPROP-picks}
(y + \rho nx)+(s-1)(1-\rho)\ge \rho\sum_{\ell=0}^{s-1} \tau_\ell.
\end{equation}
Every time agent $i$ picks, she picks an item with the highest utility for her among the available items, which means that
\begin{equation}
\label{eq:WPROP-utils}
\alpha_s \geq \beta^s_j
\qquad\qquad \forall j\in[\tau_s].
\end{equation}
We have
\begin{align*}
&(y + \rho nx)\cdot u_i(B) + (1-\rho)\sum_{\ell=1}^s \alpha_\ell \\
&= (y + \rho nx)\cdot u_i(B) + (1-\rho)\sum_{\ell=1}^{s-1}\alpha_\ell + (1-\rho)\alpha_s \\
&\ge \rho\sum_{\ell=0}^{s-1}\sum_{j=1}^{\tau_\ell}\beta^\ell_j + \left(y + \rho nx+(s-1)(1-\rho)-\rho\sum_{\ell=0}^{s-1} \tau_\ell\right)  \cdot \max_{\substack{s-1\le \ell\le t_i \\ 1\le j\le \tau_\ell}} \beta^\ell_j + (1-\rho)\alpha_s \\
&\ge \rho\sum_{\ell=0}^{s-1}\sum_{j=1}^{\tau_\ell}\beta^\ell_j + \left(y + \rho nx+(s-1)(1-\rho)-\rho\sum_{\ell=0}^{s-1} \tau_\ell\right)  \cdot \max_{\substack{s\le \ell\le t_i \\ 1\le j\le \tau_\ell}}\beta^\ell_j + (1-\rho)\alpha_s \\
&= \rho\sum_{\ell=0}^{s-1}\sum_{j=1}^{\tau_\ell}\beta^\ell_j +  \left(y + \rho nx+s(1-\rho)-\rho\sum_{\ell=0}^s \tau_\ell\right)  \cdot \max_{\substack{s\le \ell\le t_i \\ 1\le j\le \tau_\ell}}\beta^\ell_j  + \left(\rho\tau_s - (1-\rho)\right)\max_{\substack{s\le \ell\le t_i \\ 1\le j\le \tau_\ell}}\beta^\ell_j \\
&\quad + (1-\rho)\alpha_s \\
&\ge \rho\sum_{\ell=0}^{s-1}\sum_{j=1}^{\tau_\ell}\beta^\ell_j +  \left(y + \rho nx+s(1-\rho)-\rho\sum_{\ell=0}^s \tau_\ell\right)  \cdot \max_{\substack{s\le \ell\le t_i \\ 1\le j\le \tau_\ell}}\beta^\ell_j  + \frac{\rho\tau_s - (1-\rho)}{\tau_s}\cdot\sum_{j=1}^{\tau_s}\beta_j^s \\
&\quad + \frac{1-\rho}{\tau_s}\cdot\sum_{j=1}^{\tau_s}\beta_j^s \\
&= \rho\sum_{\ell=0}^{s-1}\sum_{j=1}^{\tau_\ell}\beta^\ell_j +  \left(y + \rho nx+s(1-\rho)-\rho\sum_{\ell=0}^s \tau_\ell\right)  \cdot \max_{\substack{s\le \ell\le t_i \\ 1\le j\le \tau_\ell}}\beta^\ell_j  + \rho\sum_{j=1}^{\tau_s}\beta_j^s \\
&= \rho\sum_{\ell=0}^{s}\sum_{j=1}^{\tau_\ell}\beta^\ell_j + \left(y + \rho nx+s(1-\rho)-\rho\sum_{\ell=0}^s \tau_\ell\right)  \cdot \max_{\substack{s\le \ell\le t_i \\ 1\le j\le \tau_\ell}}\beta^\ell_j.
\end{align*}
Here, the first inequality follows from the inductive hypothesis, the second from (\ref{eq:WPROP-picks}) and the fact that $\max_{\substack{s-1\le \ell\le t_i \\ 1\le j\le \tau_\ell}}\beta^\ell_j \ge \max_{\substack{s\le \ell\le t_i \\ 1\le j\le \tau_\ell}}\beta^\ell_j$, and the third from (\ref{eq:WPROP-utils}).
This completes the induction and establishes the claim.

Finally, taking $s=t_i$ in the claim, we get
\begin{align*}
(y + \rho nx)\cdot u_i(B) &+ (1-\rho)\sum_{\ell=1}^{t_i} \alpha_\ell \\
&\ge \rho\sum_{\ell=0}^{t_i}\sum_{j=1}^{\tau_\ell}\beta^\ell_j + \left(y + \rho nx+t_i(1-\rho)-\rho\sum_{\ell=0}^{t_i} \tau_\ell\right) \cdot \max_{1\le j\le \tau_{t_i}}\beta^\ell_j \\
&\ge \rho\sum_{\ell=0}^{t_i}\sum_{j=1}^{\tau_\ell}\beta^\ell_j,
\end{align*}
where the second inequality follows from the condition in the theorem statement for the entire picking sequence.
Adding $\rho\sum_{\ell=1}^{t_i} \alpha_\ell$ to both sides, we find that
\[
(y + \rho nx)\cdot u_i(B) + \sum_{\ell=1}^{t_i} \alpha_\ell \ge \rho\left(\sum_{\ell=1}^{t_i} \alpha_\ell + \sum_{\ell=0}^{t_i}\sum_{j=1}^{\tau_\ell}\beta^\ell_j\right).
\]
In other words, $(y + \rho nx)\cdot u_i(B) + u_i(A_i) \ge \rho\cdot u_i(M)$.
Rearranging this yields $u_i(A_i) + y\cdot u_i(B) \ge \rho\cdot (u_i(M) - n\cdot x\cdot u_i(B))$.
This means that the allocation produced by $\pi$ is WPROP$(x,y)$, as desired.

\section{Weighted Egalitarian Rule with Binary Valuations}
\label{app:WEG-binary}
In this appendix, we prove that the WEG rule has desirable properties in the setting of \emph{binary (additive)} utilities, which is slightly more general than that of identical items. 
In the binary setting, 
for each agent~$i$ there is a set $Z_i\subseteq M$
of items with a value of $1$, and the value of the other items is~$0$. Denote $z_i := |Z_i| = $ the number of items valued positively by agent~$i$.
We assume without loss of generality that each item is valued positively (at $1$) by at least one agent, and that each agent positively values at least one item. 
For convenience, we assume throughout this appendix that the weights are normalized so that $w_N = 1$.

The proportional share of agent $i$ is $q_i := \frac{w_i}{w_N}\cdot z_i = w_i z_i$.
Since the OMMS and APS are at most the proportional share, and both of them are integers, we have that 
both shares are at most 
$\floor{q_i}$.
We will prove that every WEG allocation guarantees to each agent $i$ a value of at least $\floor{q_i}$, which means that it is APS-fair and OMMS-fair. To this end, we need some lemmas.

\begin{lemma}
\label{lem:quotagraph}
In each allocation $A$,
if $|A_i\cap Z_i| < w_i z_i$  for some agent $i$,
then $|A_j \cap Z_i| > w_j z_i$ for some other agent $j$.
\end{lemma}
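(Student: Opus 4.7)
The plan is to use a straightforward averaging argument based on the fact that the set $Z_i$ is partitioned across agents by the allocation $A$. Concretely, since $A=(A_1,\dots,A_n)$ is a partition of $M$, restricting to $Z_i$ gives a partition of $Z_i$ into the pieces $A_j \cap Z_i$ for $j \in N$, so
\begin{align*}
\sum_{j \in N} |A_j \cap Z_i| = |Z_i| = z_i.
\end{align*}

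On the weight side, using the normalization $w_N = 1$ assumed at the start of the subsection, we also have
\begin{align*}
\sum_{j \in N} w_j z_i = z_i \cdot \sum_{j \in N} w_j = z_i \cdot w_N = z_i.
\end{align*}
So the two sums are equal term-by-termly in total. Thus if the term for agent $i$ satisfies the strict inequality $|A_i \cap Z_i| < w_i z_i$, then the remaining terms must overcompensate: $\sum_{j \neq i} |A_j \cap Z_i| > \sum_{j \neq i} w_j z_i$, and by pigeonhole there must exist some $j \neq i$ with $|A_j \cap Z_i| > w_j z_i$, as required.

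I do not expect any real obstacle here; the statement is essentially a pigeonhole observation once one recognizes that the partition of $Z_i$ and the partition of unit weight across agents both sum to $z_i$. The only subtlety worth double-checking is that the normalization $w_N = 1$ (or equivalently, the interpretation of $w_j z_i$ as agent $j$'s share of $Z_i$ under proportional division) is being used correctly; without this normalization one would need to write $w_j z_i / w_N$, but the calculation is otherwise identical.
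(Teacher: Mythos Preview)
Your proof is correct and follows essentially the same approach as the paper: both decompose $z_i$ as $\sum_{j\in N}|A_j\cap Z_i|$ and as $\sum_{j\in N} w_j z_i$ (using the normalization $w_N=1$), then subtract the $i$-th term and apply pigeonhole to the remaining $n-1$ terms. Your remark about the normalization is also exactly on point.
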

\begin{proof}
We can decompose $|Z_i|$ into a sum of $n$ elements in two ways:
\begin{align*}
|Z_i| 
&= 
\sum_{j=1}^n |A_j\cap Z_i| && \text{since the $A_j$'s are pairwise disjoint and their union is $M$;}
\\
|Z_i|
&= 
z_i =
\sum_{j=1}^n w_j z_i &&\text {since $\sum_{j=1}^n w_j = 1$.}
\end{align*}
Subtracting the term $|A_i\cap Z_i|$ from both sums gives:
\begin{align*}
\sum_{j \neq i} |A_j\cap Z_i| 
&=
\sum_{j \neq i} w_j z_i
+ 
w_i z_i - |A_i\cap Z_i|
\\
&>
\sum_{j \neq i} w_j z_i && \text{since $|A_i\cap Z_i|<w_i z_i$,}
\end{align*}
so there is some $j\neq i$ such that
$
|A_j\cap Z_i| > w_j z_i,
$
as claimed.
\end{proof}

We define the \emph{quota graph} of an allocation $A$ as the directed graph in which the vertices correspond to the agents, and there is an edge from $i$ to $j$ if and only if $|A_i\cap Z_i| < w_i z_i$ and $|A_j \cap Z_i| > w_j z_i$.
Lemma~\ref{lem:quotagraph} implies that each agent $i$ with $|A_i \cap Z_i| < w_i z_i$ has at least one outgoing edge. 

In what follows, we consider only \emph{wasteless} allocations, that is, allocations in which each item is allocated to an agent who values it at $1$. In other words, $A_i\subseteq Z_i$ for all $i\in N$.
Clearly, every WEG allocation is wasteless.

\begin{lemma}
\label{lem:quotagraph-path}
Let $A$ be any wasteless allocation. 
Suppose that in the quota graph of $A$, there 
are edges $i\to j\to k$ (where possibly $i = k$).
Then $z_i < z_j$.
\end{lemma}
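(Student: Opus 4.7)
The plan is to unpack the definition of the quota graph for both edges and chain the two inequalities through the common vertex $j$.

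First I would note that since $A$ is wasteless, $A_j \subseteq Z_j$, so in particular $|A_j \cap Z_j| = |A_j|$. The edge $j \to k$ in the quota graph then gives (by the first condition at the tail of that edge) the inequality
\begin{align*}
|A_j| \;=\; |A_j \cap Z_j| \;<\; w_j\, z_j.
\end{align*}

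Next I would look at the edge $i \to j$. The second condition at the head of that edge gives
\begin{align*}
|A_j \cap Z_i| \;>\; w_j\, z_i.
\end{align*}
Since $A_j \cap Z_i \subseteq A_j$, we have $|A_j \cap Z_i| \le |A_j|$, so combining with the previous display yields $w_j z_i < |A_j| < w_j z_j$. Dividing by $w_j > 0$ gives $z_i < z_j$, as required.

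The proof is entirely mechanical once the two edge conditions are written out, so I do not expect any real obstacle. The only subtle point is remembering to use waste\-lessness to identify $|A_j \cap Z_j|$ with $|A_j|$ (so that the ``tail'' condition on edge $j \to k$ directly bounds $|A_j|$), and to use the trivial inclusion $A_j \cap Z_i \subseteq A_j$ (so that the ``head'' condition on edge $i \to j$ also bounds $|A_j|$ from below). With those two observations, the common quantity $|A_j|$ sits between $w_j z_i$ and $w_j z_j$, and the conclusion is immediate.
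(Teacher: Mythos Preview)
Your proof is correct and follows essentially the same approach as the paper: both use wastelessness to identify $|A_j\cap Z_j|$ with $|A_j|$, then chain the head condition of the edge $i\to j$ and the tail condition of the edge $j\to k$ through the common quantity $|A_j|$ to obtain $w_j z_i < |A_j| < w_j z_j$. The only difference is the order in which you present the two inequalities.
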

\begin{proof}
The edge entering $j$ implies
\begin{align*}
w_j z_i < |A_j\cap Z_i|
&\leq
|A_j| = |A_j\cap Z_j|,
\end{align*}
while the edge exiting $j$ implies
\begin{align*}
|A_j \cap Z_j| &< w_j z_j.
\end{align*}
Combining the two inequalities gives $z_i < z_j$. 
\end{proof}
Since the inequality in Lemma~\ref{lem:quotagraph-path} cannot hold simultaneously for all pairs of adjacent vertices in a directed cycle, we have the following corollary.
\begin{corollary}
\label{lem:quotagraph-acyclic}
The quota graph of a wasteless allocation has no directed cycles.
\end{corollary}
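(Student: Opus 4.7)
The plan is to derive the acyclicity directly from Lemma~\ref{lem:quotagraph-path} by a straightforward contradiction argument: a directed cycle would force a strictly increasing chain of $z$-values that loops back on itself.

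First, I would assume for contradiction that the quota graph of some wasteless allocation $A$ contains a directed cycle $v_0 \to v_1 \to v_2 \to \dots \to v_{\ell-1} \to v_0$ of length $\ell \ge 1$. (Note that the claim is vacuous if $\ell = 1$, since a self-loop would already need an incoming and outgoing edge at the same vertex, and Lemma~\ref{lem:quotagraph-path} with $i=j=k$ would already give $z_i < z_i$; still, the general case subsumes this.) I read the indices modulo $\ell$.

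Next, for each index $s \in \{0, 1, \dots, \ell-1\}$, I would consider the consecutive pair of edges $v_{s-1} \to v_s \to v_{s+1}$ appearing in the cycle. Since $A$ is wasteless, Lemma~\ref{lem:quotagraph-path} applies with $i = v_{s-1}$, $j = v_s$, and $k = v_{s+1}$, yielding $z_{v_{s-1}} < z_{v_s}$. Chaining these strict inequalities around the cycle gives
\begin{align*}
z_{v_0} < z_{v_1} < z_{v_2} < \dots < z_{v_{\ell-1}} < z_{v_0},
\end{align*}
which is impossible. Hence no directed cycle exists.

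I do not expect any real obstacle here: the whole content has been packaged into Lemma~\ref{lem:quotagraph-path}, and the corollary is just its standard consequence for directed graphs (a relation that strictly increases along every edge cannot have a cycle). The only minor care needed is to handle the $i = k$ case allowed in the lemma statement, but since the lemma conclusion $z_i < z_j$ does not depend on $k$, the same chaining argument works uniformly for cycles of any length.
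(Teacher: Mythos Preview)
Your proof is correct and follows exactly the same approach as the paper: the paper simply notes that the strict inequality of Lemma~\ref{lem:quotagraph-path} cannot hold for all consecutive pairs along a directed cycle, which is precisely the chaining argument you spelled out in detail.
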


Let $a_i := |A_i| = |A_i\cap Z_i|$, and let 
$d_i := \frac{u_i(A_i)}{u_i(M)} - \frac{w_i}{w_N} = \frac{a_i}{z_i} - w_i$ for all $i\in N$.
By definition, a WEG allocation maximizes the leximin vector of the $d_i$'s. 

\begin{theorem}
With binary valuations, every WEG allocation gives each agent $i$ a value of at least $\floor{q_i} = \floor{w_i z_i}$.
\end{theorem}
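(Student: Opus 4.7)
My plan is a proof by contradiction. Suppose $A$ is a WEG allocation in which some agent $i$ has $a_i < \lfloor w_i z_i \rfloor$. First I would observe that we may take $A$ to be wasteless: moving any item held by an agent who does not value it to an agent who does strictly increases the latter's $d$-value and leaves all others unchanged, so every WEG allocation has a wasteless representative. Under this assumption, $|A_i \cap Z_i| = a_i < w_i z_i$ and $d_i \le -1/z_i < 0$. For later convenience I would also choose $i$ so as to minimize $d_i$ among all agents who violate lower quota.

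By Lemma~\ref{lem:quotagraph}, agent $i$ has an outgoing edge in the quota graph, and by Corollary~\ref{lem:quotagraph-acyclic} any walk that follows outgoing edges starting from $v_0 := i$ terminates at some sink $v_\ell$. The contrapositive of Lemma~\ref{lem:quotagraph} at a sink yields $|A_{v_\ell} \cap Z_{v_\ell}| \ge w_{v_\ell} z_{v_\ell}$, so $d_{v_\ell} \ge 0$. Along the resulting path $v_0 \to v_1 \to \cdots \to v_\ell$, I would define an alternative allocation $A'$ by the sequential transfer: for each $r = \ell, \ell-1, \dots, 1$, move one item from $A_{v_r} \cap Z_{v_{r-1}}$ (which is nonempty by the edge condition $|A_{v_r} \cap Z_{v_{r-1}}| > w_{v_r} z_{v_{r-1}} \ge 0$) to agent $v_{r-1}$. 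Since $A$ is wasteless, each transferred item lies in $Z_{v_r}$ as well, so every intermediate agent swaps out one unit of value for another, leaving $a_{v_r}$ unchanged for $1 \le r \le \ell - 1$; meanwhile $a_{v_0}$ gains one and $a_{v_\ell}$ loses one. Thus the $d$-vector of $A'$ differs from that of $A$ in exactly two coordinates: $d'_{v_0} = d_{v_0} + 1/z_{v_0}$ and $d'_{v_\ell} = d_{v_\ell} - 1/z_{v_\ell}$.

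To conclude, I must show this alteration strictly lex-improves the sorted $d$-vector, contradicting the WEG optimality of $A$. A sufficient condition is $d'_{v_\ell} > d_{v_0}$: together with the trivial $d'_{v_0} > d_{v_0}$, the transfer then removes one copy of $d_{v_0}$ from the multiset of values and inserts two values strictly above $d_{v_0}$, strictly improving the sorted vector at the slot previously occupied by $v_0$ (whether $d_{v_0}$ was a unique minimum or tied). Substituting $d_{v_0} \le -1/z_{v_0}$ and $d_{v_\ell} \ge 0$, this inequality further reduces to $z_{v_\ell} > z_{v_0}$. Lemma~\ref{lem:quotagraph-path} chains $z_{v_0} < z_{v_1} < \cdots < z_{v_{\ell-1}}$ because each interior $v_r$ has both an incoming and an outgoing edge, but leaves open the comparison between $z_{v_{\ell-1}}$ and $z_{v_\ell}$ since its hypothesis fails for the terminal edge.

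The main obstacle is closing this last gap, which I expect to be the crux of the proof, and I would resolve it by a case split. In the easy case $z_{v_\ell} \ge z_{v_{\ell-1}}$ the chain extends, giving $z_{v_\ell} > z_{v_0}$ directly. In the hard case $z_{v_\ell} < z_{v_{\ell-1}}$, the terminal edge inequality $|A_{v_\ell} \cap Z_{v_{\ell-1}}| > w_{v_\ell} z_{v_{\ell-1}}$ combined with $|A_{v_\ell}| \ge w_{v_\ell} z_{v_\ell}$ forces $d_{v_\ell} > w_{v_\ell}(z_{v_{\ell-1}} - z_{v_\ell})/z_{v_\ell}$, which should furnish enough additional slack to verify $d'_{v_\ell} > d_{v_0}$ directly, bypassing any $z$-comparison. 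A final subtlety to handle is the possibility that $d_{v_0}$ is not the global minimum of the $d$-vector (some non-violator with very small $z$ might achieve a smaller $d$); the initial choice of $i$ as a minimum-$d$ violator, together with possibly re-selecting the path so that the path's endpoint $v_\ell$ is suitably positioned, should block this pathology.
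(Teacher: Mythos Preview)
Your plan is essentially the paper's proof: contradiction, walk the quota graph from the violator to a sink, shift one item along the path, and show the sink's new $d$-value exceeds the violator's old one. Two small remarks.

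First, your case split on $z_{v_\ell}$ versus $z_{v_{\ell-1}}$ is slightly off-target. The sufficient condition you yourself isolate is $z_{v_\ell} > z_{v_0}$, so the natural split is on $z_{v_\ell}$ versus $z_{v_0}$; this is exactly what the paper does. If $z_{v_\ell} > z_{v_0}$ you are done immediately via $d'_{v_\ell} \ge -1/z_{v_\ell} > -1/z_{v_0} \ge d_{v_0}$. If $z_{v_\ell} \le z_{v_0}$, then automatically $z_{v_\ell} < z_{v_{\ell-1}}$ (since $z_{v_{\ell-1}} > z_{v_0}$ when $\ell\ge 2$, and the $\ell=1$ case is handled the same way), and your edge bound $a_{v_\ell} > w_{v_\ell} z_{v_{\ell-1}} > w_{v_\ell} z_{v_0}$ kicks in. However, the resulting inequality $d'_{v_\ell} > (w_{v_\ell} z_{v_0} - 1)/z_{v_\ell} - w_{v_\ell}$ does not by itself give $d'_{v_\ell} > -1/z_{v_0}$; you need one more sub-split. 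When $w_{v_\ell} z_{v_0} \ge 1$, algebra shows $(w_{v_\ell} z_{v_0} - 1)(1/z_{v_\ell} - 1/z_{v_0}) \ge 0$ and the bound goes through. When $w_{v_\ell} z_{v_0} < 1$, forget the edge bound and just use $d'_{v_\ell} \ge -w_{v_\ell} > -1/z_{v_0}$. This is the piece missing from your sketch of the hard case.

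Second, your ``final subtlety'' is a non-issue. Once you have $d'_{v_0} > d_{v_0}$ and $d'_{v_\ell} > d_{v_0}$ (and $d_{v_\ell} > d_{v_0}$), the multiset of $d$-values loses exactly one copy of $d_{v_0}$ and gains nothing at or below $d_{v_0}$; this strictly improves the leximin order regardless of whether $d_{v_0}$ is the global minimum. No special choice of $i$ or path re-selection is needed.
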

\begin{proof}
Let $A$ be a WEG allocation (which is necessarily wasteless). Suppose for contradiction that for some agent, say agent $1$,
we have $a_1 < \floor{q_1}$.
Since both quantities are integers, 
this implies 
$a_1 \leq \floor{q_1} - 1 = \floor{w_1z_1} - 1 \leq w_1 z_1 - 1$, and so $d_1 \leq - 1 / z_1$.

We show that it is possible to ``transfer'' one unit of utility from some agent $k$ to agent~$1$, so that the leximin vector of the $d_i$'s strictly improves---this contradicts the assumption that the original allocation $A$ is a WEG allocation.
More precisely, we show that, for some agent $k\neq 1$,
there is a different allocation~$A'$ in which
all agents except agents~$1$ and $k$ have the same utility as in $A$,
$a_k' = a_k-1$,
$a_1' = a_1 + 1$,
and $d_k' > - 1 / z_1$.
This implies that the leximin vector of $A'$ is better than the leximin vector of $A$.

Starting from agent $1$, we will renumber the remaining agents.
In particular, we follow the edges of the quota graph (picking an outgoing edge arbitrarily if there are two or more), and number the agents $2$, $3$, and so on.
\Cref{lem:quotagraph-acyclic} implies that
this walk must end at some agent $k$ with no outgoing edges, that is, $a_k \geq w_k z_k$.
Note that, by definition of the quota graph,
$|A_{j+1}\cap Z_j| > w_{j+1} z_j$ for all $j\in[k-1]$; in particular, 
$A_{j+1}\cap Z_j\neq\emptyset$.

For each $j\in [k-1]$, we transfer one item 
from the set 
$A_{j+1}\cap Z_j$ 
(which is currently allocated to agent $j+1$) to agent $j$.
This sequence of transfers 
decreases the utility of agent~$k$ by $1$,
increases the utility of agent~$1$ by $1$, 
and does not change the utility of any other agent.
We derive the required lower bound on $d_k'$ in two cases.

\underline{Case 1}: $z_k\leq z_1$.
After the transfer, we have
\begin{align*}
d_k' &= \frac{|A_k|-1}{z_k} - w_k
\\
&\geq \frac{|A_k\cap Z_{k-1}|-1}{z_k} - w_k
\\
&> \frac{w_k z_{k-1} - 1}{z_k} - w_k &&
\text{since there is an edge $(k-1) \to k$.}
\end{align*}
From Lemma~\ref{lem:quotagraph-path}, we have
$z_1 < \cdots < z_{k-1}$, so the above implies that
\begin{align*}
d_k' &> 
\frac{w_k z_1 - 1}{z_k} - w_k
\\
& = 
w_k z_1 \left(\frac{1}{z_k} - \frac{1}{z_1}\right)
- \frac{1}{z_k}
\\
& = 
w_k z_1 \left(\frac{1}{z_k} - \frac{1}{z_1}\right)
- \frac{1}{z_1}
- \left(\frac{1}{z_k} - \frac{1}{z_1}\right)
\\
& = 
\left(w_k z_1 - 1\right)\left(\frac{1}{z_k} - \frac{1}{z_1}\right)
- \frac{1}{z_1}.
\end{align*}
Since
$z_k\leq z_1$ by the assumption of Case~1, it holds that
 $1/z_k - 1/z_1\geq 0$.
If $w_k z_1 -1 \geq 0$ too, then the above inequality implies $d_k' > -1/z_1$.
Else, $w_k z_1 - 1 < 0$, and regardless of the above inequality we have 
$d_k' = |A_k'| / z_k - w_k \geq 0 - w_k > -1/z_1$.

\underline{Case 2}: $z_k >z_1$.
After the transfer, we have
\begin{align*}
d_k' &= \frac{a_k-1}{z_k} - w_k
\\
&=
\frac{a_k - w_k z_k - 1}{z_k}
\\
&\geq 
\frac{0 - 1}{z_k} && \text{$a_k\geq w_k z_k$, since there are no outgoing edges from $k$}
\\
&>
-1 / z_1 && \text{since $z_k > z_1$ by the assumption of Case~2.}
\end{align*}

In both cases, we have $d_k' > -1/z_1$, and so the transfer strictly improves the leximin vector.
This yields the desired contradiction.
\end{proof}

\begin{corollary}
\label{cor:weg_aps}
With binary valuations, every WEG allocation is APS-fair and OMMS-fair.
\end{corollary}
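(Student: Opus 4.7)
The plan is to observe that this corollary is an immediate consequence of the theorem just proved, together with the integer upper bound on the two shares that was established in the paragraph preceding that theorem.

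First, I would recall that in the binary-valuation setting, each agent $i$ values the set $M$ at $z_i$, so her proportional share is $q_i = w_i z_i$. As noted earlier in this subsection, for any agent $i$ the OMMS and APS are both bounded above by her proportional share $q_i$; since the utilities are binary, both of these shares take integer values, and hence $\operatorname{OMMS}_i \le \lfloor q_i \rfloor$ and $\operatorname{APS}_i \le \lfloor q_i \rfloor$.

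Next, I would invoke the preceding theorem, which guarantees that in every WEG allocation $A$, each agent $i$ receives $u_i(A_i) \ge \lfloor q_i \rfloor = \lfloor w_i z_i \rfloor$. Chaining the inequalities gives
\[
u_i(A_i) \;\ge\; \lfloor q_i \rfloor \;\ge\; \max\{\operatorname{APS}_i,\operatorname{OMMS}_i\},
\]
so $A$ is both APS-fair and OMMS-fair. Since there is no substantive obstacle beyond citing the two facts in the right order, the proof consists essentially of this one display, and the only thing to be careful about is making explicit the integrality argument that lets us replace $q_i$ by $\lfloor q_i \rfloor$ when bounding $\operatorname{APS}_i$ and $\operatorname{OMMS}_i$ from above.
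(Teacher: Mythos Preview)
Your proposal is correct and follows exactly the paper's approach: the paper states this corollary without a separate proof, having already noted before the theorem that the OMMS and APS are integer-valued and bounded above by the proportional share (hence by $\lfloor q_i\rfloor$), so the theorem's guarantee of $u_i(A_i)\ge\lfloor q_i\rfloor$ immediately yields both fairness conditions.
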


It may be interesting to extend these results from binary additive valuations to \emph{submodular valuations with binary marginals} (also known as \emph{matroid rank functions}); these valuations have been studied, e.g., by \citet{babaioff2021fair}, \citet{barman2021existence}, \citet{benabbou2021finding}, and \citet{MontanariScSu24}.

\end{document}